\theoremstyle{plain}
\newtheorem{theorem}{Theorem}
\newtheorem{proposition}[theorem]{Proposition}
\newtheorem{lemma}[theorem]{Lemma}
\theoremstyle{definition}
\newtheorem{definition}{Definition}
\theoremstyle{remark}
\newtheorem{remark}{Remark}
\newtheorem{example}{Example}
\numberwithin{equation}{section}
\newcommand{\eqn}[1]{\hyperref[eqn:#1]{(\ref*{eqn:#1})}}
\newcommand{\rem}[1]{\hyperref[rem:#1]{Remark~\ref*{rem:#1}}}
\newcommand{\thm}[1]{\hyperref[thm:#1]{Theorem~\ref*{thm:#1}}}
\newcommand{\cor}[1]{\hyperref[cor:#1]{Corollary~\ref*{cor:#1}}}
\newcommand{\defn}[1]{\hyperref[defn:#1]{Definition~\ref*{defn:#1}}}
\newcommand{\lem}[1]{\hyperref[lem:#1]{Lemma~\ref*{lem:#1}}}
\newcommand{\prop}[1]{\hyperref[prop:#1]{Proposition~\ref*{prop:#1}}}
\newcommand{\fig}[1]{\hyperref[fig:#1]{Figure~\ref*{fig:#1}}}
\newcommand{\tab}[1]{\hyperref[tab:#1]{Table~\ref*{tab:#1}}}
\newcommand{\algo}[1]{\hyperref[algo:#1]{Algorithm~\ref*{algo:#1}}}
\renewcommand{\sec}[1]{\hyperref[sec:#1]{Section~\ref*{sec:#1}}}
\newcommand{\append}[1]{\hyperref[append:#1]{Appendix~\ref*{append:#1}}}
\newcommand{\fac}[1]{\hyperref[fac:#1]{Fact~\ref*{fac:#1}}}
\newcommand{\lin}[1]{\hyperref[lin:#1]{Line~\ref*{lin:#1}}}
\newcommand{\fnote}[1]{\hyperref[fnote:#1]{Footnote~\ref*{fnote:#1}}}
\def\>{\rangle}
\def\<{\langle}
\def\trans{^{\top}}
\newcommand{\vect}[1]{\ensuremath{\boldsymbol{#1}}}
\newcommand{\Z}{\mathbb{Z}}
\newcommand{\R}{\mathbb{R}}
\newcommand{\C}{\mathbb{C}}
\newcommand{\Herm}
{\mathrm{Herm}}
\newcommand{\V}{\mathcal{V}}
\newcommand{\E}{\mathcal{E}}
\renewcommand{\S}{\mathcal{S}}
\renewcommand{\H}{\mathcal{H}}
\renewcommand{\d}{\mathrm{d}}
\newcommand{\vv}{\mathbf{v}}
\newcommand{\xx}{\mathbf{x}}
\newcommand{\Hebd}{H^{\mathrm{ebd}}}
\newcommand{\Hpen}{H^{\mathrm{pen}}}
\DeclareMathOperator{\poly}{poly}
\DeclareMathOperator{\spn}{span}
\DeclareMathOperator{\diag}{diag}
\DeclareMathOperator{\col}{col}
\begin{document}

\title{Expanding Hardware-Efficiently Manipulable Hilbert Space via Hamiltonian Embedding}

\author[1,3,4]{Jiaqi Leng$^*$}
\thanks{jiaqil@berkeley.edu}
\author[2,3]{Joseph Li$^*$}
\author[2,3]{Yuxiang Peng}
\author[2,3]{Xiaodi Wu}
\thanks{xiaodiwu@umd.edu}
\affil[1]{Department of Mathematics, University of Maryland, College Park, USA}
\affil[2]{Department of Computer Science, University of Maryland, College Park, USA}
\affil[3]{Joint Center for Quantum Information and Computer Science, University of Maryland}
\affil[4]{Department of Mathematics and Simons Institute for the Theory of Computing, University of California, Berkeley}

\maketitle
\begin{abstract}
\def\thefootnote{*}\footnotetext{Equal contribution.}
    Many promising quantum applications depend on the efficient quantum simulation of an exponentially large sparse Hamiltonian, a task known as sparse Hamiltonian simulation, which is fundamentally important in quantum computation. Although several theoretically appealing quantum algorithms have been proposed for this task, they typically require a black-box query model of the sparse Hamiltonian, rendering them impractical for near-term implementation on quantum devices.

    In this paper, we propose a technique named \emph{Hamiltonian embedding}. This technique simulates a desired sparse Hamiltonian by embedding it into the evolution of a larger and more structured quantum system, allowing for more efficient simulation through hardware-efficient operations. We conduct a systematic study of this new technique and demonstrate significant savings in computational resources for implementing prominent quantum applications. As a result, we can now experimentally realize quantum walks on complicated graphs (e.g., binary trees, glued-tree graphs), quantum spatial search, and the simulation of real-space Schr\"odinger equations on current trapped-ion and neutral-atom platforms. Given the fundamental role of Hamiltonian evolution in the design of quantum algorithms, our technique markedly expands the horizon of implementable quantum advantages in the NISQ era.
\end{abstract}

\newpage
\setcounter{tocdepth}{2}
\tableofcontents

\newpage
\section{Introduction}
Quantum technology has achieved significant milestones, notably demonstrating quantum advantage over classical computers~\cite{arute2019quantum,zhong2020quantum}, and the realization of error-corrected digital qubits~\cite{bluvstein2023logical}. Meanwhile, various quantum algorithms targeting practical applications have emerged, including solving large-scale linear systems~\cite{harrow2009quantum}, simulating linear and nonlinear dynamics~\cite{an2023quantum,liu2021efficient}, and finding optimal solutions for mathematical optimization problems~\cite{leng2023quantum}. 
At the core of these quantum applications lies a fundamental subroutine named \emph{sparse Hamiltonian simulation}, i.e., to compute the matrix function $f(A) = e^{-iAt}$ for a given sparse Hermitian matrix $A$.
Over the past few decades, sparse Hamiltonian simulation has been a central topic in quantum computation research, and a rich variety of quantum algorithms (e.g.,~\cite{berry2007efficient,berry2012black,berry2014exponential,berry2015simulating,low2017optimal, gilyen2019quantum}) have been developed for this task. While these algorithms achieve exponential quantum speedups over known classical ones, their implementation requires executing deep quantum circuits. 
The absence of fault tolerance in near-term devices hinders our ability to utilize these advanced quantum algorithms for solving classically intractable computational tasks in science and business~\cite{beverland2022assessing}.

Delving into the low-level implementation of quantum simulation algorithms reveals two prominent factors that contribute to the depth of quantum circuits.
First, almost all sparse Hamiltonian simulation algorithms require a sophisticated quantum input model for the sparse matrix $A$, such as quantum oracles for sparse matrices~\cite{aharonov2003adiabatic,berry2007efficient,childs2011simulating}, QRAM~\cite{giovannetti2008quantum}, or block-encodings~\cite{low2018hamiltonian,gilyen2019quantum}. Implementing these input models usually introduces a significant gate count, even in relatively simple cases~\cite{camps2024explicit}.
Second, in mapping a quantum algorithm to executable quantum gates, typical circuit compilation routines begin with representing the algorithm using standardized gates, such as CNOT and Toffoli. The gate sequence is further decomposed into native 1- and 2-qubit gates~\cite{Qiskit, sorensen1999quantum, caldwell2018parametrically}. Failing to fully leverage the native programmability of the target quantum hardware, this compilation workflow often leads to significant practical overheads.

In this paper, we would like to explore an alternative approach to sparse Hamiltonian simulation that exploits both the sparsity structure of the input data and the resource efficiency within the underlying quantum hardware.
To this end, we first introduce a Hamiltonian-based model of quantum algorithms that encompass all \emph{hardware-efficient} operations in a specific quantum computer. Unlike quantum circuits, which act as a mathematical model agnostic to the underlying hardware, our new Hamiltonian model integrates low-level machine information to minimize the cost of end-to-end implementation.
Based on this new model, we have developed a technique named \emph{Hamiltonian embedding}, which enables us to simulate a sparse Hamiltonian as part of a larger quantum evolution that can be efficiently simulated on quantum hardware. With this technique, we can build input models directly with quantum Hamiltonians and perform sparse Hamiltonian simulations with extremely limited quantum resources.

Our Hamiltonian model of quantum computers is motivated by the low-level control logic of quantum hardware.
In general, physically realizable quantum platforms, such as transmon qubits~\cite{wallraff2004strong,devoret2013superconducting}, trapped ions~\cite{blatt2012quantum,monroe2021programmable}, neutral atoms~\cite{Henriet2020quantumcomputing}, are described as quantum systems whose evolution is governed by a quantum Hamiltonian involving 1- and 2-body interactions. 
For example, the emulation of analog quantum simulators, such as quantum annealers~\cite{johnson2011quantum,harris2010experimental} and Rydberg atom arrays~\cite{ebadi2021quantum,ebadi2022quantum}, is governed by a system Hamiltonian composed of 1- and 2-local Pauli operators.
In the case of digital quantum computers, their native continuously parameterized 1- and 2-qubit gates are typically specified by effective generating Hamiltonians.\footnote{For example, the arbitrary angle M\o lmer-S\o renson gate $\mathrm{MS}(\theta)=e^{-i(\theta/2) XX}$ in trapped-ion devices are realized by engineering the effective Hamiltonian $XX$ in the physical device for a variable time~\cite{molmer1999multiparticle,solano1999deterministic}.}
We propose to use the following quantum Hamiltonian to model hardware-efficient quantum operations: 
\begin{align}\label{eqn:ham-model}
    H(t) = \sum_j \alpha_j(t) H_j + \sum_{j,k} \beta_{j,k}(t) H_{j,k},
\end{align}
where $H_j$ (or $H_{j,k}$) is a Hamiltonian on qubit $j$ (or $j$ and $k$) that corresponds to a native operation in specific hardware, $\alpha_j(t)$ and $\beta_{j, k}(t)$ are time-dependent functions.
This Hamiltonian model can represent any implementable quantum circuits through piecewise-constant $\alpha_j$ and $\beta_{j, k}$.\footnote{An implementable quantum circuit is composed of a sequence of native 1- and 2-qubit gates. Each execution of a native gate can be represented as a rectangular control signal in the time-dependent functions.}
On the other hand, any unitary operations generated by the Hamiltonian $H(t)$ with smooth time dependence can be readily implemented by a sequence of native gates via product formulas~\cite{childs2021theory}.

To illustrate the idea of Hamiltonian embedding, we consider a simple example. Suppose that we have two (time-independent) Hamiltonian operators $A$ and $H$, and $H$ contains $A$ as a diagonal block such that $H = \diag(A, *)$, where $*$ represents another (possibly different) Hamiltonian block that evolves independently within its own subspace and is therefore irrelevant to the simulation of $A$. Then, the time evolution generated by $H$ is also block-diagonal,
\begin{align}
    e^{-iHt} = \begin{bmatrix}
        e^{-iAt} & 0\\
        0 & *
    \end{bmatrix},
\end{align}
where the upper left block is the time evolution of $A$. In other words, we can simulate a target Hamiltonian $A$ by embedding it into a larger block-diagonal Hamiltonian $H$.
Generalizing this intuition to ``approximately'' block-diagonal Hamiltonians, we have developed a formalism of Hamiltonian embedding with rigorous error analysis, see \thm{main}. Now, if the parameters in the Hamiltonian model $H(t)$ are programmed in a way such that $H(t)$ embeds $A$, we end up with a \emph{de facto} input model that allows efficient quantum simulation.\footnote{An $n$-qubit Hamiltonian model has at most $O(n^2)$ 1- and 2-local component Hamiltonians, where each component Hamiltonian can be natively simulated in the associated quantum hardware. Then, we can use product formulas to simulate the time evolution generated by $H(t)$ with $\poly(n)$ quantum resources.}

In theory, a Hamiltonian embedding can be constructed for any sparse matrix, as detailed in \sec{sparse-ham}. 
For a general $n$-dimensional sparse matrix without specific structures, an embedding may require $n$ qubits and $O(n^2)$ local interaction terms. In this case, Hamiltonian embedding could still offer polynomial speedups, although the actual advantage depends heavily on the problem class and requires case-by-case investigation.
Fortunately, we managed to identify various scenarios, including high-dimensional graphs created through graph product operations and specific linear differential operators, where the Hamiltonian embedding can be constructed using quantum resources that scale logarithmically in the input size $n$. This leads to exponential quantum speedups using our methodology.
When certain algebraic structures, such as addition, multiplication, composition, and tensor product, emerge in a sparse Hamiltonian $A$, we can decompose its Hamiltonian embedding into a small number of basic building blocks (see \thm{rules}). We then provide six embedding schemes that can be employed to construct elementary building blocks of the full Hamiltonian embedding. These embedding schemes work for matrices with particular sparsity patterns (e.g., band, banded circulant, $s$-sparse), as detailed in~\sec{sparse-ham}. It's important to note that many of these embedding schemes are already documented in existing literature, albeit under different names. Our contribution lies in developing a unifying formalism, which facilitates the systematic application of these schemes specifically for simulating large sparse Hamiltonian.

A key feature of Hamiltonian embedding is that it directly harnesses hardware-efficient operations to build the input model, which significantly reduces the quantum resources needed in Hamiltonian simulation tasks and quantum algorithms based on these. This technique enables us to implement several experiments on existing open-access cloud-based quantum computers, showcasing interesting quantum applications (see \sec{exp}). In contrast, Hamiltonian simulation methods using traditional query models (such as quantum oracles or block-encodings) are impractical on these quantum computers. Even a single implementation of the query model would deplete the available quantum resources~\cite{camps2024explicit}. A detailed discussion on different quantum input models is provided in \sec{review-input-model}.
It is worth noting that our Hamiltonian embedding technique can also be adapted for analog quantum simulators,\footnote{An experimental demonstration with Rydberg atom arrays is provided in \sec{sim-real-space-dynamics}.} unlike existing sparse Hamiltonian simulation methods which are tailored exclusively for gate-based quantum computers. This adaptability broadens the possibilities for analog quantum computation.

In the experiments, we also provide comprehensive resource analyses (in terms of native gate count) to better understand the empirical scaling of Hamiltonian embeddings regarding various problem sizes. For comparison, we also estimate the quantum resources needed to perform the same Hamiltonian simulation tasks using the basic Pauli access model, which directly represents sparse Hamiltonians as a sum of Pauli operators, without embedding (see \sec{review-input-model} for details). While both Hamiltonian embedding and the Pauli access model use the Hamiltonian as inputs, the latter is unaware of the machine programmability and it utilizes the full Hilbert space to represent the sparse data.\footnote{We remark that obtaining the Pauli access model for sparse matrices could require exponential classical pre-processing time. In the resource analysis, we assume the Pauli access model of a given $A$ is already known.} Due to these differences, we use the Pauli access model as a baseline to investigate the resource efficiency of our technique.
It turns out that there always exists at least one Hamiltonian embedding that is more resource-efficient than the baseline in both asymptotic (i.e., scaling in system size) and non-asymptotic (i.e., actual gate count) metrics, as detailed in the panel B of Figure \ref{fig:fig_2}, \ref{fig:fig_3}, and \ref{fig:fig_4}.
Our findings indicate that Hamiltonian embedding, despite increasing the size of the global Hilbert space, provides a means to expand the hardware-efficiently manipulable Hilbert space. This leads to quantum simulations that are more resource-efficient compared to traditional hardware-agnostic approaches.

\paragraph{Contribution.}
Our main contributions are threefold.
First, we propose to use the quantum Hamiltonian to model native operations in a quantum computer. This new model enables efficient Hamiltonian simulations without going through a hardware-agnostic compilation process.
Second, we develop a new technique named Hamiltonian embedding for hardware-efficient sparse Hamiltonian simulation. We provide a general framework with rigorous error analysis and a flexible construction approach with concrete instances. Given some special structures in the sparse Hamiltonian simulation tasks, Hamiltonian embedding allows us to achieve up to exponential quantum speedups on a small fault-tolerant quantum computer.
Last but not least, we showcase our technique to realize some interesting quantum applications that are almost infeasible with traditional Hamiltonian simulation methods. Comprehensive resource analysis shows that our methodology demonstrates both empirical and asymptotic advantages compared to the baseline method.

\paragraph{Code Availability.} 
The source codes of the real-machine experiments and resource analysis are available at \url{https://github.com/jiaqileng/hamiltonian-embedding}.

\subsection{Review: quantum input models}\label{sec:review-input-model}

In this subsection, we briefly review three commonly used quantum input models for Hamiltonian simulation and numerical linear algebra.

\paragraph{Sparse-input oracle.}
Let $A$ be a matrix that is $s$-sparse, i.e., every row or column of $A$ has at most $s$ nonzero elements. The sparse-input oracles of $A$ refer to a procedure (implemented by quantum circuits) that can perform the following mappings:
$$O_r \colon \ket{i}\ket{k}\to \ket{i}\ket{r_{ik}},\quad O_c \colon \ket{\ell}\ket{j}\to \ket{c_{\ell j}}\ket{j},$$
$$O_A\colon \ket{i}\ket{j}\ket{0}^{\otimes b} \to \ket{i}\ket{j}\ket{a_{ij}},$$
where $r_{ik}$ is the index for the $k$-th nonzero entry of the $i$-th row of $A$, $c_{\ell j}$ is the index for the $\ell$-th nonzero entry of the $j$-th column of $A$, and $a_{ij}$ is a $b$-bit binary description of the $(i,j)$-matrix element of $A$.
This black-box query model was first considered by Aharonov and Ta-Shma~\cite{aharonov2003adiabatic}, then has been widely assumed in quantum algorithms for Hamiltonian simulation~\cite{berry2007efficient,berry2012black,childs2011simulating,low2017optimal} and numerical linear algebra~\cite{childs2017quantum,gilyen2019quantum}. However, even for sparse matrices with regular sparsity patterns and a small number of nonzero elements, implementing sparse-input oracles by efficient quantum circuits is a highly nontrivial task~\cite{camps2024explicit}.

\paragraph{Block-encoding.}
A unitary $U_{A}$ is a block-encoding of a matrix $A$ if 
$$\left\|A - \alpha (\bra{0}^{\otimes a} \otimes I) U_A (\ket{0}^{\otimes a} \otimes I)\right\| \leq \epsilon,$$
where $\alpha$ is a normalization factor, $a$ is the number of ancilla qubits, and $\epsilon$ is an error parameter.
Block-encoding arises naturally as an input model for algorithms based on quantum signal processing (QSP)~\cite{low2017optimal} and the quantum singular value transformation (QSVT)~\cite{gilyen2019quantum}.
However, it is not possible to construct circuits for block-encoding arbitrary sparse matrices with space and time complexity both logarithmic in the matrix dimension~\cite{zhang2024circuit}.
Efficient circuit constructions for block-encoding have only been studied for certain structured matrices~\cite{camps2024explicit, sunderhauf2024block}, but these constructions require sequences of multi-qubit controlled gates which are impractical for implementation on current devices.
While the first steps towards implementing QSP have been demonstrated for a small-scale problem~\cite{kikuchi2023realization}, due to the high cost of implementing a block-encoding, it is generally expected that the scalable implementation of QSP-based algorithms will only be possible in the deep fault-tolerant regime.

\paragraph{Pauli access model.}
The Pauli access model of an $n$-qubit Hamiltonian $A$ assumes that $A$ can be specified as a linear combination of Pauli operators $Q_s$,
\begin{align}\label{eqn:pauli-sum}
    A = \sum_s a_s Q_s,
\end{align}
where $\{Q_s\}$ is the set of all $n$-qubit Pauli operators, and the coefficients $a_s$ can be computed by $a_s = \frac{1}{2^n}\Tr[AQ_s]$. In the literature, the representation \eqn{pauli-sum} is sometimes referred to as the standard binary encoding of $A$, e.g., see~\cite{sawaya2020resource,sawaya2022mat2qubit}.
This input model has been considered in some variational quantum algorithms~\cite{bravo2023variational,xu2021variational} and randomized algorithms for linear algebra~\cite{wan2022randomized,wang2024qubit}.
Unlike sparse-input oracles and block-encodings, the Pauli access model does not require a coherent quantum circuit implementation. However, for general sparse Hamiltonians, computing its Pauli operator decomposition is not scalable because the coefficient element $a_s$ in \eqn{pauli-sum} requires evaluation of the trace $\Tr[AQ_s]$, which could run for an exponentially long time on classical computers.
Also, evolving a Pauli operator that involves more than 2-site interactions (e.g., $XYX$) is expensive on gate-based quantum computers since the resulting unitary operators need to be decomposed to native 1- and 2-qubit gates.
As a comparison, the construction of Hamiltonian embedding is scalable since it does not require computing the trace of large matrices. When the matrix $A$ has certain sparsity patterns (e.g., band, banded circulant, etc.), the resulting Hamiltonian embedding only involves $2$-site interactions and thus is readily implementable on physical hardware without further decomposition.

\subsection{Relevant work}
\paragraph{Hamming encoding in Quantum Hamiltonian Descent.}
Recently, Leng et al.~\cite{leng2023quantum} proposed a quantum optimization algorithm named Quantum Hamiltonian Descent (QHD). QHD addresses continuous optimization problems by simulating quantum evolution. In addition to providing a theoretical analysis of the quantum algorithm, the authors also developed an analog implementation of QHD, which they termed \emph{Hamming encoding}. The Hamming encoding method can be seen as the initial instance of Hamiltonian embedding, in which the quantum algorithm is directly executed with a quantum Hamiltonian, rather than any existing quantum input models. Our current work expands this concept into a formalized approach, encompassing more constructions and a wider range of applications.

\paragraph{Encodings of quantum operators.}
Specific constructions of Hamiltonian embedding similar to our own have been extensively studied in different contexts.
The facilitation or antiblockade phenomenon has been investigated via quantum dynamics confined to certain subspaces in arrays of Rydberg atoms~\cite{marcuzzi2017facilitation, ostmann2019synthetic, liu2022localization}.
The encoding of $d$-level quantum systems (i.e., qudits) within a multi-qubit Hamiltonian has also been explored for various encodings~\cite{sawaya2020resource, sawaya2020connectivity, kyaw2021quantum, kyaw2023variational}, such as the standard binary, Gray, and one-hot codes. 
A software package named \verb|mat2qubit|~\cite{sawaya2022mat2qubit} automates the compilation of these encoding schemes, and in the case of the one-hot code, yields encoded operators identical to our (penalty-free) construction. 
Compared to our construction, these schemes generally lead to more complicated encoding operators because they completely disallow leakage to the orthogonal complement.
The $XY$ model has long been observed to give rise to quantum walk via the one-hot code \cite{christandl2005perfect}, and restrictions to higher excitation subspaces have also been studied for the task of permuting a quantum state \cite{albanese2004mirror}.
In the context of quantum adiabatic optimization~\cite{chancellor2019domain, hadfield2019quantum}, the unary and one-hot codes have been studied to encode combinatorial optimization problems in the ground-energy subspace of certain penalty operators. While these optimization works do not consider the task of Hamiltonian simulation, a similar penalty operator is utilized in our constructions of Hamiltonian embeddings.
From this perspective, Hamiltonian embedding can be viewed as a unifying framework that encompasses several well-studied encodings both with and without penalty.

\paragraph{Explicit construction of block-encodings for sparse matrices.}
The framework of quantum signal processing (QSP) has been shown to give an optimal algorithm for sparse Hamiltonian simulation~\cite{low2017optimal}.
In general, QSP-based algorithms require a block-encoding oracle of the Hamiltonian which is nontrivial to construct in general.
Explicit circuit constructions of block-encodings have only been obtained for specific matrices~\cite{gilyen2019quantum, camps2024explicit, sunderhauf2024block}.
In particular, \cite{camps2024explicit} constructs circuits for block-encoding tri-diagonal and banded circulant matrices, for which we also consider Hamiltonian embeddings in this paper.
Nevertheless, the circuits for block-encoding are constructed using multi-qubit controlled gates which require further decomposition into elementary one- and two-qubit gates.
While the overall gate complexities are polylogarithmic in the matrix dimension (and thus considered efficient), the actual gate counts required even for a single oracle call are prohibitively expensive in practice. 
For instance, the circuit for block-encoding an $8 \times 8$ banded circulant matrix requires roughly 171 one-qubit gates and 114 two-qubit gates when compiling to Pauli-$X$, Pauli-$Y$, and $XX$ rotations.
On the other hand, Hamiltonian embedding provides an alternative approach that avoids such oracle constructions and enables near-term implementation on noisy quantum computers.

\section{Hamiltonian embedding}\label{sec:ham-embedding}
\subsection{General formulation}
Let $A$ be an $n$-dimensional Hermitian operator and $H$ be a $q$-qubit operator.

\begin{definition}[Hamiltonian embedding]\label{defn:embedding}
     Let $\eta, \epsilon > 0$ be positive scalars. We say $H$ is a $(q, \eta, \epsilon)$-\textbf{embedding} of $A$ if there exists a subspace $\S \subset \C^{2^q}$ and a unitary operator $U$ such that
    \begin{enumerate}
        \item $P_\S(U^\dagger H U)P_{\S^\perp} = 0$, i.e., $U^\dagger H U$ is block-diagonal in $\S$ and $\S^\perp$,
        \item $\|I - U\| \le \eta$, where $I$ is the identity operator in $\C^{2^q}$,
        \item $\|(U^\dagger H U)|_{\S} - A\| \le \epsilon$, where $(\cdot)|_{\S}\coloneqq P_\S (\cdot)P_\S$.
    \end{enumerate} 
We call the subspace $\S$ as the \textbf{embedding subspace}.
\end{definition}

By the definition, the operator $H$ is approximately block-diagonal (up to a minor basis change $U$) with respect to the embedding subspace $\S$ and its orthogonal complement $\S^\perp$. The target Hamiltonian $A$ is embedded in the upper left block of $U^\dagger H U$ up to an additive error $\epsilon$. 
Clearly, a $q$-qubit Hamiltonian $H$ is a $(q, 0, 0)$-embedding of itself.
For sufficiently small $\eta$ and $\epsilon$, we show that the embedding Hamiltonian $H$ simulates the time evolution generated by the target Hamiltonian $A$. The proof is given in \append{proof-main}.

\begin{restatable}[Hamiltonian simulation with Hamiltonian embedding]{theorem}{MainThm}\label{thm:main}
    Suppose that $H$ is a $(q, \eta, \epsilon)$-embedding of $A$. Then, for a fixed evolution time $t \ge 0$, we have that
    \begin{align}\label{eqn:error_bound}
        \left\|\left(e^{-iH t}\right)\Big|_{\S} - e^{-iAt}\right\| \le (2\eta\|H\|+\epsilon)t.
    \end{align}
\end{restatable}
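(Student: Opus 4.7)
The plan is to reduce the claim to two familiar estimates: (i) an operator exponential perturbation bound of the form $\|e^{-iBt}-e^{-iCt}\|\le \|B-C\|\,t$ (which follows from the Duhamel identity $e^{-iBt}-e^{-iCt}=-i\int_0^t e^{-iB(t-s)}(B-C)e^{-iCs}\,ds$ and the triangle inequality), and (ii) the fact that the exponential of a block-diagonal Hamiltonian is itself block-diagonal on the same decomposition. Accordingly, I introduce the rotated Hamiltonian $\widetilde{H}\coloneqq U^\dagger H U$ as the intermediate object and aim to show $\bigl\|(e^{-iHt})|_{\S} - e^{-iAt}\bigr\|$ is small by comparing to $(e^{-i\widetilde{H}t})|_{\S}$.

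First I would handle the basis-change error. Because $e^{-iHt}=U e^{-i\widetilde{H}t} U^\dagger$, I have
\begin{equation*}
\bigl\|(e^{-iHt})|_{\S}-(e^{-i\widetilde{H}t})|_{\S}\bigr\|\le\bigl\|e^{-iHt}-e^{-i\widetilde{H}t}\bigr\|\le \|H-\widetilde{H}\|\,t,
\end{equation*}
using estimate (i). Then I bound $\|H-\widetilde{H}\|=\|H-U^\dagger HU\|$ via the algebraic identity $H-U^\dagger HU=(I-U^\dagger)H+U^\dagger H(I-U)$, yielding $\|H-\widetilde{H}\|\le 2\eta\|H\|$ by Definition~\ref{defn:embedding}(2) and $\|U^\dagger\|=1$. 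This accounts for the $2\eta\|H\|\,t$ term in \eqn{error_bound}.

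Next I would handle the truncation error on the embedding subspace. By Definition~\ref{defn:embedding}(1), $\widetilde{H}$ is block-diagonal with respect to $\S\oplus\S^\perp$; hence $e^{-i\widetilde{H}t}$ is block-diagonal on the same decomposition, and
\begin{equation*}
(e^{-i\widetilde{H}t})|_{\S}=P_{\S}\,e^{-i\widetilde{H}t}\,P_{\S}=e^{-i\widetilde{H}|_{\S}\,t}.
\end{equation*}
Applying (i) again together with Definition~\ref{defn:embedding}(3) gives $\bigl\|e^{-i\widetilde{H}|_{\S}t}-e^{-iAt}\bigr\|\le \|\widetilde{H}|_{\S}-A\|\,t\le \epsilon t$, which contributes the $\epsilon t$ term. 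A triangle inequality combines the two estimates to yield \eqn{error_bound}.

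The only real obstacle is the first step: one must be careful that the block-diagonal property is used for $\widetilde{H}$, not $H$, so the decomposition into a basis-change piece and a truncation piece is essential; once $\widetilde{H}$ is introduced, each piece is routine. A minor care point is the identification of $A$ (an operator on $\C^n$) with $\widetilde{H}|_{\S}$ (an operator on $\S\subset\C^{2^q}$), but this is exactly what Definition~\ref{defn:embedding} presupposes, so no extra work is needed.
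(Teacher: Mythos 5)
Your proposal is correct and follows essentially the same route as the paper: both insert the rotated Hamiltonian $\widetilde{H}=U^\dagger HU$ (equivalently $U^\dagger e^{-iHt}U$) as the intermediate term, bound the basis-change piece by $\|H-\widetilde H\|t\le 2\eta\|H\|t$ via a Duhamel/variation-of-parameters estimate, and use block-diagonality of $\widetilde H$ plus $\|\widetilde H|_{\S}-A\|\le\epsilon$ for the remaining $\epsilon t$ term. The only cosmetic difference is that the paper derives the first bound through an explicit integral representation of $e^{-iHt}-U^\dagger e^{-iHt}U$ and a commutator estimate $\|[H,U]\|\le 2\eta\|H\|$, whereas you invoke the standard exponential perturbation bound together with the equivalent identity $H-U^\dagger HU=(I-U^\dagger)H+U^\dagger H(I-U)$.
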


In addition, we find that a complicated Hamiltonian embedding can be built from simpler ones through the four composing rules, including addition, multiplication, composition, and tensor product. In what follows, we give an informal version of these rules. See \append{building} for a formal restatement and proof.

\begin{theorem}[Rules for building Hamiltonian embeddings; informal]\label{thm:rules}~
\begin{enumerate}
    \item (Addition) For $j = 1,2$, let $H_j$ be a $(q,\eta,\epsilon_j)$-embedding of $A_j$, then $H_1+H_2$ is a $(q,\eta,\epsilon_1+\epsilon_2)$-embedding of $A_1+A_2$.
    \item (Multiplication) Let $H$ be a $(q,\eta,\epsilon_j)$-embedding of $A$, then for a real scalar $\alpha$, $\alpha H$ is a $(q,\eta,|\alpha|\epsilon)$-embedding of $\alpha A$.
    \item (Composition) For $j = 1,2$, let $H_j$ be a $(q_j,\eta_j,\epsilon_j)$-embedding of $A_j$, then $H_1\otimes I+I\otimes H_2$ is a $(q_1+q_2,\eta_1+\eta_2,\epsilon_1+\epsilon_2)$-embedding of $A_1\otimes I+I\otimes A_2$.
    \item (Tensor product) For $j = 1,2$, let $H_j$ be a $(q_j,\eta_j,\epsilon_j)$-embedding of $A_j$, then $H_1\otimes H_2$ is a $(q_1+q_2,\eta_1+\eta_2,\|A_1\|\epsilon_2+\|A_2\|\epsilon_1+\epsilon_1\epsilon_2)$-embedding of $A_1\otimes A_2$.
\end{enumerate}
\end{theorem}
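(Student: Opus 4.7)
The plan is to verify each of the four rules by exhibiting an embedding subspace $\mathcal{S}$ and a unitary $U$ that satisfy the three conditions of \defn{embedding}, using the data supplied for each $H_j$. The formal version must presumably require $H_1$ and $H_2$ in the addition and multiplication rules to share the same $U$ and $\mathcal{S}$ (otherwise there is no canonical way to compare their restrictions), while the composition and tensor-product rules let us choose $U = U_1\otimes U_2$ and $\mathcal{S}=\mathcal{S}_1\otimes\mathcal{S}_2$. Throughout I will use the natural identification of the restricted block with an operator on $\mathcal{S}$ via the isometry embedding $\mathcal{S}\hookrightarrow\C^{2^q}$.

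For the \textbf{addition} rule, conditions 1 and 2 are immediate because $U$ and $\mathcal{S}$ are inherited unchanged, and the off-diagonal block of $U^\dagger(H_1+H_2)U$ is the sum of the off-diagonal blocks of $U^\dagger H_jU$, each of which vanishes. Condition 3 then follows from the triangle inequality applied to $(U^\dagger H_1U)|_\S + (U^\dagger H_2U)|_\S - (A_1+A_2)$. The \textbf{multiplication} rule is essentially the same: blocks and $U$ are unchanged, and the error scales exactly by $|\alpha|$ because the operator norm is homogeneous.

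For \textbf{composition}, I would first observe that $(U_1\otimes U_2)^\dagger(H_1\otimes I+I\otimes H_2)(U_1\otimes U_2) = (U_1^\dagger H_1U_1)\otimes I+I\otimes(U_2^\dagger H_2U_2)$. Each summand is block-diagonal with respect to $\mathcal{S}_j$ in its own factor, which forces the whole operator to be block-diagonal with respect to $\mathcal{S}_1\otimes\mathcal{S}_2$ and its complement — this follows cleanly from writing $P_{(\mathcal{S}_1\otimes\mathcal{S}_2)^\perp}=I-P_{\mathcal{S}_1}\otimes P_{\mathcal{S}_2}$ and using the individual block-diagonality hypotheses to collapse the cross terms. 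For condition 2, I would use the telescoping identity
\begin{equation*}
I-U_1\otimes U_2 = (I-U_1)\otimes I + U_1\otimes(I-U_2),
\end{equation*}
together with $\|U_1\|=1$, to bound the norm by $\eta_1+\eta_2$. Condition 3 again reduces to a triangle inequality once one checks that the restriction of $(U_1^\dagger H_1U_1)\otimes I$ to $\mathcal{S}_1\otimes\mathcal{S}_2$ equals $(U_1^\dagger H_1U_1)|_{\mathcal{S}_1}\otimes I_{\mathcal{S}_2}$, which follows from $P_{\mathcal{S}_2}I P_{\mathcal{S}_2}$ acting as the identity on $\mathcal{S}_2$.

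The \textbf{tensor product} rule is the one I expect to be slightly more delicate, and it is the main obstacle I would budget time for. The block-diagonality and the bound on $\|I-U_1\otimes U_2\|$ go through as in the composition case. The error estimate, however, requires comparing $B_1\otimes B_2$ with $A_1\otimes A_2$, where $B_j := (U_j^\dagger H_jU_j)|_{\mathcal{S}_j}$ satisfies $\|B_j-A_j\|\le\epsilon_j$. I would use the splitting
\begin{equation*}
B_1\otimes B_2-A_1\otimes A_2 = (B_1-A_1)\otimes B_2 + A_1\otimes(B_2-A_2),
\end{equation*}
submultiplicativity of the operator norm under tensor products, and the crude bound $\|B_2\|\le\|A_2\|+\epsilon_2$ to obtain exactly $\|A_1\|\epsilon_2+\|A_2\|\epsilon_1+\epsilon_1\epsilon_2$, matching the stated constant. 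With all three conditions verified for each rule, the four embeddings are constructed and the theorem is complete.
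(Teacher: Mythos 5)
Your proposal is correct and follows essentially the same route as the paper: identical choices of $U=U_1\otimes U_2$ and $\S=\S_1\otimes\S_2$, the same telescoping identity for $\|I-U_1\otimes U_2\|$, and triangle-inequality arguments for the restricted errors, with your guess that addition/multiplication require a shared $U$ and $\S$ matching the paper's formal hypotheses. The only (immaterial) difference is the mirrored splitting $(B_1-A_1)\otimes B_2 + A_1\otimes(B_2-A_2)$ in the tensor-product error bound, where the paper groups the terms the other way; both yield $\|A_1\|\epsilon_2+\|A_2\|\epsilon_1+\epsilon_1\epsilon_2$.
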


\subsection{Perturbative Hamiltonian embedding}
As we have seen, a target Hamiltonian $A$ can be simulated with a block-diagonal embedding Hamiltonian. However, identifying such a corresponding block-diagonal embedding could be non-trivial for many sparse Hamiltonians $A$ arising from real-world applications. Here we give an explicit construction of Hamiltonian embedding based on perturbation theory.

First, we assume there is a $q$-qubit quantum operator $Q$ and a subspace $\S$ such that $Q|_{\S} = A$. We express the operator $Q$ in a block-matrix form by projecting it down to the subspaces $\S$ and $\S^\perp$, respectively,
\begin{align}
    Q = \begin{bmatrix}
        A & R^\dagger\\
        R & B
    \end{bmatrix},
\end{align}
where $A = P_{\S} Q P_{\S}$, $R = P_{\S^\perp}Q P_{\S}$, $B= P_{\S^\perp} Q P_{\S^\perp}$. Since the off-diagonal block $R$ is not necessarily zero, we can not directly simulate $A$ by evolving the Hamiltonian $Q$ because in general $P_\S e^{-iQt}P_\S \neq e^{-iAt}$. This means that a quantum state initialized in the subspace $\S$ could be driven out of this subspace in the course of quantum evolution.

To suppress the leakage from the embedding subspace $\S$, we introduce another $q$-qubit operator $\Hpen$ as the \emph{penalty Hamiltonian}. We assume $\Hpen$ has an $n$-fold degenerate ground-energy subspace $\S$ with the ground energy being zero, i.e., the least eigenvalue of $\Hpen$ is zero. For a fixed positive number $g > 0$, we define the following quantum operator,
\begin{align}\label{eqn:perturbative-embedding}
    H = g\Hpen + Q.
\end{align}
Similarly, this new operator $H$ can be expressed in a block-matrix form (we denote $C = P_{\S^\perp}\Hpen P_{\S^\perp}$ and $G=B+gC$):
\begin{align}\label{eqn:decompo}
    H = \begin{bmatrix}
        A & R^\dagger\\
        R & B+gC
    \end{bmatrix} = \underbrace{\begin{bmatrix}
        A & 0\\0 & G\end{bmatrix}}_{\text{diagonal}} + \underbrace{\begin{bmatrix}
        0 & R^\dagger\\R & 0\end{bmatrix}}_{\text{off-diagonal}},
\end{align}
The Hamiltonian $H$ decomposes into an off-diagonal part and a diagonal part. For sufficiently large $g > 0$, there is a gap between the spectrum of $A$ and that of $G$ and the width of this gap is proportional to $g$, i.e., 
\begin{align}
    \Delta = \lambda_{\min}(G) - \lambda_{\max}(A) > 0,\quad \Delta \sim g.
\end{align}
When $g$ is large enough such that $\|R\|/g \ll 1$, the off-diagonal part in \eqn{decompo} can be treated as a perturbative term. In this case, we prove that $H$ is an embedding of $A$, as shown in the following theorem. A formal statement and the proof are provided in \append{perturbation}.

\begin{theorem}[Perturbative Hamiltonian embedding; informal version of \thm{dir-rotation}]\label{thm:perturbative-thm}
    Let the operators $H$, $A$, and $R$ be the same as above. For sufficiently large $g > 0$, the Hamiltonian $H$ is a $(q,\eta,\epsilon)$-embedding of $A$, where $\eta \sim \|R\|/g$, $\epsilon \sim \|R\|^2/g$.
\end{theorem}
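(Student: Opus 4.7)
The plan is to construct the unitary $U$ required by \defn{embedding} as a block-diagonalizing rotation that turns $\S$ into an exact invariant subspace of $H$, using the large penalty $g$ to ensure that the off-diagonal perturbation $V = \begin{bmatrix} 0 & R^\dagger \\ R & 0 \end{bmatrix}$ in \eqn{decompo} is small compared to the spectral gap opened by $g\Hpen$. Concretely, I would realize $U$ either as the direct rotation between $\S$ and the perturbed invariant subspace, or equivalently via a Schrieffer--Wolff transformation $U = e^{S}$ with $S$ purely block off-diagonal.

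First, I would establish a linearly growing spectral separation. Since $\Hpen \ge 0$ has $\S$ as its zero-eigenspace, its restriction $C = P_{\S^\perp}\Hpen P_{\S^\perp}$ is bounded below by the smallest positive eigenvalue $\gamma$ of $\Hpen$, so $G = B + gC$ satisfies $\lambda_{\min}(G) \ge g\gamma - \|B\|$ while $\lambda_{\max}(A) \le \|A\|$. Thus for $g$ large, the gap $\Delta \ge g\gamma - \|A\| - \|B\|$ is $\Theta(g)$ and, in particular, exceeds $2\|V\| = 2\|R\|$, so the spectra of the two diagonal blocks remain disjoint when the perturbation $V$ is turned on.

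Second, I would apply the Davis--Kahan $\sin\Theta$ theorem to $H = H_0 + V$ with $H_0 = \diag(A, G)$. Letting $\tilde P$ denote the spectral projector of $H$ onto its $n$ lowest eigenvalues, this gives $\|\tilde P - P_\S\| \le \|V\|/\Delta = O(\|R\|/g)$. I would then take $U$ to be the direct rotation from $\S$ to $\tilde\S := \tilde P\,\C^{2^q}$, so that $U P_\S U^\dagger = \tilde P$ and $\|I - U\| \le \sqrt{2}\,\|\tilde P - P_\S\| = O(1/g)$. The first two conditions of \defn{embedding} then follow: $U^\dagger H U$ commutes with $P_\S$ (because $H$ commutes with $\tilde P$), so $U^\dagger H U$ is block-diagonal, and $\eta = O(1/g)$.

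Third, for the bound on $\epsilon$, I would not expand $U^\dagger H U$ naively, because $\|I - U\|\cdot\|H\|$ is $\Theta(1)$ (since $\|H\| = \Theta(g)$) and would destroy the estimate. Instead I would use the Schrieffer--Wolff viewpoint: choose $S$ solving the Sylvester equation $[H_0, S] = V$, which has $\|S\| = O(\|R\|/g)$ by the spectral gap $\Delta$, and expand $e^{-S} H e^{S} = H_0 + \tfrac{1}{2}[V, S] + O(\|V\|\|S\|^2)$ via Baker--Campbell--Hausdorff. The linear-in-$V$ correction $[H_0, S]$ cancels $V$, leaving a block-diagonal correction of norm $O(\|V\|\|S\|) = O(\|R\|^2/g)$, which on $\S$ gives $\|(U^\dagger H U)|_\S - A\| = O(\|R\|^2/g) \le O(\|R\|/g)$ as claimed. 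The direct rotation from the second step agrees with $e^S$ to leading order, so the same estimate transfers.

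The main obstacle is exactly this third step: the rotation $U$ must be shown to exploit the \emph{specific} cancellation between $V$ and $[H_0, S]$ rather than being treated as a generic near-identity unitary. Tracking this cancellation, verifying that the direct-rotation $U$ (built from the abstract spectral projector $\tilde P$) matches the Schrieffer--Wolff generator well enough to inherit the commutator bound, and carefully accounting for the constants depending on $\gamma$, $\|A\|$, $\|B\|$, and $\|R\|$ is where the formal argument deferred to \append{perturbation} does its real work.
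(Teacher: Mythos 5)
Your proposal is correct and follows essentially the same route as the paper's proof of \thm{dir-rotation}: a $\Theta(g)$ spectral gap via Weyl's inequality, the direct rotation between $\S$ and the perturbed invariant subspace with a Davis--Kahan-type bound giving $\eta = O(\kappa)$, and a Schrieffer--Wolff effective-Hamiltonian argument in which the first-order term equals $A$ and the residual is $O(\|R\|\kappa)$. The gap you flag in your third step—showing the exactly block-diagonalizing direct rotation inherits the second-order cancellation—is precisely what the paper resolves by invoking the convergent Schrieffer--Wolff series bounds of Bravyi et al.\ (valid for $\kappa < 1/16$) rather than a hand-rolled BCH expansion.
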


Given that $\|R\|\le \|H\|$, this result immediately implies that the simulation error using perturbative Hamiltonian embedding is of the order $\mathcal{O}(\|R\|\|H\|t/g)$, where the error bound follows from \thm{main}. This error bound can be improved to $\mathcal{O}(\|R\|^2t/g)$ by leveraging the particular operator splitting structure as shown in \eqn{decompo}, see \thm{perturb}. In practice, the embedding subspace $\S$ is often much smaller than the full Hilbert space; consequently, it is often sufficient to assume $\|R\| = O(1)$.
Therefore, to achieve a simulation error $\delta > 0$, we need to choose a penalty coefficient $g = \mathcal{O}(t/\delta)$. However, this choice of $g$ would incur a $\poly(1/\delta)$ overhead in the overall gate complexity when we implement the Hamiltonian simulation on gate-based quantum computers using standard product formulas. Fortunately, the penalty Hamiltonian $\Hpen$ is usually fast-forwardable, i.e., the time-evolution operator $e^{-i\Hpen t}$ can be simulated using quantum resources that scale sub-linear in $t$. In this case, we find the unfavorable $\poly(1/\delta)$ overhead can be mitigated by utilizing interaction-picture quantum simulation algorithms like continuous qDRIFT~\cite{berry2020time}. See~\append{sim_perturbative} for a detailed discussion.

\begin{figure}[ht!]
    \centering
    \includegraphics[width=16cm]{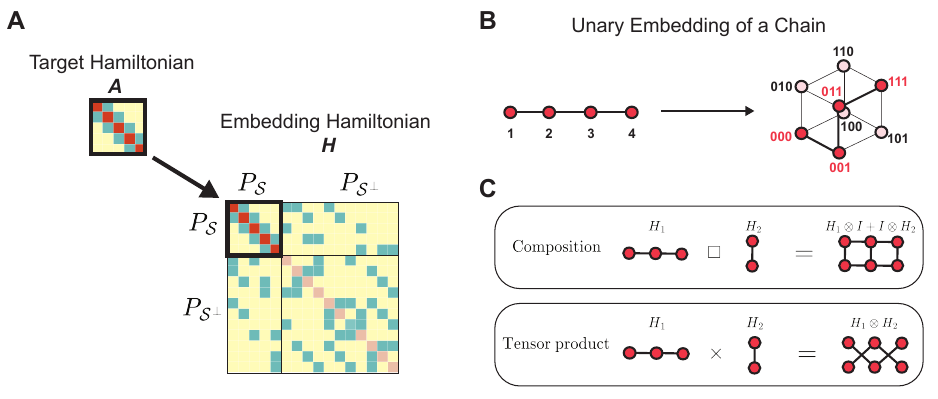}
    \caption{\small \textbf{Hamiltonian embedding.}
    \textbf{A.}
    Illustration of perturbative Hamiltonian embedding.
    The dynamics of a target Hamiltonian $A$ are embedded within a subspace $\S$ in a larger hardware Hilbert space.
    With respect to $\S$ and $\S^\perp$, $H$ is block-diagonal with $A$ being the upper-left block located by the projector $P_{\S}$.
    \textbf{B.}
    Graphical illustration of Hamiltonian embedding for the case of a chain (i.e. tridiagonal matrix) using the unary embedding.
    From a graph theoretic perspective, a path graph is located as a subgraph of a hypercube graph. Red vertices indicate basis states in the embedding subspace $\S$, while pink vertices indicate basis states in $\S^{\perp}$.
    \textbf{C.}
    Visualization of composition and tensor product rules for constructing Hamiltonian embeddings.
    When considering the graph of nonzero entries of a Hamiltonian, these rules are analogous to two different notions of graph products (see \rem{graph_vis} in the appendix for more details).}
    \label{fig:schematic}
\end{figure}

\subsection{Hamiltonian embedding of sparse matrices}\label{sec:sparse-ham}
For Hermitian matrices with certain special sparsity structures, we can construct explicit Hamiltonian embeddings for them. In what follows,
we provide 6 different Hamiltonian embedding schemes for 3 types of sparse Hermitian matrices: 
\begin{enumerate}
    \item \textbf{Band matrix}: an $n$-by-$n$ matrix $A$ is a band matrix of bandwidth $d$ if $A_{i,j} = 0$ for any $i,j=1,\dots, n$ such that $|i-j|>d$.
    \item \textbf{Banded circulant matrix}: an $n$-by-$n$ matrix $A$ is a banded circulant matrix of bandwidth $d$ if \emph{(i)} in the first row, $A_{1,j} = 0$ for any $j = d+2,\dots,n-d$, and \emph{(ii)} all row vectors are composed of the same elements and each row vector is rotated one element to the right relative to the preceding row vector.
    \item \textbf{$s$-sparse matrices}: an $n$-by-$n$ matrix $A$ is $s$-sparse if each row/column of $A$ has at most $s$ non-zero elements.
\end{enumerate}

In \tab{sparse-embedding}, we list all the embedding schemes discussed in this paper that can be applied to simulate at least one type of sparse Hermitian matrices. 
Among our embedding schemes, the unary and one-hot encodings are well-known~\cite{chancellor2019domain, hadfield2019quantum, sawaya2020resource, sawaya2022mat2qubit, christandl2005perfect}, while the others have not been systematically studied in the existing literature.
Note that, except for the last item (``Penalty-free one-hot''), all embeddings are perturbative Hamiltonian embeddings and thus require a penalty Hamiltonian.\footnote{For perturbative Hamiltonian embedding, the simulation error can be made arbitrarily small by increasing the penalty coefficient $g$. Thus, for simplicity, we do not explicitly specify the tuple $(q,\eta,\epsilon)$ for each Hamiltonian embedding listed in the table.} The ``Max. weight'' column shows the maximal weight of the Pauli operators involved in the corresponding embedding Hamiltonian ($d$ represents the bandwidth of the target Hamiltonian $A$). Full details of these embedding schemes can be found in \append{sparse_matrix}.

\begin{table}[!ht]
    \centering
    \begin{tabular}{|p{4cm}|p{4cm}|p{3cm}|p{3cm}|}
     \hline
     \textbf{Embedding scheme} & \textbf{Applications} & \textbf{Max. weight} & \textbf{Details}\\
     \hline
     Unary & band & $\max(d,2)$ & \append{unary}\\
     \hline
     Antiferromagnetic & band & $\max(d,2)$ & \append{antiferro}\\
     \hline
     Circulant unary & banded circulant & $\max(d,2)$ &  \append{circ-unary}\\
     \hline
     Circulant antiferromagnetic & banded circulant & $\max(d,2)$ & \append{circ-antiferro}\\
     \hline
     One-hot & band, banded circulant, $s$-sparse & $2$ & \append{onehot}\\
     \hline
     Penalty-free one-hot & band, banded circulant, $s$-sparse & $2$ & \append{onehot_penfree}\\
     \hline
    \end{tabular}
    \caption{Six Hamiltonian embeddings for sparse Hamiltonian simulation}
    \label{tab:sparse-embedding}
\end{table}

A notable feature of these embedding schemes is that their maximal weight depends on the special sparsity pattern of the target Hamiltonian $A$. In particular, for band matrices (or banded circulant matrices) with bandwidth $d \le 2$, the corresponding unary/antiferromagnetic (or circulant unary/circulant antiferromagnetic) embeddings have a maximal weight of $2$, which means these embedding Hamiltonians can be directly simulated using native gates. This is also true for one-hot/penalty-free one-hot embeddings that encode arbitrary $s$-sparse matrices.

We also note that all these schemes utilize only an exponentially small subspace of the full Hilbert space for embedding. Therefore, direct use of these schemes to build Hamiltonian embeddings does not lead to meaningful quantum speedups in Hamiltonian simulation. To achieve exponential quantum speedups, one strategy is to first decompose the target sparse Hamiltonian using the \emph{composition} and \emph{tensor product} rules in \thm{rules}, and then use the embedding schemes listed in \tab{sparse-embedding} to construct elementary building blocks of size $O(1)$. Using this strategy, we can build Hamiltonian embeddings for symmetric/Hermitian matrices arising from graph theory and differential equations with logarithmic quantum resources, as detailed in~\sec{exp}.

Now, we provide some concrete examples of the embedding of a small sparse matrix. 
We consider a $5$-by-$5$ tridiagonal matrix $A$ given by 
\begin{align}\label{eqn:5-node-chain}
    A = \begin{bmatrix}
-1 & 1 & 0 & 0 & 0\\
1 & -2 & 1 & 0 & 0\\
0 & 1 & -2 & 1 & 0\\
0 & 0 & 1 & -2 & 1\\
0 & 0 & 0 & 1 & -1\\
\end{bmatrix},
\end{align}
which is the Laplacian matrix of a $5$-node chain graph.
In \tab{embedding_tridiagonal}, we list the operators $Q$ and $\Hpen$ for embedding $A$ using various embedding schemes (with more details available in \append{qwalk}).
In this paper, we use little-endian ordering, meaning that we enumerate the bit of a string \textit{from right to left}.
Here, $X = \begin{bmatrix} 0 & 1 \\ 1 & 0 \end{bmatrix}$ and $Z = \begin{bmatrix} 1 & 0 \\ 0 & -1\end{bmatrix}$ are the Pauli-$X$ and Pauli-$Z$ matrices, and $\hat{n} = \frac{1}{2}(I-Z)$.
The overall embedding Hamiltonian is $H=g\Hpen + Q$, where $g>0$ is a sufficiently large penalty coefficient.
The embedding subspace $\S$ depends on the embedding scheme.
For sufficiently large $g > 0$, \thm{perturbative-thm} implies that $e^{-iAt} \approx P_\S e^{-iH_A t} P_\S$.

\begin{table}[!ht]
    \centering
    \begin{tabular}{|p{4cm}|p{6cm}|p{4cm}|}
        \hline
        \textbf{Embedding scheme} & $Q$ & $\Hpen$ \\
        \hline
        Unary & $-\hat{n}_1+\hat{n}_{4} + \sum^{4}_{j=1} X_j$ & $-\sum^{3}_{j=1} Z_{j+1}Z_j + Z_1 - Z_{4}$ \\
        \hline
        Antiferromagnetic & $- \hat{n}_1 - \hat{n}_{4} + \sum^{4}_{j=1} X_j$ & $\sum^{3}_{j=1} Z_{j+1}Z_j + Z_1 + Z_{4}$ \\
        \hline
        One-hot & $\left(-\hat{n}_1 - \hat{n}_5 -2\sum^{4}_{j=2} \hat{n}_j\right)$ $+ \sum^{4}_{j=1} X_{j+1}X_j$ & $\left(\sum_{j=1}^{5} \hat{n}_j - 1\right)^2$\\
        \hline
        Penalty-free one-hot & $\left(-\hat{n}_1 - \hat{n}_5 -2\sum^{4}_{j=2} \hat{n}_j\right)$ $+ \frac{1}{2}\sum^{4}_{j=1} \left(X_{j+1}X_j + Y_{j+1}Y_j\right)$ & 0 \\
        \hline
    \end{tabular}
    \caption{Hamiltonian embeddings of the tridiagonal matrix $A$ in \eqn{5-node-chain}.}
    \label{tab:embedding_tridiagonal}
\end{table}

While there could be several possible Hamiltonian embedding schemes for a fixed target Hamiltonian $A$, we remark that there is no single criterion to determine which one would perform the best. We provide a few aspects to compare various Hamiltonian embeddings in practice. First, the Hamiltonian embedding must match the hardware programmability. For example, the antiferromagnetic embedding scheme fits well with Rydberg atom arrays, see \append{quera-antiferro} for details.
Second, we want to use the Hamiltonian embedding with minimal simulation error. The simulation error could come from the perturbative construction of an embedding and/or a specific Hamiltonian simulation algorithm (e.g., a Trotter formula, continuous qDRIFT, etc.). 

\subsection{Connection to quantum Hamiltonian complexity}
Hamiltonian embedding is closely related to previously studied notions of simulating one Hamiltonian by another Hamiltonian, including the isometry-based definition of \emph{simulation} in~\cite{bravyi2017complexity} used to study the complexity of $2$-local stoquastic Hamiltonians, and \emph{Hamiltonian encodings} used to show the universality of certain spin-lattice models~\cite{cubitt2018universal,zhou2021strongly}.

The simulation introduced in~\cite{bravyi2017complexity} quantifies how close are two quantum Hamiltonians in terms of their low-energy spectrum. It is defined as an isometry transformation and its application has been limited to stoquastic local Hamiltonians~\cite{bravyi2007complexity}, while our Hamiltonian embedding applies to a broader class of sparse matrices.

A Hamiltonian encoding (aka, encoding transformation) is a map that encodes a Hamiltonian $H$ into some other Hamiltonian $H'$. Specifically, this map needs to fulfill a few basic requirements, including the preservation of locality, spectrum, and real-linearity~\cite{cubitt2018universal}. Hamiltonian encoding is more general than simulation since an encoding does not need to be an isometry. The construction of Hamiltonian encodings heavily utilizes perturbative gadgets~\cite{kempe2006complexity, oliveira2008complexity, jordan2008perturbative}, a theoretical tool originally developed to prove hardness results in Hamiltonian complexity theory.

Compared to~\cite{cubitt2018universal}, our definition of Hamiltonian embedding (\defn{embedding}) more closely resembles the \emph{simulation} defined in~\cite{bravyi2017complexity}, in the sense that both do not impose any locality-related restriction on the encoding/embedding maps.
Technically speaking, our perturbative Hamiltonian embedding can be viewed as an explicit construction of perturbative gadgets, but two major differences distinguish our work from prior arts. 
First, existing techniques are almost exclusively applied to (local) quantum Hamiltonians arising from many-body physics (i.e., bosons, fermions, and local spin/qudit Hamiltonians), while our focus is on the simulation of sparse matrices.
Second, Hamiltonian encodings often involve complicated basis change, while our Hamiltonian embedding preserves the energy spectrum and dynamical evolution without introducing a nontrivial basis change. This allows us to measure the embedded system using a subset of computational basis. However, this favorable feature is achieved at a cost of universality. The construction technique in \cite{cubitt2018universal} can be applied to arbitrary local Hamiltonians, while we only provide explicit constructions of Hamiltonian embedding for certain families of sparse Hermitian matrices.

\section{Real-machine experiments}\label{sec:exp}
We conduct experiments to demonstrate the use of Hamiltonian embeddings for computational tasks, including (continuous-time) quantum walk on graphs, spatial search, and simulating real-space quantum dynamics. 
We construct Hamiltonian embeddings for each task and deploy them on current digital and analog quantum computers, including IonQ's ion trap systems~\cite{ionq_api} and QuEra's neutral atom systems~\cite{wurtz2023aquila}.
We also exhibit the efficiency and scalability of our approach over the conventional Pauli access approach (i.e., the standard binary encoding) through a detailed resource analysis.
If given the Pauli decomposition of a Hamiltonian, the Pauli access model enables a straightforward approach to Hamiltonian simulation via product formulas.
On the other hand, the sparse-input and block-encoding input models require coherent circuit implementations of oracles which are highly nontrivial even for structured matrices.
While there has been recent progress for constructing block-encodings~\cite{camps2024explicit, sunderhauf2024block}, the overhead associated with these schemes make them applicable only in the fault-tolerant regime.
Consequently, we compare our embedding schemes with the standard binary encoding, which serves as a more reasonable baseline.
More details on resource analysis are available in \append{methodology}.

\subsection{Hardware-efficient Hamiltonian models of quantum computers}
In this subsection, we give the hardware-efficient Hamiltonian models for the IonQ and QuEra quantum computers. For both devices, their Hamiltonian models can be formulated as
$$H(t) = \sum_j \alpha_j(t) H_j + \sum_{j,k} \beta_{j,k}(t) H_{j,k},$$
but with different native 1- and 2-qubit component Hamiltonians.

The IonQ Aria-1 quantum computer natively supports the GPi gate, GPi2 gate~\cite{wright2019benchmarking}, virtual-Z gate, and arbitrary angle MS (M\o lmer-S\o renson) gate~\cite{ionq_native_gates}. The hardware-efficient Hamiltonian model for the IonQ Aria-1 device is composed of the following 1- and 2-qubit components:
\begin{align}
    H_j\colon aX + bY + cZ,\quad H_{j,k}\colon (\cos(\phi_1)X_{j} + \sin(\phi_1)Y_{j}) \otimes (\cos(\phi_2)X_{k} + \sin(\phi_2)Y_{k}),
\end{align}
where $a, b, c, \phi_1, \phi_2$ can be any real-valued scalars.

The QuEra Aquila quantum computer is an analog quantum simulator. It allows users to program certain parameters, including the Rabi frequency, local detuning, and atom-atom distance, in the effective Hamiltonian describing the Rydberg atom arrays. More details on the QuEra device are provided in \append{quera-antiferro}. We can formulate the abstract model for the QuEra Aquila device using the following 1- and 2-qubit component Hamiltonians:
\begin{align}
    H_j\colon aX + bY + cZ,\quad H_{j,k} \colon \alpha Z\otimes Z,
\end{align}
where $a, b, c$ are real-valued scalars. The parameter $\alpha$ is engineered by Rydberg interactions and thus must be non-negative.

\subsection{Traversing the glued trees graph}

\begin{figure}[ht!]
    \centering
    \includegraphics[width=16cm]{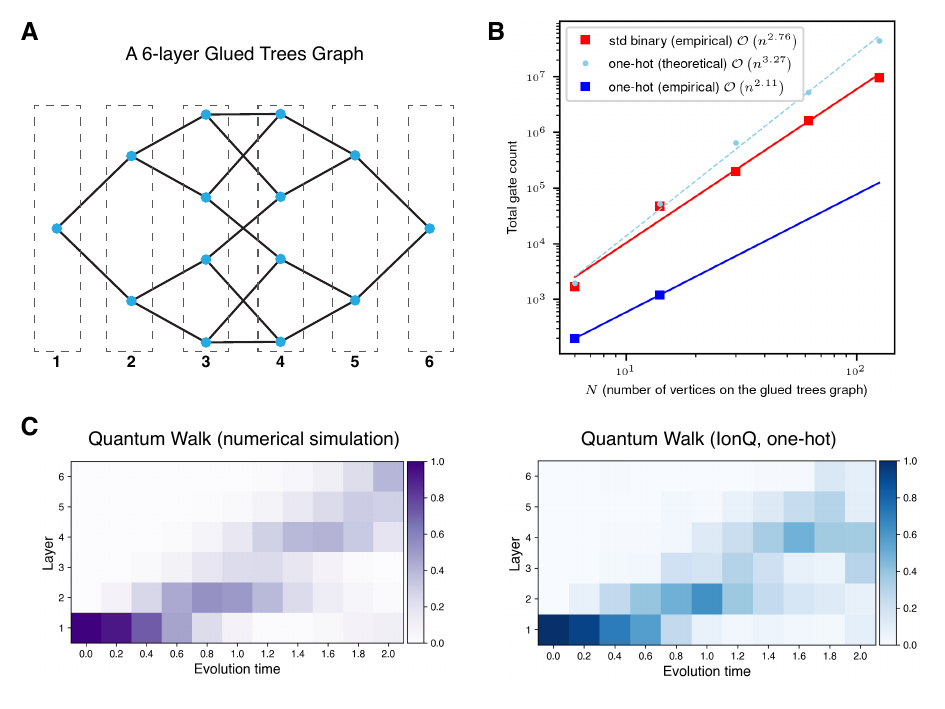}
    \caption{\small \textbf{Quantum walk on glued trees graphs.}
    \textbf{A.} A 6-layer glued trees graph with 14 vertices.
    \textbf{B.} Gate counts of simulating CTQW on $N$-vertex glued trees graphs using standard binary encoding and one-hot embedding. Both the theoretical (worst-case) upper bound and the empirical data suggest that the one-hot embedding has better asymptotic scaling.
    \textbf{C.} Propagation of the quantum wave packet on the 6-layer glued trees graph. Wave packet clustered on each layer. Left: numerical simulation; Right: real-machine results. }
    \label{fig:fig_2}
\end{figure}

We first demonstrate the simulation of continuous-time quantum walk (CTQW) on undirected graphs~\cite{childs2010relationship}. In this section, we focus on the simulation of quantum walk on a glued trees graph, but the embedding and simulation of CTQW for other graphs are presented in \append{qwalk}.
The problem of traversing the glued trees graph is proven to be efficiently solvable for quantum computers while computationally hard for classical computers in the oracle setting~\cite{childs2003exponential}. 
Although our experiment simulates the desired quantum dynamics, this result itself does not imply achieving exponential quantum speedup on NISQ devices because the Hamiltonian embedding is not built using logarithmic quantum resources.

In the existing literature, CTQW on glued trees has been experimentally realized only in very limited settings.
\cite{shi2020quantum} used a photonic chip to simulate a one-dimensional quantum walk reduced from the CTQW on glued trees.
An earlier photonic implementation~\cite{tang2018experimental} realizes quantum walk on a hexagonal variant of the glued trees graph.
Implementations of CTQW have been explored more extensively for other graphs on various experimental platforms, such as NMR~\cite{du2003experimental}, neutral atoms~\cite{preiss2015strongly, young2022tweezer}, superconducting qubits~\cite{gong2021quantum}, and photonic processors~\cite{qu2022experimental}.
However, these implementations are hardware-specific and generally require qubits proportional to the order (i.e. number of vertices) of the graph, thus unable to accommodate other families of graphs.
The work in~\cite{qiang2021implementing, wang2022large} simulates CTQW in a photon interference experiment, but the desired evolution operators are calculated classically to configure linear optical circuits.
Discrete-time quantum walks have also been realized on a trapped-ion quantum processor~\cite{huerta2020quantum}, which uses a dense encoding (thus being space-efficient) but is essentially hard-coded for a specific choice of the initial state.

In \fig{fig_2}A, we illustrate a $14$-node glued trees graph $G = (V, E)$ containing two balanced binary trees. Here, $V$ is the set of vertices, and $E$ is the set of all edges.
The CTQW on this graph is described by the time-evolution operator $e^{-itA}$, where $t$ is the evolution time, and $A$ is a $14$-by-$14$ real symmetric matrix representing the adjacency matrix of the graph:
\begin{align}
    A_{jk}=\begin{cases}
        1, & j\neq k, (j, k)\in E, \\
        0, & j\neq k, (j, k)\not\in E.
    \end{cases}
\end{align}

To simulate the Hamiltonian $A$, we construct the following 14-qubit penalty-free one-hot embedding (details in \append{binary-glued}),
\begin{align}
\label{eqn:one-hot-L}
    H=\sum_{(j, k)\in E} \frac{1}{2}(X_jX_k+Y_jY_k),
\end{align}
where the embedding subspace $\S$ is spanned by the one-hot codewords $\{\ket{0..01},\ket{0..10},\dots,\ket{10..0}\}$. In other words, $\S$ is the single-excitation subspace in the $14$-qubit Hilbert space.
It is readily verified that $\S$ is an invariant subspace of $H$ and $H|_{\S} = A$, so the simulation of $A$ is embedded within the dynamics of $H$.

\fig{fig_2}B shows the total gate counts for the circuit implementation via standard binary encoding and Hamiltonian embedding (i.e., penalty-free one-hot). The total gate counts (represented by solid dots) are estimated such that the simulation error is suppressed to a fixed accuracy.
The results from extrapolation indicate that in terms of both exact and asymptotic measures, Hamiltonian embedding outperforms standard binary encoding.
A detailed discussion on the resource analysis for this experiment is available in \append{gluedtrees-resource}.

We implement the Hamiltonian embedding as shown in~\eqn{one-hot-L} on the IonQ Aria-1 processor to simulate the CTQW on the 14-node glued trees graph, where we employ the randomized first-order Trotter formula to decompose the simulation into circuits with IonQ's native gate set.
We initialize the quantum state with a single walker (starting from the entrance node in layer 1) at time $t=0$ and simulate the propagation of the quantum wave function through the layers over time. The real-machine and numerical simulation results are illustrated in~\fig{fig_2}C.
In both subplots, a clear pattern of population migration over time is observed.
The real-machine result depicts a strong similarity to the numerical simulation, while a slightly shifted population distribution is witnessed near the end time $T=2$, possibly caused by the accumulated device noise.
In \tab{glued_trees_gate_counts}, we list the empirical resource usage (qubit and gate counts) for traversing the glued trees graph using the one-hot code as well as an estimate of the resources required by the standard binary code.
While the standard binary code uses fewer qubits, the gate counts needed are an order of magnitude larger.
The savings in gate count for the penalty-free one-hot embedding enables the simulation of sparse Hamiltonians which would otherwise be infeasible on current devices.
\begin{table}[]
    \centering
    \begin{tabular}{|c|c|c|c|}
        \hline
        \textbf{Encoding/embedding} & \textbf{\# of qubits} & \textbf{\# of 1-qubit gates} & \textbf{\# of 2-qubit gates} \\
        \hline
        Standard binary & 4 & 6088 & 932 \\
        \hline
        Penalty-free one-hot & 14 & 1 & 160\\
        \hline
    \end{tabular}
    \caption{1- and 2-qubit gate counts for simulating CTQW on a 14-node glued trees graph. The accuracy is chosen to correspond to the Trotter number $r=4$ for the one-hot code as used in the real-machine experiment.}
    \label{tab:glued_trees_gate_counts}
\end{table}

\subsection{Spatial search on square grids}

\begin{figure}[ht!]
    \centering
    \includegraphics[width=16cm]{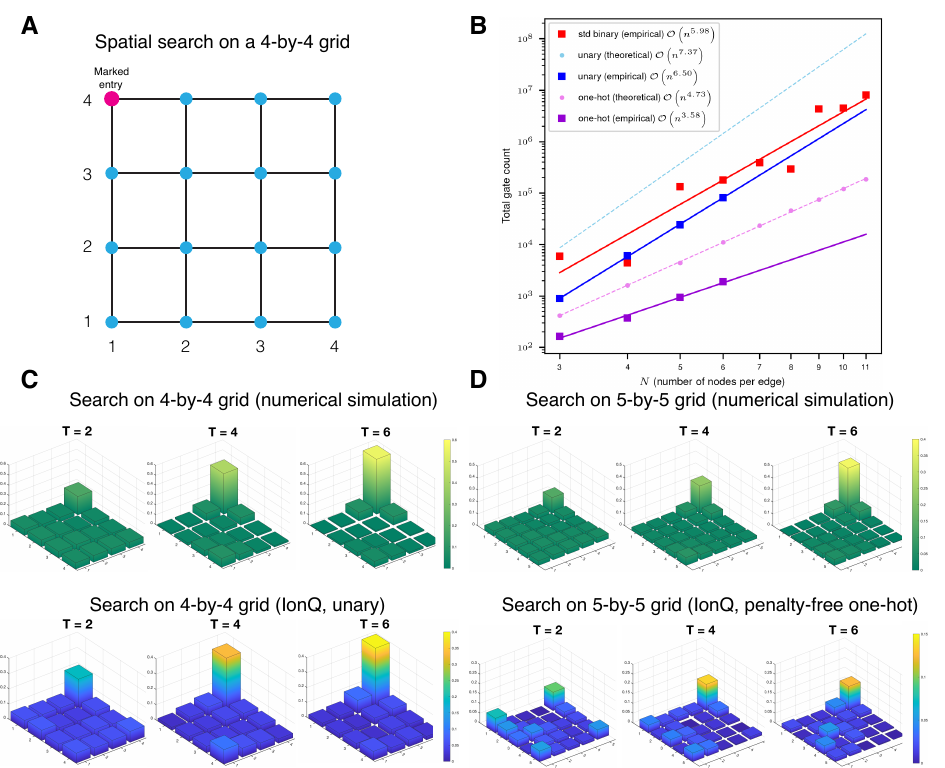}
    \caption{\small \textbf{Spatial search by quantum walk.}
    \textbf{A.} Illustration of a spatial search problem on a 4-by-4 grid. 
    \textbf{B.} Gate counts of implementing spatial search on $N$-by-$N$ grids. The penalty-free one-hot embedding has the best asymptotic scaling in both the worst-case estimate and empirical extrapolation.
    \textbf{C.} Quantum distribution at $T = 2, 4, 6$ shows the algorithm solves the 4-by-4 spatial search problem. Top: numerical simulation; Bottom: results obtained on IonQ Aria-1 with unary embedding.
    \textbf{D.} Quantum distribution at $T = 2, 4, 6$ shows the algorithm solves the 5-by-5 spatial search problem. Top: numerical simulation; Bottom: results obtained on IonQ Aria-1 with penalty-free one-hot embedding.}
    \label{fig:fig_3}
\end{figure}

Another prominent application for quantum computers is to find a target entry in a database, a task known as \emph{quantum search}. The famous Grover's algorithm~\cite{grover1997quantum} tackles the quantum search task with $O(\sqrt{M})$ queries to entries in a database of size $M$, while classical algorithms require at least $\Omega(M)$ queries, showcasing a quadratic quantum speedup.
In this section, we implement a quantum search algorithm proposed by Childs and Goldstone~\cite{childs2004spatial} for a structured database (i.e., \emph{spatial search}). The search spaces are two-dimensional square grids with $N^2$ entries.
By the composition rule of Hamiltonian embedding, we can implement the aforementioned spatial search algorithm using only $2N$ qubits.
In general, spatial search over a $d$-dimensional grid (with $N^d$ entries) can be implemented using $dN$ qubits by composing Hamiltonian embeddings (see \append{building}). 

There are only a few experimental demonstrations of spatial search documented in the literature. Some photonic-based implementations have been shown in~\cite{benedetti2021quantum, qu2022deterministic}, while these proposals only work for specific graphs and thereby lack programmability. A proof-of-principle demonstration of two-dimensional spatial search was performed using neutral atoms~\cite{young2022tweezer}, where $N^2$ atoms were used to represent all entries in the search space.
In this regard, our implementation of spatial search serves as the first demonstration of spatial search on a lattice that effectively exploits the matrix structure of the Hamiltonian.
The use of Hamiltonian embedding significantly reduces the required qubit and gate counts, thereby enabling a digital implementation of 2-dimensional spatial search that has not been demonstrated in any previous literature.

In 2004, Childs and Goldstone~\cite{childs2004spatial} proposed a quantum search algorithm via quantum walk. Given a search space represented by a graph $G=(V,E)$ and a marked entry $v \in V$, their quantum algorithm requires simulating the following Hamiltonian,
\begin{align}\label{eqn:spatial-search-ham}
    H=-\gamma L-H_{v},
\end{align}
where $L$ is the Laplacian of the graph $G$, $H_v=\ket{v}\bra{v}$ is a projector onto the marked entry $v$ (known as the \emph{oracle} Hamiltonian), and $\gamma>0$ is a parameter minimizing the spectral gap in $H$. 
When the search space is a two-dimensional grid graph (with $N^2$ entries $(j, k)$ for $j, k=1,\dots, N$) and the marked entry is $\vv = (v_x,v_y)$, the graph Laplacian reads
\begin{align}\label{eqn:spatial-search-laplacian}
    L = L_1 \otimes I_N + I_{N}\otimes L_1,
\end{align}
where $L_1$ is the graph Laplacian of a $N$-node one-dimensional chain graph (an example for $N=5$ is given in \eqn{5-node-chain}), and $I_N$ is the $N$-by-$N$ identity operator. Meanwhile, the oracle Hamiltonian is given by 
\begin{align}\label{eqn:spatial-search-oracle}
    H_{\vv} = \ket{v_x,v_y}\bra{v_x,v_y} =  \ket{v_x}\bra{v_x}\otimes \ket{v_y}\bra{v_y}.
\end{align}
In \fig{fig_3}A, we illustrate a spatial search problem on a $4$-by-$4$ grid with the marked entry at the upper left corner. By indexing the entries from the lower left corner, the oracle Hamiltonian for the marked entry is $H_{\vv} = \ket{1,4}\bra{1,4}$. 

To implement the quantum algorithm by Childs and Goldstone to solve the two-dimensional spatial search problem, we need to simulate the target Hamiltonian $H$ given in \eqn{spatial-search-ham}, which is of dimension $N^2$. Since $H$ is a sparse Hamiltonian, we could in principle construct a one-hot embedding using $N^2$ qubits. However, the decomposition structure in the graph Laplacian \eqn{spatial-search-laplacian} and the tensor product structure in the oracle Hamiltonian \eqn{spatial-search-oracle} allow us to build a more compact Hamiltonian embedding using Hamiltonian embeddings of $L_1$, $\ket{v_x}\bra{v_x}$, and $\ket{v_y}\bra{v_y}$.  
Notably, for both the unary and one-hot Hamiltonian embeddings, we only require $O(N)$ qubits and $O(N)$ $2$-local Pauli operators. 
More details of the construction for spatial search (including the case of periodic boundaries) may be found in \append{spatial_search}.

\fig{fig_3}B shows the total gate counts in the circuit implementation via standard binary encoding and Hamiltonian embeddings (i.e., unary, penalty-free one-hot). 
The extrapolation results suggest that the penalty-free one-hot embedding (both worst-case and empirical estimates) outperforms the standard binary encoding in both exact and asymptotic performance measures. However, the unary embedding appears to use more elementary gates than the standard binary encoding, potentially due to the large penalty coefficient. More discussions on the resource analysis for this experiment can be found in \append{spatial-resource}.

Experimental results for two-dimensional spatial search on $4$-by-$4$ and $5$-by-$5$ grids are shown in panels C and D in \fig{fig_3}. We choose the unary embedding for the $4$-by-$4$ experiment and the penalty-free one-hot embedding for the $5$-by-$5$ experiment. 
More details of the experiment setup, including the state preparation procedure, are provided in \append{spatial_search_experiments}.
The results show that our implementations via Hamiltonian embedding on the IonQ device are in good agreement with the expected quantum dynamics described in the algorithm.
\begin{table}[ht!]
    \centering
    \begin{tabular}{|c|c|c|c|}
        \hline
        \textbf{Encoding/embedding} & \textbf{\# of qubits} & \textbf{\# of 1-qubit gates} & \textbf{\# of 2-qubit gates} \\
        \hline
        Standard binary & 4 & 831 & 123 \\
        \hline
        Unary & 6 & 132 & 114 \\
        \hline
    \end{tabular}
    \caption{Resource counts for simulating spatial search on a $4\times 4$ lattice.}
    \label{tab:4_by_4_spatial_search_gate_counts}
\end{table}
\begin{table}[ht!]
    \centering
    \begin{tabular}{|c|c|c|c|}
        \hline
        \textbf{Encoding/embedding} & \textbf{\# of qubits} & \textbf{\# of 1-qubit gates} & \textbf{\# of 2-qubit gates} \\
        \hline
        Standard binary & 6 & 26100 & 4464 \\
        \hline
        Penalty-free one-hot & 10 & 22 & 181 \\
        \hline
    \end{tabular}
    \caption{Resource counts for simulating spatial search on a $5\times 5$ lattice.}
    \label{tab:5_by_5_spatial_search_gate_counts}
\end{table}
In \tab{4_by_4_spatial_search_gate_counts} and \tab{5_by_5_spatial_search_gate_counts}, we list the empirical resource usage for spatial search on $4\times 4$ and $5\times 5$ lattices, respectively.
Compared to our embedding schemes, the standard binary encoding requires orders of magnitude more gates for the same target accuracy.
Consequently, Hamiltonian embedding demonstrates a significant resource savings that enables the implementation of quantum search on current quantum computers.

\subsection{Simulating real-space quantum dynamics}\label{sec:sim-real-space-dynamics}

\begin{figure}[ht!]
    \centering
    \includegraphics[width=16cm]{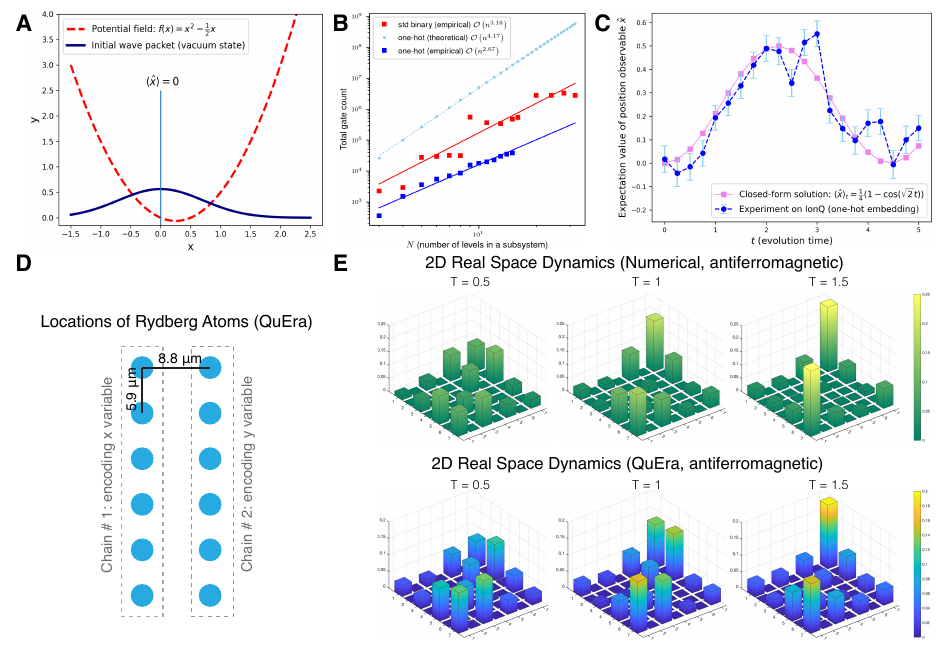}
    \caption{\small \textbf{Simulating real-space quantum dynamics.}
    \textbf{A.} The potential field and initial state in the IonQ experiment. 
    \textbf{B.} Resource count of simulating real-space quantum dynamics using an $N$-level truncated Hamiltonian. The empirical data suggests that one-hot embedding has a better asymptotic scaling than the standard binary encoding.
    \textbf{C.} Expectation value of the position observable $\hat{x}$ depicted as a function of evolution time $t$. Experiment data obtained from IonQ Aria-1.
    \textbf{D.} Locations of the 12 Rydberg atoms in the experiment of simulating quantum dynamics in a 2D space.
    \textbf{E.} Measuring the 2D real-space quantum dynamics at $T = 1, 1.5, 2$. Top: numerical simulation; Bottom: experiment results obtained on QuEra Aquila.}
    \label{fig:fig_4}
\end{figure}

The real-space quantum dynamics are governed by the time-dependent Schr\"odinger equation over the $d$-dimensional Euclidean space,
\begin{align}\label{eqn:real-space-schr}
    &i \frac{\partial}{\partial t}\Psi(t,x) = \left[-\frac{1}{2}\nabla^2 + f(x)\right] \Psi(t,x),\\
    &\text{subject to}\quad \Psi(0,x) = \Psi_0(x),
\end{align}
where $\Psi(t,x):[0, T]\times \mathbb{R}^d\to \mathbb{C}$ is the quantum wave function, $f\colon \mathbb{R}^{d} \to \mathbb{R}$ is a potential field, and the initial state is given by $\Psi_0(x)$. 

Although physically relevant systems are of dimension $3$, high-dimensional Schr\"odinger equations find ubiquitous applications for simulating multi-particle systems in condensed matter physics~\cite{babbush2018low} and quantum chemistry~\cite{kassal2008polynomial,babbush2019quantum}. 
Also, some quantum algorithms for numerical optimization require the simulation of Schr\"odinger equations in high-dimensional Euclidean spaces~\cite{zhang2021quantum,liu2023quantum,leng2023quantum}.
Several quantum algorithms for real-space quantum simulation have been proposed, e.g., \cite{wiesner1996simulations,zalka1998efficient,kivlichan2017bounding,an2021time,an2022time,childs2022quantum,jin2022quantum}, while most of them require fully fault-tolerant quantum computers that are currently out of reach.
Recently, Chang et al.~\cite{chang2022improving} utilize variants of Gray code to reformulate the continuous-space Schr\"odinger equation as spin Hamiltonians, leading to exponentially better space complexity over classical methods. However, the spin operators in \cite{chang2022improving} involve 3-body interaction terms and thus require further decomposition in the implementation. In our experiment, the different choices of embedding schemes allow us to represent the Schr\"odinger equation using 1- and 2-body spin operators, eliminating the Pauli compilation overhead.

We showcase the flexibility and versatility of the Hamiltonian embedding technique by implementing the real-space quantum simulation task on two different experimental platforms: a trapped-ion quantum processor (IonQ Aria-1) and a programmable Rydberg atom array (QuEra Aquila).

On the trapped-ion processor, we simulate a 1-dimensional Schr\"odinger equation with a quadratic potential field and a Gaussian initial state,
$$f(x) = x^2-\frac{1}{2}x,\quad \Psi_0(x) = \left(\frac{1}{2\pi}\right)^{1/4} e^{-\frac{x^2}{4}},$$
as illustrated in \fig{fig_4}A.
We use a truncated Fock space method with $N=5$ levels to discretize the Hamiltonian operator and obtain a finite-dimensional sparse Hamiltonian.
Next, we employ the penalty-free one-hot embedding to build the corresponding embedding Hamiltonian
\begin{equation}
    H_{\mathrm{real-space}} = \frac{1}{2} Q_{\hat{p}^2}^{\mathrm{one-hot}} - \frac{1}{2}Q_{\hat{x}^2}^{\mathrm{one-hot}} + \frac{1}{2} Q_{\hat{x}}^{\mathrm{one-hot}},
\end{equation}
where $Q_{\hat{p}^2}^{\mathrm{one-hot}}$, $Q_{\hat{x}^2}^{\mathrm{one-hot}}$, and $Q_{\hat{x}}^{\mathrm{one-hot}}$ are all 2-local Hamiltonians.
More details of the truncation method and embedding are discussed in \append{fock_space_truncation}.

In \fig{fig_4}B, we estimate and compare the gate counts required in the circuit implementation for the standard binary encoding and the penalty-free one-hot embedding.
The extrapolation based on empirical data shows that our Hamiltonian embedding implementation uses fewer elementary gates and has slightly better asymptotic scaling in terms of system size, while the theoretical (i.e., worst-case) asymptotic scaling of our embedding does not show an advantage.

We simulate the embedding Hamiltonian $H_{\mathrm{real-space}}$ on the IonQ Aria-1 processor and compute the expectation value of the position observable $\hat{x}$ as a function of time as shown \fig{fig_4}C.
Notably, computing $\hat{x}$ only requires measurements in the $x$-basis when using Hamiltonian embedding. 
The same measurements using the standard binary code would require performing measurements in several different bases corresponding to the Pauli decomposition of $\hat{x}$, which demonstrates another advantage of Hamiltonian embedding.
The experimental data matches the closed-form solution and we observe a full oscillation from $t=0$ to $t=5$. 
More details regarding the experiment setup and results are provided in \append{real-setup}.
Furthermore, we list in \tab{real_space_gate_counts} the empirical resource usage for simulating the real-space Schr\"odinger equation with the one-hot code as well as an estimate of the resources required by the standard binary code.
Although we consider only a 1-dimensional case in this example, the one-hot code demonstrates a significant advantage in gate count.
In particular, the one-hot code only requires a single 1-qubit gate (for state preparation), while the standard binary code requires over 1800 single-qubit gates, which would not be feasible on current quantum hardware.

\begin{table}[ht]
    \centering
    \begin{tabular}{|c|c|c|c|}
        \hline
        \textbf{Encoding/embedding} & \textbf{\# of qubits} & \textbf{\# of 1-qubit gates} & \textbf{\# of 2-qubit gates} \\
        \hline
        Standard binary & 3 & 1826 & 220 \\
        \hline
        Penalty-free one-hot & 5 & 1 & 154\\
        \hline
    \end{tabular}
    \caption{1- and 2-qubit gate counts for simulating the real-space Schr\"odinger equation with $N=5$ levels. The accuracy is chosen to correspond to the Trotter number $r=11$ for the one-hot code as used in the real-machine experiment.}
    \label{tab:real_space_gate_counts}
\end{table}

We also use programmable Rydberg atom arrays to simulate a 2-dimensional Schr\"odinger equation.
First, we apply the finite difference method to discretize the Hamiltonian operator, yielding a finite-dimensional Hamiltonian,
\begin{equation}\label{eqn:real-space-fdm}
    \hat{H} = -\frac{1}{2} (D \otimes I + I \otimes D) + U,
\end{equation}
where $D$ is an $N$-by-$N$ tridiagonal matrix representing the finite-difference discretization of the second-order differential operator $\frac{\partial^2}{\partial x^2}$, $I$ is the $N$-by-$N$ identity matrix and $U$ is a $N^2$-by-$N^2$ diagonal matrix corresponding to the potential field $f(x,y)$.
The native Hamiltonian of Rydberg atom arrays allows us to form a Hamiltonian embedding of \eqn{real-space-fdm} using the antiferromagnetic scheme, while other schemes (such as unary) are not possible because the Rydberg interaction coefficients must be positive (see \append{quera-antiferro} for details).
The desired Hamiltonian embedding can be realized by arranging the atoms into two chains as shown in \fig{fig_4}D, where each chain represents an individual continuous variable ($x$ for chain 1, $y$ for chain 2). Note that the lack of local detuning in QuEra Aquila poses a significant restriction on the shape of the potential field. In our experiment, the \emph{effective} potential field $U$ and the penalty Hamiltonian are both engineered by the pairwise Rydberg interactions between atoms. 
As shown in \fig{fig_4}E, the QuEra-implemented quantum simulation results have good agreement with the numerical simulation. The quantum distributions are obtained using 1000 shots/measurements per time step (for $T = 0.5, 1, 2$).
The details of the finite difference discretization, together with the construction and implementation of Hamiltonian embedding, are presented in \append{quera_experiments}.

\section{Open questions and discussion}
In this work, we propose the Hamiltonian embedding technique for hardware-efficient sparse Hamiltonian simulation. This approach has the potential to achieve exponential quantum speedups when a large sparse Hamiltonian can be efficiently decomposed using the rules as described in \thm{rules}. We then identify several instances from graph theory, combinatorial optimization, and differential equations where Hamiltonian embeddings can be constructed in poly-logarithmic time. It remains an interesting question whether Hamiltonian embeddings can be efficiently constructed for practical problems with less regular structures, potentially leading to practical quantum advantages even with limited quantum resources. 

Our analysis of the perturbative Hamiltonian embedding (see \thm{perturbative-thm}) indicates a non-negligible simulation error for a large evolution time, which can only be suppressed by a large penalty coefficient. This is because we only utilize the first-order Schrieffer–Wolff theory to craft perturbative Hamiltonian embedding. It is of interest to explore if higher-order Schrieffer–Wolff theories could lead to more efficient Hamiltonian embeddings.

In the resource analysis, the empirical scalings of Hamiltonian embedding are usually much better than the results suggested by the theoretical (worst-case) analysis. This may suggest that we need new analytical tools to better understand the resource efficiency of Hamiltonian embedding.

Our real-machine demonstrations of Hamiltonian embedding are limited to small-scale toy model problems, and the on-device experiment results do not fully match the ideal numerical simulation. This could be caused by the accumulated machine noise.
With the evolution of quantum hardware in the next few years, we are excited about the opportunity to further explore the practical quantum advantages that Hamiltonian embedding can offer in broad application domains, such as simulating non-quantum linear dynamics or even nonlinear dynamics at a large scale.

Last but not least, while the primary focus of this work is to expand the implementable quantum advantages for near-term (NISQ) devices, the Hamiltonian embedding technique also opens new possibilities for quantum input models and could be highly relevant even for fault-tolerant quantum computation. For example, the construction of Hamiltonian embeddings relies on the Pauli input model, which can be naturally leveraged by quantum algorithms such as quantum signal processing or quantum singular value transformation without the need to explicitly construct block encodings~\cite{chakraborty2025quantum}. 
Moreover, Hamiltonian embedding appears to offer a trade-off between system connectivity (local interactions) and space complexity, an active area of research in quantum compilation and circuit routing.

\section*{Acknowledgment}
We thank Alexey Gorshkov, Lei Fan, Lexing Ying, and Lin Lin for helpful discussions and insightful feedback.
This work was partially funded by the U.S. Department of Energy, Office of Science, Office of Advanced Scientific Computing Research, Accelerated Research in Quantum Computing under Award Number DE-SC0020273, the Air Force Office of Scientific Research under Grant No. FA95502110051, the U.S. National Science Foundation grant CCF-1816695 and CCF-1942837 (CAREER), and a Sloan research fellowship. JL is partially supported by DOE QSA Grant No. FP00010905, a Simons Quantum Postdoctoral Fellowship, and a Simons Investigator award through Grant No. 825053. 

\newpage
\bibliographystyle{quantum}
\bibliography{reference}

\begin{thebibliography}{100}

\bibitem{arute2019quantum}
Frank Arute, Kunal Arya, Ryan Babbush, Dave Bacon, Joseph~C Bardin, Rami Barends, Rupak Biswas, Sergio Boixo, Fernando~GSL Brandao, David~A Buell, et~al.
\newblock ``Quantum supremacy using a programmable superconducting processor''.
\newblock \href{https://dx.doi.org/10.1038/s41586-019-1666-5}{Nature {\bf 574}, 505--510}~(2019).

\bibitem{zhong2020quantum}
Han-Sen Zhong, Hui Wang, Yu-Hao Deng, Ming-Cheng Chen, Li-Chao Peng, Yi-Han Luo, Jian Qin, Dian Wu, Xing Ding, Yi~Hu, Peng Hu, Xiao-Yan Yang, Wei-Jun Zhang, Hao Li, Yuxuan Li, Xiao Jiang, Lin Gan, Guangwen Yang, Lixing You, Zhen Wang, Li~Li, Nai-Le Liu, Chao-Yang Lu, and Jian-Wei Pan.
\newblock ``Quantum computational advantage using photons''.
\newblock \href{https://dx.doi.org/10.1126/science.abe8770}{Science {\bf 370}, 1460--1463}~(2020).

\bibitem{bluvstein2023logical}
Dolev Bluvstein, Simon~J Evered, Alexandra~A Geim, Sophie~H Li, Hengyun Zhou, Tom Manovitz, Sepehr Ebadi, Madelyn Cain, Marcin Kalinowski, Dominik Hangleiter, et~al.
\newblock ``Logical quantum processor based on reconfigurable atom arrays''.
\newblock \href{https://dx.doi.org/10.1038/s41586-023-06927-3}{Nature {\bf 626}, 58--65}~(2024).

\bibitem{harrow2009quantum}
Aram~W. Harrow, Avinatan Hassidim, and Seth Lloyd.
\newblock ``Quantum algorithm for linear systems of equations''.
\newblock \href{https://dx.doi.org/10.1103/PhysRevLett.103.150502}{Phys. Rev. Lett. {\bf 103}, 150502}~(2009).

\bibitem{an2023quantum}
Dong An, Andrew~M. Childs, and Lin Lin.
\newblock ``Quantum algorithm for linear non-unitary dynamics with near-optimal dependence on all parameters''~(2023).
\newblock  \href{http://arxiv.org/abs/2312.03916}{arXiv:2312.03916}.

\bibitem{liu2021efficient}
Jin-Peng Liu, Herman Øie Kolden, Hari~K. Krovi, Nuno~F. Loureiro, Konstantina Trivisa, and Andrew~M. Childs.
\newblock ``Efficient quantum algorithm for dissipative nonlinear differential equations''.
\newblock \href{https://dx.doi.org/10.1073/pnas.2026805118}{Proceedings of the National Academy of Sciences {\bf 118}, e2026805118}~(2021).

\bibitem{leng2023quantum}
Jiaqi Leng, Ethan Hickman, Joseph Li, and Xiaodi Wu.
\newblock ``Quantum {H}amiltonian {D}escent''~(2023).
\newblock  \href{http://arxiv.org/abs/2303.01471}{arXiv:2303.01471}.

\bibitem{berry2007efficient}
Dominic~W Berry, Graeme Ahokas, Richard Cleve, and Barry~C Sanders.
\newblock ``Efficient quantum algorithms for simulating sparse {H}amiltonians''.
\newblock \href{https://dx.doi.org/10.1007/s00220-006-0150-x}{Communications in Mathematical Physics {\bf 270}, 359--371}~(2007).

\bibitem{berry2012black}
Dominic~W. Berry and Andrew~M. Childs.
\newblock ``Black-box {H}amiltonian simulation and unitary implementation''.
\newblock \href{https://dx.doi.org/10.1007/s11128-019-2547-4}{Quantum Information \& Computation {\bf 12}, 29–62}~(2012).

\bibitem{berry2014exponential}
Dominic~W. Berry, Andrew~M. Childs, Richard Cleve, Robin Kothari, and Rolando~D. Somma.
\newblock ``Exponential improvement in precision for simulating sparse {H}amiltonians''.
\newblock In Proceedings of the Forty-Sixth Annual ACM Symposium on Theory of Computing (STOC '14).
\newblock \href{https://dx.doi.org/10.1145/2591796.2591854}{Page 283–292}.
\newblock New York, NY, USA~(2014). Association for Computing Machinery.

\bibitem{berry2015simulating}
Dominic~W. Berry, Andrew~M. Childs, Richard Cleve, Robin Kothari, and Rolando~D. Somma.
\newblock ``Simulating {H}amiltonian dynamics with a truncated {T}aylor series''.
\newblock \href{https://dx.doi.org/10.1103/PhysRevLett.114.090502}{Phys. Rev. Lett. {\bf 114}, 090502}~(2015).

\bibitem{low2017optimal}
Guang~Hao Low and Isaac~L. Chuang.
\newblock ``Optimal {H}amiltonian simulation by quantum signal processing''.
\newblock \href{https://dx.doi.org/10.1103/PhysRevLett.118.010501}{Phys. Rev. Lett. {\bf 118}, 010501}~(2017).

\bibitem{gilyen2019quantum}
Andr\'{a}s Gily\'{e}n, Yuan Su, Guang~Hao Low, and Nathan Wiebe.
\newblock ``Quantum singular value transformation and beyond: exponential improvements for quantum matrix arithmetics''.
\newblock In Proceedings of the 51st Annual ACM SIGACT Symposium on Theory of Computing (STOC '19).
\newblock \href{https://dx.doi.org/10.1145/3313276.3316366}{Page 193–204}.
\newblock Phoenix, AZ, USA~(2019). Association for Computing Machinery.

\bibitem{beverland2022assessing}
Michael~E. Beverland, Prakash Murali, Matthias Troyer, Krysta~M. Svore, Torsten Hoefler, Vadym Kliuchnikov, Guang~Hao Low, Mathias Soeken, Aarthi Sundaram, and Alexander Vaschillo.
\newblock ``Assessing requirements to scale to practical quantum advantage''~(2022).
\newblock  \href{http://arxiv.org/abs/2211.07629}{arXiv:2211.07629}.

\bibitem{aharonov2003adiabatic}
Dorit Aharonov and Amnon Ta-Shma.
\newblock ``Adiabatic quantum state generation and statistical zero knowledge''.
\newblock In Proceedings of the Thirty-Fifth Annual ACM Symposium on Theory of Computing (STOC '03).
\newblock \href{https://dx.doi.org/10.1145/780542.780546}{Page 20–29}.
\newblock San Diego, CA, USA~(2003). Association for Computing Machinery.

\bibitem{childs2011simulating}
Andrew~M. Childs and Robin Kothari.
\newblock ``Simulating sparse {H}amiltonians with star decompositions''.
\newblock In Wim van Dam, Vivien~M. Kendon, and Simone Severini, editors, Theory of Quantum Computation, Communication, and Cryptography.
\newblock \href{https://dx.doi.org/10.1007/978-3-642-18073-6_8}{Pages 94--103}.
\newblock Berlin, Heidelberg~(2011). Springer Berlin Heidelberg.

\bibitem{giovannetti2008quantum}
Vittorio Giovannetti, Seth Lloyd, and Lorenzo Maccone.
\newblock ``Quantum random access memory''.
\newblock \href{https://dx.doi.org/10.1103/PhysRevLett.100.160501}{Phys. Rev. Lett. {\bf 100}, 160501}~(2008).

\bibitem{low2018hamiltonian}
Guang~Hao Low and Nathan Wiebe.
\newblock ``Hamiltonian simulation in the interaction picture''~(2019).
\newblock  \href{http://arxiv.org/abs/1805.00675}{arXiv:1805.00675}.

\bibitem{camps2024explicit}
Daan Camps, Lin Lin, Roel Van~Beeumen, and Chao Yang.
\newblock ``Explicit quantum circuits for block encodings of certain sparse matrices''.
\newblock \href{https://dx.doi.org/10.1137/22M1484298}{SIAM Journal on Matrix Analysis and Applications {\bf 45}, 801--827}~(2024).

\bibitem{Qiskit}
{Qiskit contributors}.
\newblock ``Qiskit: An open-source framework for quantum computing''~(2023).

\bibitem{sorensen1999quantum}
Anders S\o{}rensen and Klaus M\o{}lmer.
\newblock ``Quantum computation with ions in thermal motion''.
\newblock \href{https://dx.doi.org/10.1103/PhysRevLett.82.1971}{Phys. Rev. Lett. {\bf 82}, 1971--1974}~(1999).

\bibitem{caldwell2018parametrically}
SA~Caldwell, N~Didier, CA~Ryan, EA~Sete, A~Hudson, P~Karalekas, R~Manenti, MP~da~Silva, R~Sinclair, E~Acala, et~al.
\newblock ``Parametrically activated entangling gates using transmon qubits''.
\newblock \href{https://dx.doi.org/10.1103/PhysRevApplied.10.034050}{Phys. Rev. Appl. {\bf 10}, 034050}~(2018).

\bibitem{wallraff2004strong}
Andreas Wallraff, David~I Schuster, Alexandre Blais, Luigi Frunzio, R-S Huang, Johannes Majer, Sameer Kumar, Steven~M Girvin, and Robert~J Schoelkopf.
\newblock ``Strong coupling of a single photon to a superconducting qubit using circuit quantum electrodynamics''.
\newblock \href{https://dx.doi.org/10.1038/nature02851}{Nature {\bf 431}, 162--167}~(2004).

\bibitem{devoret2013superconducting}
Michel~H Devoret and Robert~J Schoelkopf.
\newblock ``Superconducting circuits for quantum information: an outlook''.
\newblock \href{https://dx.doi.org/10.1126/science.1231930}{Science {\bf 339}, 1169--1174}~(2013).

\bibitem{blatt2012quantum}
Rainer Blatt and Christian~F Roos.
\newblock ``Quantum simulations with trapped ions''.
\newblock \href{https://dx.doi.org/10.1038/nphys2252}{Nature Physics {\bf 8}, 277--284}~(2012).

\bibitem{monroe2021programmable}
C.~Monroe, W.~C. Campbell, L.-M. Duan, Z.-X. Gong, A.~V. Gorshkov, P.~W. Hess, R.~Islam, K.~Kim, N.~M. Linke, G.~Pagano, P.~Richerme, C.~Senko, and N.~Y. Yao.
\newblock ``Programmable quantum simulations of spin systems with trapped ions''.
\newblock \href{https://dx.doi.org/10.1103/RevModPhys.93.025001}{Rev. Mod. Phys. {\bf 93}, 025001}~(2021).

\bibitem{Henriet2020quantumcomputing}
Lo{\"{i}}c Henriet, Lucas Beguin, Adrien Signoles, Thierry Lahaye, Antoine Browaeys, Georges-Olivier Reymond, and Christophe Jurczak.
\newblock ``Quantum computing with neutral atoms''.
\newblock \href{https://dx.doi.org/10.22331/q-2020-09-21-327}{{Quantum} {\bf 4}, 327}~(2020).

\bibitem{johnson2011quantum}
Mark~W Johnson, Mohammad~HS Amin, Suzanne Gildert, Trevor Lanting, Firas Hamze, Neil Dickson, Richard Harris, Andrew~J Berkley, Jan Johansson, Paul Bunyk, et~al.
\newblock ``Quantum annealing with manufactured spins''.
\newblock \href{https://dx.doi.org/10.1038/nature10012}{Nature {\bf 473}, 194--198}~(2011).

\bibitem{harris2010experimental}
R.~Harris, M.~W. Johnson, T.~Lanting, A.~J. Berkley, J.~Johansson, P.~Bunyk, E.~Tolkacheva, E.~Ladizinsky, N.~Ladizinsky, T.~Oh, F.~Cioata, I.~Perminov, P.~Spear, C.~Enderud, C.~Rich, S.~Uchaikin, M.~C. Thom, E.~M. Chapple, J.~Wang, B.~Wilson, M.~H.~S. Amin, N.~Dickson, K.~Karimi, B.~Macready, C.~J.~S. Truncik, and G.~Rose.
\newblock ``Experimental investigation of an eight-qubit unit cell in a superconducting optimization processor''.
\newblock \href{https://dx.doi.org/10.1103/PhysRevB.82.024511}{Phys. Rev. B {\bf 82}, 024511}~(2010).

\bibitem{ebadi2021quantum}
Sepehr Ebadi, Tout~T Wang, Harry Levine, Alexander Keesling, Giulia Semeghini, Ahmed Omran, Dolev Bluvstein, Rhine Samajdar, Hannes Pichler, Wen~Wei Ho, et~al.
\newblock ``Quantum phases of matter on a 256-atom programmable quantum simulator''.
\newblock \href{https://dx.doi.org/10.1038/s41586-021-03582-4}{Nature {\bf 595}, 227--232}~(2021).

\bibitem{ebadi2022quantum}
S.~Ebadi, A.~Keesling, M.~Cain, T.~T. Wang, H.~Levine, D.~Bluvstein, G.~Semeghini, A.~Omran, J.-G. Liu, R.~Samajdar, X.-Z. Luo, B.~Nash, X.~Gao, B.~Barak, E.~Farhi, S.~Sachdev, N.~Gemelke, L.~Zhou, S.~Choi, H.~Pichler, S.-T. Wang, M.~Greiner, V.~Vuletić, and M.~D. Lukin.
\newblock ``Quantum optimization of maximum independent set using {R}ydberg atom arrays''.
\newblock \href{https://dx.doi.org/10.1126/science.abo6587}{Science {\bf 376}, 1209--1215}~(2022).

\bibitem{molmer1999multiparticle}
Klaus M\o{}lmer and Anders S\o{}rensen.
\newblock ``Multiparticle entanglement of hot trapped ions''.
\newblock \href{https://dx.doi.org/10.1103/PhysRevLett.82.1835}{Phys. Rev. Lett. {\bf 82}, 1835--1838}~(1999).

\bibitem{solano1999deterministic}
E.~Solano, R.~L. de~Matos~Filho, and N.~Zagury.
\newblock ``Deterministic {B}ell states and measurement of the motional state of two trapped ions''.
\newblock \href{https://dx.doi.org/10.1103/PhysRevA.59.R2539}{Phys. Rev. A {\bf 59}, R2539--R2543}~(1999).

\bibitem{childs2021theory}
Andrew~M. Childs, Yuan Su, Minh~C. Tran, Nathan Wiebe, and Shuchen Zhu.
\newblock ``Theory of {T}rotter error with commutator scaling''.
\newblock \href{https://dx.doi.org/10.1103/PhysRevX.11.011020}{Phys. Rev. X {\bf 11}, 011020}~(2021).

\bibitem{childs2017quantum}
Andrew~M. Childs, Robin Kothari, and Rolando~D. Somma.
\newblock ``Quantum algorithm for systems of linear equations with exponentially improved dependence on precision''.
\newblock \href{https://dx.doi.org/10.1137/16M1087072}{SIAM Journal on Computing {\bf 46}, 1920--1950}~(2017).

\bibitem{zhang2024circuit}
Xiao-Ming Zhang and Xiao Yuan.
\newblock ``Circuit complexity of quantum access models for encoding classical data''.
\newblock \href{https://dx.doi.org/10.1038/s41534-024-00835-8}{npj Quantum Information {\bf 10}, 42}~(2024).

\bibitem{sunderhauf2024block}
Christoph S{\"{u}}nderhauf, Earl Campbell, and Joan Camps.
\newblock ``Block-encoding structured matrices for data input in quantum computing''.
\newblock \href{https://dx.doi.org/10.22331/q-2024-01-11-1226}{{Quantum} {\bf 8}, 1226}~(2024).

\bibitem{kikuchi2023realization}
Yuta Kikuchi, Conor Mc~Keever, Luuk Coopmans, Michael Lubasch, and Marcello Benedetti.
\newblock ``Realization of quantum signal processing on a noisy quantum computer''.
\newblock \href{https://dx.doi.org/10.1038/s41534-023-00762-0}{npj Quantum Information {\bf 9}, 93}~(2023).

\bibitem{sawaya2020resource}
Nicolas~PD Sawaya, Tim Menke, Thi~Ha Kyaw, Sonika Johri, Al{\'a}n Aspuru-Guzik, and Gian~Giacomo Guerreschi.
\newblock ``Resource-efficient digital quantum simulation of $d$-level systems for photonic, vibrational, and spin-$s$ {H}amiltonians''.
\newblock \href{https://dx.doi.org/10.1038/s41534-020-0278-0}{npj Quantum Information {\bf 6}, 49}~(2020).

\bibitem{sawaya2022mat2qubit}
Nicolas~PD Sawaya.
\newblock ``mat2qubit: A lightweight pythonic package for qubit encodings of vibrational, bosonic, graph coloring, routing, scheduling, and general matrix problems''~(2022).
\newblock  \href{http://arxiv.org/abs/2205.09776}{arXiv:2205.09776}.

\bibitem{bravo2023variational}
Carlos Bravo-Prieto, Ryan LaRose, Marco Cerezo, Yigit Subasi, Lukasz Cincio, and Patrick~J Coles.
\newblock ``Variational quantum linear solver''.
\newblock \href{https://dx.doi.org/10.22331/q-2023-11-22-1188}{Quantum {\bf 7}, 1188}~(2023).

\bibitem{xu2021variational}
Xiaosi Xu, Jinzhao Sun, Suguru Endo, Ying Li, Simon~C Benjamin, and Xiao Yuan.
\newblock ``Variational algorithms for linear algebra''.
\newblock \href{https://dx.doi.org/10.1016/j.scib.2021.06.023}{Science Bulletin {\bf 66}, 2181--2188}~(2021).

\bibitem{wan2022randomized}
Kianna Wan, Mario Berta, and Earl~T. Campbell.
\newblock ``Randomized quantum algorithm for statistical phase estimation''.
\newblock \href{https://dx.doi.org/10.1103/PhysRevLett.129.030503}{Phys. Rev. Lett. {\bf 129}, 030503}~(2022).

\bibitem{wang2024qubit}
Samson Wang, Sam McArdle, and Mario Berta.
\newblock ``Qubit-efficient randomized quantum algorithms for linear algebra''.
\newblock \href{https://dx.doi.org/10.1103/PRXQuantum.5.020324}{PRX Quantum {\bf 5}, 020324}~(2024).

\bibitem{marcuzzi2017facilitation}
Matteo Marcuzzi, Ji{\v{r}}{\'\i} Min{\'a}{\v{r}}, Daniel Barredo, Sylvain De~L{\'e}s{\'e}leuc, Henning Labuhn, Thierry Lahaye, Antoine Browaeys, Emanuele Levi, and Igor Lesanovsky.
\newblock ``Facilitation dynamics and localization phenomena in {R}ydberg lattice gases with position disorder''.
\newblock \href{https://dx.doi.org/10.1103/PhysRevLett.118.063606}{Phys. Rev. Lett. {\bf 118}, 063606}~(2017).

\bibitem{ostmann2019synthetic}
Maike Ostmann, Matteo Marcuzzi, Ji{\v{r}}{\'\i} Min{\'a}{\v{r}}, and Igor Lesanovsky.
\newblock ``Synthetic lattices, flat bands and localization in {R}ydberg quantum simulators''.
\newblock \href{https://dx.doi.org/10.1088/2058-9565/aaf29d}{Quantum Science and Technology {\bf 4}, 02LT01}~(2019).

\bibitem{liu2022localization}
Fangli Liu, Zhi-Cheng Yang, Przemyslaw Bienias, Thomas Iadecola, and Alexey~V. Gorshkov.
\newblock ``Localization and criticality in antiblockaded two-dimensional {R}ydberg atom arrays''.
\newblock \href{https://dx.doi.org/10.1103/PhysRevLett.128.013603}{Phys. Rev. Lett. {\bf 128}, 013603}~(2022).

\bibitem{sawaya2020connectivity}
Nicolas P.~D. Sawaya, Gian~Giacomo Guerreschi, and Adam Holmes.
\newblock ``On connectivity-dependent resource requirements for digital quantum simulation of $d$-level particles''.
\newblock In 2020 IEEE International Conference on Quantum Computing and Engineering (QCE).
\newblock \href{https://dx.doi.org/10.1109/QCE49297.2020.00031}{Pages 180--190}.
\newblock ~(2020).

\bibitem{kyaw2021quantum}
Thi~Ha Kyaw, Tim Menke, Sukin Sim, Abhinav Anand, Nicolas~P.D. Sawaya, William~D. Oliver, Gian~Giacomo Guerreschi, and Al\'an Aspuru-Guzik.
\newblock ``Quantum computer-aided design: Digital quantum simulation of quantum processors''.
\newblock \href{https://dx.doi.org/10.1103/PhysRevApplied.16.044042}{Phys. Rev. Appl. {\bf 16}, 044042}~(2021).

\bibitem{kyaw2023variational}
Thi~Ha Kyaw, Micheline~B Soley, Brandon Allen, Paul Bergold, Chong Sun, Victor~S Batista, and Alán Aspuru-Guzik.
\newblock ``Boosting quantum amplitude exponentially in variational quantum algorithms''.
\newblock \href{https://dx.doi.org/10.1088/2058-9565/acf4ba}{Quantum Science and Technology {\bf 9}, 01LT01}~(2023).

\bibitem{christandl2005perfect}
Matthias Christandl, Nilanjana Datta, Tony~C. Dorlas, Artur Ekert, Alastair Kay, and Andrew~J. Landahl.
\newblock ``Perfect transfer of arbitrary states in quantum spin networks''.
\newblock \href{https://dx.doi.org/10.1103/PhysRevA.71.032312}{Phys. Rev. A {\bf 71}, 032312}~(2005).

\bibitem{albanese2004mirror}
Claudio Albanese, Matthias Christandl, Nilanjana Datta, and Artur Ekert.
\newblock ``Mirror inversion of quantum states in linear registers''.
\newblock \href{https://dx.doi.org/10.1103/PhysRevLett.93.230502}{Phys. Rev. Lett. {\bf 93}, 230502}~(2004).

\bibitem{chancellor2019domain}
Nicholas Chancellor.
\newblock ``Domain wall encoding of discrete variables for quantum annealing and {QAOA}''.
\newblock \href{https://dx.doi.org/10.1088/2058-9565/ab33c2}{Quantum Science and Technology {\bf 4}, 045004}~(2019).

\bibitem{hadfield2019quantum}
Stuart Hadfield, Zhihui Wang, Bryan O’gorman, Eleanor~G Rieffel, Davide Venturelli, and Rupak Biswas.
\newblock ``From the quantum approximate optimization algorithm to a quantum alternating operator ansatz''.
\newblock \href{https://dx.doi.org/10.3390/a12020034}{Algorithms {\bf 12}, 34}~(2019).

\bibitem{berry2020time}
Dominic~W Berry, Andrew~M Childs, Yuan Su, Xin Wang, and Nathan Wiebe.
\newblock ``Time-dependent {H}amiltonian simulation with ${L}^1$-norm scaling''.
\newblock \href{https://dx.doi.org/10.22331/q-2020-04-20-254}{Quantum {\bf 4}, 254}~(2020).

\bibitem{bravyi2017complexity}
Sergey Bravyi and Matthew Hastings.
\newblock ``On complexity of the quantum {I}sing model''.
\newblock \href{https://dx.doi.org/10.1007/s00220-016-2787-4}{Communications in Mathematical Physics {\bf 349}, 1--45}~(2017).

\bibitem{cubitt2018universal}
Toby~S. Cubitt, Ashley Montanaro, and Stephen Piddock.
\newblock ``Universal quantum {H}amiltonians''.
\newblock \href{https://dx.doi.org/10.1073/pnas.1804949115}{Proceedings of the National Academy of Sciences {\bf 115}, 9497--9502}~(2018).

\bibitem{zhou2021strongly}
Leo Zhou and Dorit Aharonov.
\newblock ``Strongly universal {H}amiltonian simulators''~(2021).
\newblock  \href{http://arxiv.org/abs/2102.02991}{arXiv:2102.02991}.

\bibitem{bravyi2007complexity}
Sergey Bravyi, David~P. DiVincenzo, Roberto~I. Oliveira, and Barbara~M. Terhal.
\newblock ``The complexity of stoquastic local {H}amiltonian problems''~(2007).
\newblock  \href{http://arxiv.org/abs/quant-ph/0606140}{arXiv:quant-ph/0606140}.

\bibitem{kempe2006complexity}
Julia Kempe, Alexei Kitaev, and Oded Regev.
\newblock ``The complexity of the local {H}amiltonian problem''.
\newblock \href{https://dx.doi.org/10.1137/S0097539704445226}{{SIAM} journal on computing {\bf 35}, 1070--1097}~(2006).

\bibitem{oliveira2008complexity}
Roberto Oliveira and Barbara~M. Terhal.
\newblock ``The complexity of quantum spin systems on a two-dimensional square lattice''~(2008).
\newblock  \href{http://arxiv.org/abs/quant-ph/0504050}{arXiv:quant-ph/0504050}.

\bibitem{jordan2008perturbative}
Stephen~P. Jordan and Edward Farhi.
\newblock ``Perturbative gadgets at arbitrary orders''.
\newblock \href{https://dx.doi.org/10.1103/PhysRevA.77.062329}{Phys. Rev. A {\bf 77}, 062329}~(2008).

\bibitem{ionq_api}
IonQ.
\newblock ``{IonQ} {API} documentation''~(2023).
\newblock \url{https://docs.ionq.com/}.

\bibitem{wurtz2023aquila}
Jonathan Wurtz, Alexei Bylinskii, Boris Braverman, Jesse Amato-Grill, Sergio~H. Cantu, Florian Huber, Alexander Lukin, Fangli Liu, Phillip Weinberg, John Long, Sheng-Tao Wang, Nathan Gemelke, and Alexander Keesling.
\newblock ``Aquila: Quera's 256-qubit neutral-atom quantum computer (version 1.0)''~(2023).
\newblock  \href{http://arxiv.org/abs/2306.11727}{arXiv:2306.11727}.

\bibitem{wright2019benchmarking}
Kenneth Wright, Kristin~M Beck, Sea Debnath, JM~Amini, Y~Nam, N~Grzesiak, J-S Chen, NC~Pisenti, M~Chmielewski, C~Collins, et~al.
\newblock ``Benchmarking an 11-qubit quantum computer''.
\newblock \href{https://dx.doi.org/10.1038/s41467-019-13534-2}{Nature communications {\bf 10}, 5464}~(2019).

\bibitem{ionq_native_gates}
IonQ.
\newblock ``Getting started with native gates''~(2023).
\newblock \url{https://ionq.com/docs/getting-started-with-native-gates}.

\bibitem{childs2010relationship}
Andrew~M Childs.
\newblock ``On the relationship between continuous-and discrete-time quantum walk''.
\newblock \href{https://dx.doi.org/10.1007/s00220-009-0930-1}{Communications in Mathematical Physics {\bf 294}, 581--603}~(2010).

\bibitem{childs2003exponential}
Andrew~M. Childs, Richard Cleve, Enrico Deotto, Edward Farhi, Sam Gutmann, and Daniel~A. Spielman.
\newblock ``Exponential algorithmic speedup by a quantum walk''.
\newblock In Proceedings of the Thirty-Fifth Annual ACM Symposium on Theory of Computing (STOC '03).
\newblock \href{https://dx.doi.org/10.1145/780542.780552}{Page 59–68}.
\newblock San Diego, CA, USA~(2003). Association for Computing Machinery.

\bibitem{shi2020quantum}
Zi-Yu Shi, Hao Tang, Zhen Feng, Yao Wang, Zhan-Ming Li, Jun Gao, Yi-Jun Chang, Tian-Yu Wang, Jian-Peng Dou, Zhe-Yong Zhang, Zhi-Qiang Jiao, Wen-Hao Zhou, and Xian-Min Jin.
\newblock ``Quantum fast hitting on glued trees mapped on a photonic chip''.
\newblock \href{https://dx.doi.org/10.1364/OPTICA.388451}{Optica {\bf 7}, 613--618}~(2020).

\bibitem{tang2018experimental}
Hao Tang, Carlo Di~Franco, Zi-Yu Shi, Tian-Shen He, Zhen Feng, Jun Gao, Ke~Sun, Zhan-Ming Li, Zhi-Qiang Jiao, Tian-Yu Wang, et~al.
\newblock ``Experimental quantum fast hitting on hexagonal graphs''.
\newblock \href{https://dx.doi.org/10.1038/s41566-018-0282-5}{Nature Photonics {\bf 12}, 754--758}~(2018).

\bibitem{du2003experimental}
Jiangfeng Du, Hui Li, Xiaodong Xu, Mingjun Shi, Jihui Wu, Xianyi Zhou, and Rongdian Han.
\newblock ``Experimental implementation of the quantum random-walk algorithm''.
\newblock \href{https://dx.doi.org/10.1103/PhysRevA.67.042316}{Phys. Rev. A {\bf 67}, 042316}~(2003).

\bibitem{preiss2015strongly}
Philipp~M Preiss, Ruichao Ma, M~Eric Tai, Alexander Lukin, Matthew Rispoli, Philip Zupancic, Yoav Lahini, Rajibul Islam, and Markus Greiner.
\newblock ``Strongly correlated quantum walks in optical lattices''.
\newblock \href{https://dx.doi.org/10.1126/science.1260364}{Science {\bf 347}, 1229--1233}~(2015).

\bibitem{young2022tweezer}
Aaron~W. Young, William~J. Eckner, Nathan Schine, Andrew~M. Childs, and Adam~M. Kaufman.
\newblock ``Tweezer-programmable 2{D} quantum walks in a {H}ubbard-regime lattice''.
\newblock \href{https://dx.doi.org/10.1126/science.abo0608}{Science {\bf 377}, 885--889}~(2022).

\bibitem{gong2021quantum}
Ming Gong, Shiyu Wang, Chen Zha, Ming-Cheng Chen, He-Liang Huang, Yulin Wu, Qingling Zhu, Youwei Zhao, Shaowei Li, Shaojun Guo, et~al.
\newblock ``Quantum walks on a programmable two-dimensional 62-qubit superconducting processor''.
\newblock \href{https://dx.doi.org/10.1126/science.abg7812}{Science {\bf 372}, 948--952}~(2021).

\bibitem{qu2022experimental}
Dengke Qu, Lei Xiao, Kunkun Wang, Xiang Zhan, and Peng Xue.
\newblock ``Experimental investigation of equivalent {L}aplacian and adjacency quantum walks on irregular graphs''.
\newblock \href{https://dx.doi.org/10.1103/PhysRevA.105.062448}{Phys. Rev. A {\bf 105}, 062448}~(2022).

\bibitem{qiang2021implementing}
Xiaogang Qiang, Yizhi Wang, Shichuan Xue, Renyou Ge, Lifeng Chen, Yingwen Liu, Anqi Huang, Xiang Fu, Ping Xu, Teng Yi, et~al.
\newblock ``Implementing graph-theoretic quantum algorithms on a silicon photonic quantum walk processor''.
\newblock \href{https://dx.doi.org/10.1126/sciadv.abb8375}{Science Advances {\bf 7}, eabb8375}~(2021).

\bibitem{wang2022large}
Yizhi Wang, Yingwen Liu, Junwei Zhan, Shichuan Xue, Yuzhen Zheng, Ru~Zeng, Zhihao Wu, Zihao Wang, Qilin Zheng, Dongyang Wang, et~al.
\newblock ``Large-scale full-programmable quantum walk and its applications''~(2022).
\newblock  \href{http://arxiv.org/abs/2208.13186}{arXiv:2208.13186}.

\bibitem{huerta2020quantum}
C~Huerta~Alderete, Shivani Singh, Nhung~H Nguyen, Daiwei Zhu, Radhakrishnan Balu, Christopher Monroe, CM~Chandrashekar, and Norbert~M Linke.
\newblock ``Quantum walks and {D}irac cellular automata on a programmable trapped-ion quantum computer''.
\newblock \href{https://dx.doi.org/10.1038/s41467-020-17519-4}{Nature communications {\bf 11}, 3720}~(2020).

\bibitem{grover1997quantum}
Lov~K. Grover.
\newblock ``Quantum mechanics helps in searching for a needle in a haystack''.
\newblock \href{https://dx.doi.org/10.1103/PhysRevLett.79.325}{Phys. Rev. Lett. {\bf 79}, 325--328}~(1997).

\bibitem{childs2004spatial}
Andrew~M. Childs and Jeffrey Goldstone.
\newblock ``Spatial search by quantum walk''.
\newblock \href{https://dx.doi.org/10.1103/PhysRevA.70.022314}{Phys. Rev. A {\bf 70}, 022314}~(2004).

\bibitem{benedetti2021quantum}
Claudia Benedetti, Dario Tamascelli, Matteo~G.A. Paris, and Andrea Crespi.
\newblock ``Quantum spatial search in two-dimensional waveguide arrays''.
\newblock \href{https://dx.doi.org/10.1103/PhysRevApplied.16.054036}{Phys. Rev. Applied {\bf 16}, 054036}~(2021).

\bibitem{qu2022deterministic}
Dengke Qu, Samuel Marsh, Kunkun Wang, Lei Xiao, Jingbo Wang, and Peng Xue.
\newblock ``Deterministic search on star graphs via quantum walks''.
\newblock \href{https://dx.doi.org/10.1103/PhysRevLett.128.050501}{Phys. Rev. Lett. {\bf 128}, 050501}~(2022).

\bibitem{babbush2018low}
Ryan Babbush, Nathan Wiebe, Jarrod McClean, James McClain, Hartmut Neven, and Garnet Kin-Lic Chan.
\newblock ``Low-depth quantum simulation of materials''.
\newblock \href{https://dx.doi.org/10.1103/PhysRevX.8.011044}{Phys. Rev. X {\bf 8}, 011044}~(2018).

\bibitem{kassal2008polynomial}
Ivan Kassal, Stephen~P. Jordan, Peter~J. Love, Masoud Mohseni, and Alán Aspuru-Guzik.
\newblock ``Polynomial-time quantum algorithm for the simulation of chemical dynamics''.
\newblock \href{https://dx.doi.org/10.1073/pnas.0808245105}{Proceedings of the National Academy of Sciences {\bf 105}, 18681--18686}~(2008).

\bibitem{babbush2019quantum}
Ryan Babbush, Dominic~W Berry, Jarrod~R McClean, and Hartmut Neven.
\newblock ``Quantum simulation of chemistry with sublinear scaling in basis size''.
\newblock \href{https://dx.doi.org/10.1038/s41534-019-0199-y}{npj Quantum Information {\bf 5}, 92}~(2019).

\bibitem{zhang2021quantum}
Chenyi Zhang, Jiaqi Leng, and Tongyang Li.
\newblock ``Quantum algorithms for escaping from saddle points''.
\newblock \href{https://dx.doi.org/10.22331/q-2021-08-20-529}{Quantum {\bf 5}, 529}~(2021).

\bibitem{liu2023quantum}
Yizhou Liu, Weijie~J Su, and Tongyang Li.
\newblock ``On quantum speedups for nonconvex optimization via quantum tunneling walks''.
\newblock \href{https://dx.doi.org/10.22331/q-2023-06-02-1030}{Quantum {\bf 7}, 1030}~(2023).

\bibitem{wiesner1996simulations}
Stephen Wiesner.
\newblock ``Simulations of many-body quantum systems by a quantum computer''~(1996).
\newblock  \href{http://arxiv.org/abs/quant-ph/9603028}{arXiv:quant-ph/9603028}.

\bibitem{zalka1998efficient}
Christof Zalka.
\newblock ``Efficient simulation of quantum systems by quantum computers''.
\newblock \href{https://dx.doi.org/10.1002/(SICI)1521-3978(199811)46:6/8<877::AID-PROP877>3.0.CO;2-A}{Fortschritte der Physik: Progress of Physics {\bf 46}, 877--879}~(1998).

\bibitem{kivlichan2017bounding}
Ian~D Kivlichan, Nathan Wiebe, Ryan Babbush, and Al{\'a}n Aspuru-Guzik.
\newblock ``Bounding the costs of quantum simulation of many-body physics in real space''.
\newblock \href{https://dx.doi.org/10.1088/1751-8121/aa77b8}{Journal of Physics A: Mathematical and Theoretical {\bf 50}, 305301}~(2017).

\bibitem{an2021time}
Dong An, Di~Fang, and Lin Lin.
\newblock ``Time-dependent unbounded {H}amiltonian simulation with vector norm scaling''.
\newblock \href{https://dx.doi.org/10.22331/q-2021-05-26-459}{Quantum {\bf 5}, 459}~(2021).

\bibitem{an2022time}
Dong An, Di~Fang, and Lin Lin.
\newblock ``Time-dependent {H}amiltonian simulation of highly oscillatory dynamics and superconvergence for {S}chr{\"o}dinger equation''.
\newblock \href{https://dx.doi.org/10.22331/q-2022-04-15-690}{Quantum {\bf 6}, 690}~(2022).

\bibitem{childs2022quantum}
Andrew~M Childs, Jiaqi Leng, Tongyang Li, Jin-Peng Liu, and Chenyi Zhang.
\newblock ``Quantum simulation of real-space dynamics''.
\newblock \href{https://dx.doi.org/10.22331/q-2022-11-17-860}{Quantum {\bf 6}, 860}~(2022).

\bibitem{jin2022quantum}
Shi Jin, Xiantao Li, and Nana Liu.
\newblock ``Quantum simulation in the semi-classical regime''.
\newblock \href{https://dx.doi.org/10.22331/q-2022-06-17-739}{Quantum {\bf 6}, 739}~(2022).

\bibitem{chang2022improving}
Chia~Cheng Chang, Kenneth~S. McElvain, Ermal Rrapaj, and Yantao Wu.
\newblock ``Improving {S}chr{\"o}dinger equation implementations with gray code for adiabatic quantum computers''.
\newblock \href{https://dx.doi.org/10.1103/PRXQuantum.3.020356}{PRX Quantum {\bf 3}, 020356}~(2022).

\bibitem{chakraborty2025quantum}
Shantanav Chakraborty, Soumyabrata Hazra, Tongyang Li, Changpeng Shao, Xinzhao Wang, and Yuxin Zhang.
\newblock ``Quantum singular value transformation without block encodings: Near-optimal complexity with minimal ancilla''~(2025).
\newblock  \href{http://arxiv.org/abs/2504.02385}{arXiv:2504.02385}.

\bibitem{davis1970rotation}
Chandler Davis and William~Morton Kahan.
\newblock ``The rotation of eigenvectors by a perturbation. {III}''.
\newblock \href{https://dx.doi.org/10.1137/0707001}{SIAM Journal on Numerical Analysis {\bf 7}, 1--46}~(1970).

\bibitem{bravyi2011schrieffer}
Sergey Bravyi, David~P DiVincenzo, and Daniel Loss.
\newblock ``Schrieffer--{W}olff transformation for quantum many-body systems''.
\newblock \href{https://dx.doi.org/10.1016/j.aop.2011.06.004}{Annals of physics {\bf 326}, 2793--2826}~(2011).

\bibitem{knapp2007basic}
Anthony~W Knapp.
\newblock ``Basic real analysis''.
\newblock \href{https://dx.doi.org/10.1007/0-8176-4441-5}{Springer Science \& Business Media}. ~(2007).

\bibitem{childs2019faster}
Andrew~M Childs, Aaron Ostrander, and Yuan Su.
\newblock ``Faster quantum simulation by randomization''.
\newblock \href{https://dx.doi.org/10.22331/q-2019-09-02-182}{Quantum {\bf 3}, 182}~(2019).

\bibitem{arute2020observation}
Frank Arute, Kunal Arya, Ryan Babbush, Dave Bacon, Joseph~C Bardin, Rami Barends, Andreas Bengtsson, Sergio Boixo, Michael Broughton, Bob~B Buckley, et~al.
\newblock ``Observation of separated dynamics of charge and spin in the {F}ermi-{H}ubbard model''~(2020).
\newblock  \href{http://arxiv.org/abs/2010.07965}{arXiv:2010.07965}.

\bibitem{foxen2020demonstrating}
B.~Foxen, C.~Neill, A.~Dunsworth, P.~Roushan, B.~Chiaro, A.~Megrant, J.~Kelly, Zijun Chen, K.~Satzinger, R.~Barends, et~al.
\newblock ``Demonstrating a continuous set of two-qubit gates for near-term quantum algorithms''.
\newblock \href{https://dx.doi.org/10.1103/PhysRevLett.125.120504}{Phys. Rev. Lett. {\bf 125}, 120504}~(2020).

\bibitem{johri2021nearest}
Sonika Johri, Shantanu Debnath, Avinash Mocherla, Alexandros Singk, Anupam Prakash, Jungsang Kim, and Iordanis Kerenidis.
\newblock ``Nearest centroid classification on a trapped ion quantum computer''.
\newblock \href{https://dx.doi.org/10.1038/s41534-021-00456-5}{npj Quantum Information {\bf 7}, 122}~(2021).

\bibitem{mathur2021medical}
Natansh Mathur, Jonas Landman, Yun~Yvonna Li, Martin Strahm, Skander Kazdaghli, Anupam Prakash, and Iordanis Kerenidis.
\newblock ``Medical image classification via quantum neural networks''~(2022).
\newblock  \href{http://arxiv.org/abs/2109.01831}{arXiv:2109.01831}.

\bibitem{sivarajah2020t}
Seyon Sivarajah, Silas Dilkes, Alexander Cowtan, Will Simmons, Alec Edgington, and Ross Duncan.
\newblock ``t|ket⟩: a retargetable compiler for {NISQ} devices''.
\newblock \href{https://dx.doi.org/10.1088/2058-9565/ab8e92}{Quantum Science and Technology {\bf 6}, 014003}~(2020).

\bibitem{aaronson2003quantum}
Scott Aaronson and Andris Ambainis.
\newblock ``Quantum search of spatial regions''.
\newblock In 44th Annual IEEE Symposium on Foundations of Computer Science, 2003. Proceedings.
\newblock \href{https://dx.doi.org/10.1109/SFCS.2003.1238194}{Pages 200--209}.
\newblock ~(2003).

\bibitem{tulsi2008faster}
Avatar Tulsi.
\newblock ``Faster quantum-walk algorithm for the two-dimensional spatial search''.
\newblock \href{https://dx.doi.org/10.1103/PhysRevA.78.012310}{Phys. Rev. A {\bf 78}, 012310}~(2008).

\bibitem{labuhn2016tunable}
Henning Labuhn, Daniel Barredo, Sylvain Ravets, Sylvain De~L{\'e}s{\'e}leuc, Tommaso Macr{\`\i}, Thierry Lahaye, and Antoine Browaeys.
\newblock ``Tunable two-dimensional arrays of single {R}ydberg atoms for realizing quantum {I}sing models''.
\newblock \href{https://dx.doi.org/10.1038/nature18274}{Nature {\bf 534}, 667--670}~(2016).

\bibitem{lienhard2018observing}
Vincent Lienhard, Sylvain de~L\'es\'eleuc, Daniel Barredo, Thierry Lahaye, Antoine Browaeys, Michael Schuler, Louis-Paul Henry, and Andreas~M. L\"auchli.
\newblock ``Observing the space- and time-dependent growth of correlations in dynamically tuned synthetic {I}sing models with antiferromagnetic interactions''.
\newblock \href{https://dx.doi.org/10.1103/PhysRevX.8.021070}{Phys. Rev. X {\bf 8}, 021070}~(2018).

\bibitem{quera_doc}
QuEra~Computing Inc.
\newblock ``{B}loqade documentation''~(2023).
\newblock \href{https://queracomputing.github.io/Bloqade.jl/stable/}{Bloqade.jl}.

\end{thebibliography}

\newpage
\appendix
\noindent{\huge \textbf{Appendices}}

\section{Quantum simulation using Hamiltonian embedding} 
\paragraph{Notation.}
Let $\H$ be a Hilbert space, and $A\colon \H \to \H$ be a linear operator over $\H$. Let $\S$ be a subspace of $\H$, and $\S^\perp$ be the orthogonal complement of $\S$ such that $\H = \S \oplus \S^\perp$. We denote $P_\S$ (or $P_{\S^\perp}$) as the projection onto $\S$ (or $\S^\perp$). We write $A|_{\S}\coloneqq P_\S A P_\S$ as the restriction of $A$ in the subspace $\S$. $\Herm(\C^n)$ represents the space of all $n$-by-$n$ complex Hermitian matrices.

\subsection{Proof of \texorpdfstring{\thm{main}}{Theorem~\ref*{thm:main}}}\label{append:proof-main}

\MainThm*

\begin{proof}
We define $\mathscr{E}(t) \coloneqq e^{-iHt} - U^\dagger e^{-iHt} U$, and it follows that
\begin{align}
   \dot{\mathscr{E}}(t) &= (-iH)e^{-iHt} - U^\dagger (-i H) e^{-i H t} U = (-iH)\mathscr{E}(t) - i\left[H - \tilde{H}\right]U^\dagger e^{-i H t} U,
\end{align}
where $\tilde{H} = U^\dagger H U$. By the variation-of-parameter formula, we have
\begin{align}
   \mathscr{E}(t) = -i\int^t_0 e^{-iH(t-\tau)} \left[H - \tilde{H}\right]e^{-i \tilde{H} \tau} ~\d \tau,
\end{align}
and it follows that
\begin{align*}
   \|P_{\S}\mathscr{E}(t)P_{\S}\| \le \|(H - \tilde{H})P_{\S}\|t = \|[H, U]P_{\S}\|t.
\end{align*}

We denote $H_0 = P_\S \tilde{H}P_\S$. By the definition of Hamiltonian embedding, $\Tilde{H}$ is block-diagonal and $\|H_0 - A\| \le \epsilon$. Then, we have
\begin{align*}
   \|P_\S U^\dagger e^{-iHt} U P_\S - e^{-iAt}\| = \|e^{-i H_0 t} - e^{-i A t}\| \le \|H_0 - A\| t \le \epsilon t.
\end{align*}
It follows from the triangle inequality that
\begin{align}
    \|\left(e^{-iHt}\right)\Big|_\S - e^{-iAt} \| \le \|P_\S \mathscr{E}(t) P_\S\| + \|P_\S U^\dagger e^{-iHt} U P_\S-e^{-iAt}\| \le \left(\|[H, U]P_{\S}\| + \epsilon\right)t.
\end{align}
Note that $\|[H, U]\| = \|[H, I] - [H, (I-U)]\| \le 2 \eta \|H\|$, which implies the error bound \eqn{error_bound}.
\end{proof}

\subsection{Rules for building Hamiltonian embeddings}\label{append:building}

In this section, we introduce several basic rules for building more sophisticated Hamiltonian embeddings from existing ones.

\begin{lemma}[Addition]\label{lem:addition}
    Let $A_1, A_2 \in \Herm(\C^n)$. For $j = 1,2$, we suppose that $H_j$ is a $(q,\eta, \epsilon_j)$-embedding of $A_j$ with unitary transformation $U$ and embedding subspace $\S$. Then, $H = H_1 + H_2$ is a $(q, \eta, \epsilon_1+\epsilon_2)$-embedding of $A = A_1 + A_2$ with the same unitary transformation $U$ and embedding subspace $\S$.  
\end{lemma}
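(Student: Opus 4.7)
The plan is to verify the three conditions in \defn{embedding} directly, exploiting the fact that both embeddings share the same unitary $U$ and embedding subspace $\S$. The setup is already essentially linear in $H_j$, so this should be a short triangle-inequality argument with no real obstacle.

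First, I would check block-diagonality. Since $P_\S (U^\dagger H_j U) P_{\S^\perp} = 0$ for $j=1,2$ by hypothesis, linearity of conjugation and projection gives
\begin{equation*}
P_\S \bigl(U^\dagger (H_1+H_2) U\bigr) P_{\S^\perp} = P_\S (U^\dagger H_1 U) P_{\S^\perp} + P_\S (U^\dagger H_2 U) P_{\S^\perp} = 0,
\end{equation*}
so $U^\dagger H U$ is block-diagonal with respect to $\S$ and $\S^\perp$. Next, the bound $\|I - U\| \le \eta$ is inherited verbatim from the hypothesis since the same $U$ is used.

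Finally, for the restriction bound, I would write
\begin{equation*}
(U^\dagger H U)|_\S = P_\S U^\dagger (H_1+H_2) U P_\S = (U^\dagger H_1 U)|_\S + (U^\dagger H_2 U)|_\S,
\end{equation*}
so by the triangle inequality,
\begin{equation*}
\bigl\|(U^\dagger H U)|_\S - (A_1+A_2)\bigr\| \le \bigl\|(U^\dagger H_1 U)|_\S - A_1\bigr\| + \bigl\|(U^\dagger H_2 U)|_\S - A_2\bigr\| \le \epsilon_1 + \epsilon_2.
\end{equation*}
This verifies all three conditions with parameters $(q, \eta, \epsilon_1+\epsilon_2)$, completing the proof. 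The only subtlety worth flagging is that the lemma requires the two embeddings to share $U$ and $\S$; without this matching assumption, the sum need not be block-diagonal in any common subspace, so the hypothesis is essential rather than cosmetic.
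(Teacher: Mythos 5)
Your proposal is correct and follows essentially the same route as the paper's proof: linearity gives block-diagonality of $U^\dagger(H_1+H_2)U$, the bound on $\|I-U\|$ carries over since the unitary is shared, and the triangle inequality yields the error $\epsilon_1+\epsilon_2$. Your explicit remark that the shared $U$ and $\S$ are essential is a fine clarification but not a departure from the paper's argument.
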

\begin{proof}
    Since the unitary $U$ block-diagonalizes both $H_1$ and $H_2$, it also block-diagonalizes the $H = H_1+H_2$. Moreover, by the triangle inequality, $\|(U^\dagger H U)_\S - (A_1 + A_2)\| \le \|(U^\dagger H_1 U)_\S - A_1\|+\|(U^\dagger H_2 U)_\S - A_2\| \le \epsilon_1+\epsilon_2$. 
\end{proof}

\begin{lemma}[Multiplication]\label{lem:multiplication}
    Let $H$ be a $(q,\eta,\epsilon)$-embedding of $A$ with unitary transformation $U$ and embedding subspace $\S$. Then, for any real number $\alpha$, $\alpha H$ is a $(q,\eta,|\alpha| \epsilon)$-embedding of $\alpha A$ with the same unitary transformation $U$ and embedding subspace $\S$.
\end{lemma}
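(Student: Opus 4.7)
The plan is to verify the three defining properties of Hamiltonian embedding in \defn{embedding} directly, using the same unitary $U$ and embedding subspace $\S$ furnished by the hypothesis that $H$ is a $(q,\eta,\epsilon)$-embedding of $A$. Since multiplication by a real scalar $\alpha$ is a linear operation that commutes with every other linear operation involved (projections, conjugation by $U$, operator norm up to $|\alpha|$), each verification reduces to scaling the corresponding inequality for $(H, A)$.

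First, I would check the block-diagonal property: compute $P_\S (U^\dagger (\alpha H) U) P_{\S^\perp} = \alpha \cdot P_\S (U^\dagger H U) P_{\S^\perp} = 0$, using the hypothesis that the same expression vanishes for $H$. Second, the bound $\|I - U\| \le \eta$ is inherited verbatim since we are reusing the same $U$. Third, for the restriction error, I would compute
\begin{align*}
\bigl\|(U^\dagger (\alpha H) U)|_\S - \alpha A\bigr\| = |\alpha| \cdot \bigl\|(U^\dagger H U)|_\S - A\bigr\| \le |\alpha|\epsilon,
\end{align*}
invoking homogeneity of the operator norm and the hypothesized bound on $\|(U^\dagger H U)|_\S - A\|$.

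There is no substantive obstacle here; the lemma is a direct consequence of real-linearity of conjugation and projection together with absolute homogeneity of the operator norm. The main point worth emphasizing is simply that the parameters $q$ and $\eta$ are unchanged (since the Hilbert space and the unitary $U$ are unchanged), while only the error parameter $\epsilon$ is rescaled by $|\alpha|$. This is the natural companion to the addition rule in \lem{addition} and together they establish that the set of $(q,\eta,\cdot)$-embeddings with a fixed $U$ and $\S$ behaves linearly, which is essential for the more elaborate composition and tensor-product rules that follow.
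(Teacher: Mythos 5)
Your proposal is correct and follows essentially the same route as the paper's own proof, which simply invokes linearity to get $\|(U^\dagger \alpha H U)|_\S - \alpha A\| = |\alpha|\,\|(U^\dagger H U)|_\S - A\| \le |\alpha|\epsilon$; your additional explicit checks of the block-diagonality of $U^\dagger(\alpha H)U$ and the unchanged bound on $\|I-U\|$ are a slightly more complete write-up of the same argument.
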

\begin{proof}
    Due to the linearity of matrix multiplication, we have that $$\|(U^\dagger \alpha H U)_\S - \alpha A\| = |\alpha| \|(U^\dagger H U)_\S - A\| \le |\alpha| \epsilon.$$
\end{proof}

\begin{lemma}[Composition]\label{lem:composition}
    For $j = 1,2$, let $H_j$ be a $(q_j, \eta_j, \epsilon_j)$-embedding of $A_j$ with unitary transformation $U_j$ and embedding subspace $\S_j$, respectively. Then, $H = H_1\otimes I + I \otimes H_2$ is a $(q_1+q_2,\eta_1+\eta_2,\epsilon_1+\epsilon_2)$-embedding of $A = A_1\otimes I + I\otimes A_2$ with unitary transformation $U = U_1\otimes U_2$ and embedding subspace $\S = \S_1\otimes \S_2$.
\end{lemma}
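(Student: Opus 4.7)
The plan is to check the three conditions in \defn{embedding} for the proposed triple $(U,\S,q_1+q_2)$, reusing the hypothesis that $U_j$ block-diagonalizes $H_j$ with respect to $\S_j \oplus \S_j^\perp$. A useful preliminary observation is that since both $U_j^\dagger H_j U_j$ are block-diagonal, the conjugated Hamiltonian factors cleanly as
\begin{align}
U^\dagger H U \;=\; (U_1^\dagger H_1 U_1)\otimes I \;+\; I\otimes (U_2^\dagger H_2 U_2),
\end{align}
and the composite projector splits as $P_\S = P_{\S_1}\otimes P_{\S_2}$.

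First I would verify block-diagonality, i.e., $P_\S (U^\dagger H U) P_{\S^\perp} = 0$. Each summand $(U_j^\dagger H_j U_j)$ preserves $\S_j$ and $\S_j^\perp$ separately by hypothesis, so each term in the decomposition above preserves $\S = \S_1\otimes\S_2$; hence so does their sum. Second, I would bound $\|I-U\|$ via the telescoping identity
\begin{align}
I\otimes I - U_1\otimes U_2 \;=\; (I-U_1)\otimes I \;+\; U_1\otimes (I - U_2),
\end{align}
which, combined with unitarity $\|U_1\|=1$ and submultiplicativity of the operator norm on tensor products, yields $\|I - U\|\le \|I-U_1\|+\|I-U_2\|\le \eta_1+\eta_2$.

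The third and most delicate step is bounding $\|(U^\dagger H U)|_\S - A\|\le\epsilon_1+\epsilon_2$. Using $P_\S = P_{\S_1}\otimes P_{\S_2}$ and the split of $U^\dagger H U$, I obtain
\begin{align}
(U^\dagger H U)\big|_{\S} \;=\; (U_1^\dagger H_1 U_1)\big|_{\S_1}\otimes P_{\S_2} \;+\; P_{\S_1}\otimes (U_2^\dagger H_2 U_2)\big|_{\S_2}.
\end{align}
Under the natural identification of $\S$ with the product of the embedded copies of the domains of $A_1$ and $A_2$ (so that $P_{\S_j}$ plays the role of identity on the embedded factor), the right-hand side corresponds to $(U_1^\dagger H_1 U_1)|_{\S_1}\otimes I + I\otimes (U_2^\dagger H_2 U_2)|_{\S_2}$. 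Subtracting $A = A_1\otimes I + I\otimes A_2$ and applying the triangle inequality together with the tensor-product norm bound $\|X\otimes I\|=\|X\|$ gives the required $\epsilon_1+\epsilon_2$ estimate.

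The main obstacle is really bookkeeping rather than mathematics: one must be careful that the identity factors appearing in $A_1\otimes I+I\otimes A_2$ and in $(U^\dagger H U)|_\S$ refer to the correct spaces (the embedded copies versus the full hardware Hilbert space), so that the restriction operator $|_\S$ and the identifications induced by the isometries whose ranges are $\S_1$ and $\S_2$ match up consistently. Once this identification is fixed at the beginning of the proof, the three bounds follow from the two-line algebraic identities displayed above with essentially no further work.
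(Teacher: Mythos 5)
Your proof is correct and follows essentially the same route as the paper's: conjugate to get $(U_1^\dagger H_1 U_1)\otimes I + I\otimes(U_2^\dagger H_2 U_2)$, check block-diagonality with respect to $\S_1\otimes\S_2$, bound $\|I-U_1\otimes U_2\|$ by the telescoping identity and the triangle inequality, and bound the restricted error by splitting it into $[\tilde H_1|_{\S_1}-A_1]\otimes I_{\S_2} + I_{\S_1}\otimes[\tilde H_2|_{\S_2}-A_2]$. The only cosmetic difference is that the paper verifies block-diagonality by decomposing $\S^\perp$ into the three tensor factors $(\S_1^\perp\otimes\S_2)\oplus(\S_1\otimes\S_2^\perp)\oplus(\S_1^\perp\otimes\S_2^\perp)$ and checking each projection vanishes, whereas you argue via invariance of $\S$ under each summand, which amounts to the same thing.
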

\begin{proof}
    In the full Hilbert space $\H' = \C^{2^{q_1+q_2}}$, we have the embedding subspace $\S = \S_1\otimes \S_2$, so its orthogonal complement is $$\S^\perp = (\S^\perp_1\otimes \S_2)\oplus (\S_1\otimes \S^\perp_2)\oplus (\S^\perp_1\otimes \S^\perp_2).$$
    First, we check the rotated Hamiltonian $\tilde{H}\coloneqq U^\dagger H U$ is block-diagonal in $\S$ and $\S^\perp$. We observe that $P_{\S^\perp} = P_{\S^\perp_1\otimes \S_2} + P_{\S_1\otimes \S^\perp_2} + P_{\S^\perp_1\otimes \S^\perp_2}$. It is readily verified that
    $$P_{\S^\perp_1\otimes \S_2} \left(U^\dagger HU\right)P_{\S_1\otimes \S_2} = \left(P_{\S^\perp_1}\tilde{H}_1P_{\S_1}\right)\otimes I + \left(P_{\S^\perp_1}P_{\S_1}\right)\otimes \left(P_{\S_2}\tilde{H}_2P_{\S_2}\right)=0.$$
    Similarly, we can show that
    $$P_{\S_1\otimes \S^\perp_2}\left(U^\dagger HU\right)P_{\S_1\otimes \S_2} = 0,\quad P_{\S^\perp_1\otimes \S^\perp_2}\left(U^\dagger HU\right)P_{\S_1\otimes \S_2} = 0,$$
    which implies $P_{\S^\perp}\tilde{H} P_{\S} = 0$, i.e., $\tilde{H}$ is block-diagonal. Also, for the unitary $U = U_1\otimes U_2$, we have
    \begin{align*}
        \|I - U_1\otimes U_2\| \le \|(I - U_1)\otimes I\| + \|U_1 \otimes (I - U_2)\| \le \eta_1 + \eta_2.
    \end{align*}
    Finally, we check the approximation error,
    \begin{align*}
        &\|P_\S U^\dagger_1\otimes U^\dagger_2 (H_1\otimes I + I \otimes H_2) U_1\otimes U_2 P_\S - (A_1\otimes I + I \otimes A_2)\| \\
        &= \Big\|\left[(U^\dagger_1 H_1 U_1)|_{\S_1} - A_1\right]\otimes I_{\S_2} +I_{\S_1}\otimes\left[(U^\dagger_2 H_2 U_2)|_{\S_2} - A_2\right] \Big\| \le \epsilon_1 + \epsilon_2.
    \end{align*}
\end{proof}

\begin{lemma}[Tensor product]
    For $j = 1,2$, let $H_j$ be a $(q_j, \eta_j, \epsilon_j)$-embedding of $A_j$ with unitary transformation $U_j$ and embedding subspace $\S_j$, respectively. Then, $H = H_1\otimes H_2$ is a $(q_1+q_2, \eta_1+\eta_2,\|A_1\|\epsilon_2 + \|A_2\|\epsilon_1 + \epsilon_1\epsilon_2)$-embedding of $H = A_1\otimes A_2$ with unitary transformation $U = U_1\otimes U_2$ and embedding subspace $\S = \S_1\otimes \S_2$.
\end{lemma}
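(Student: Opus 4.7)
The plan is to mirror the structure of the composition lemma's proof but use tensor factorization instead of additive decomposition, and then extract the error bound using a standard ``add and subtract'' trick on $A_1 \otimes A_2$. I expect the only genuinely subtle step is the error bound, which requires the right algebraic decomposition so that $\|A_1\|$ and $\|A_2\|$ rather than $\|H_j\|$ appear in the prefactors.

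First I would verify the block-diagonal property. Since tensor product respects conjugation, $U^\dagger H U = \tilde H_1 \otimes \tilde H_2$ where $\tilde H_j = U_j^\dagger H_j U_j$. With respect to the decomposition $\C^{2^{q_j}} = \S_j \oplus \S_j^\perp$, each $\tilde H_j$ is block-diagonal by hypothesis, i.e.\ $P_{\S_j^\perp}\tilde H_j P_{\S_j} = 0$. Writing $P_{\S^\perp} = P_{\S_1^\perp}\otimes P_{\S_2} + P_{\S_1}\otimes P_{\S_2^\perp} + P_{\S_1^\perp}\otimes P_{\S_2^\perp}$ and expanding $P_{\S^\perp}(\tilde H_1 \otimes \tilde H_2) P_{\S_1 \otimes \S_2}$ term by term, each summand contains at least one factor of the form $P_{\S_j^\perp} \tilde H_j P_{\S_j}$ and hence vanishes. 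This gives block-diagonality with respect to $\S$ and $\S^\perp$.

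Next I would bound $\|I - U\|$. The identity $I - U_1\otimes U_2 = (I-U_1)\otimes I + U_1 \otimes (I-U_2)$, combined with the triangle inequality, submultiplicativity of the operator norm under tensor product, and unitarity of $U_1$ (so $\|U_1\| = 1$), yields $\|I - U\| \le \eta_1 + \eta_2$.

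For the main step, the error bound, I would write $\tilde A_j \coloneqq (U_j^\dagger H_j U_j)|_{\S_j}$, so that $(U^\dagger H U)|_\S = \tilde A_1 \otimes \tilde A_2$ and $\|\tilde A_j - A_j\| \le \epsilon_j$. Using the identity
\begin{equation*}
\tilde A_1 \otimes \tilde A_2 - A_1 \otimes A_2 = (\tilde A_1 - A_1)\otimes (\tilde A_2 - A_2) + (\tilde A_1 - A_1)\otimes A_2 + A_1 \otimes (\tilde A_2 - A_2),
\end{equation*}
the triangle inequality together with $\|X\otimes Y\| = \|X\|\|Y\|$ immediately gives the claimed bound $\|A_1\|\epsilon_2 + \|A_2\|\epsilon_1 + \epsilon_1 \epsilon_2$. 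The subtlety here — and what I regard as the main obstacle — is choosing this three-term decomposition rather than the naive two-term splitting $(\tilde A_1 - A_1)\otimes \tilde A_2 + A_1 \otimes (\tilde A_2 - A_2)$, which would introduce $\|\tilde A_2\|$ (of order $\|A_2\| + \epsilon_2$) instead of $\|A_2\|$ and thus not give the stated bound exactly.
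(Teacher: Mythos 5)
Your proof is correct and follows essentially the same route as the paper: verify block-diagonality by expanding $P_{\S^\perp}$ over the three tensor-factor complements, bound $\|I-U_1\otimes U_2\|\le\eta_1+\eta_2$ via $(I-U_1)\otimes I + U_1\otimes(I-U_2)$, and finish with a triangle-inequality splitting of $\tilde{A}_1\otimes\tilde{A}_2 - A_1\otimes A_2$. One small correction to your commentary: the step you flag as the main subtlety is not actually one — the paper uses precisely the two-term splitting $(\tilde{A}_1-A_1)\otimes A_2 + \tilde{A}_1\otimes(\tilde{A}_2-A_2)$ together with $\|\tilde{A}_1\|\le\|A_1\|+\epsilon_1$, which yields exactly the stated bound $\|A_2\|\epsilon_1+\|A_1\|\epsilon_2+\epsilon_1\epsilon_2$, since the cross term $\epsilon_1\epsilon_2$ is already part of the statement; your symmetric three-term decomposition is an equally valid alternative.
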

\begin{proof}
    Similar to Rule 3, we can check that $P_{\S^\perp}(U^\dagger H_1\otimes H_2 U)P_\S = 0$ and $\|I - U_1\otimes U_2\| \le \eta_1 + \eta_2$. For $j = 1,2$, we have that $\|(U^\dagger_j H_j U_j)|_{\S_j}\| \le \|A_j\|+\epsilon_j$. Using this fact, we can estimate the approximation error:
    \begin{align*}
        &\|P_\S U^\dagger_1\otimes U^\dagger_2 (H_1\otimes H_2) U_1\otimes U_2 P_\S - (A_1\otimes A_2)\| = \|(U^\dagger_1 H_1 U_1)|_{\S_1} \otimes (U^\dagger_2 H_2 U_2)|_{\S_2} - A_1\otimes A_2\|\\
        &\le \|[A_1 - (U^\dagger_1 H_1 U_1)|_{\S_1}]\otimes A_2\| + \|(U^\dagger_1 H_1 U_1)|_{\S_1} \otimes [A_2 - (U^\dagger_2 H_2 U_2)|_{\S_2}]\|\\
        &\le \|A_2\|\epsilon_1 + \|A_1\|\epsilon_2 + \epsilon_1\epsilon_2.
    \end{align*}
\end{proof}

\begin{remark}\label{rem:graph_vis}
    The nonzero entries of a Hamiltonian give rise to the adjacency matrix of a graph (ignoring diagonal entries for simplicity).
    As a result, the composition and tensor product rules can be visualized as two different notions of graph products.
    For graphs $G_1=(V_1,E_1)$ and $G_2=(V_2,E_2)$, the \textit{Cartesian product} of $G_1$ and $G_2$ is the graph $G_1 \square G_2$ with vertex set $V_1 \times V_2$ such that vertices $(u_1,v_1)$ and $(u_2,v_2)$ are adjacent iff either $u_1=u_2$ and $v_1$ is adjacent to $v_2$, or $v_1=v_2$ and $u_1$ is adjacent to $u_2$.
    The \textit{tensor product} of $G_1$ and $G_2$ is the graph $G_1 \times G_2$ with vertex set $V_1 \times V_2$ such that vertices $(u_1,v_2)$ and $(u_2,v_2)$ are adjacent iff both $u_1$ is adjacent to $u_2$ and $v_1$ is adjacent to $v_2$.
    From the definitions, it follows that the adjacency matrix of $G_1 \square G_2$ is $A_1 \otimes I + I \otimes A_2$, while the adjacency matrix of $G_1 \times G_2$ is $A_1 \otimes A_2$.
    Therefore, the composition and tensor product rules can be visualized as taking the graph Cartesian and tensor products, respectively.
\end{remark}

\subsection{Perturbative Hamiltonian embedding}\label{append:perturbation}

\subsubsection{Schrieffer-Wolff theory}\label{append:perturb_theory}
We consider the $q$-qubit operator given in \eqn{perturbative-embedding}, i.e., $H = g\Hpen + Q$, where $Q$ and $\Hpen$ can be expressed in block-matrix forms (w.r.t. subspaces $\S$ and $\S^\perp$):
$$Q = \begin{bmatrix}A & R^\dagger\\R & B\end{bmatrix},\quad \Hpen = \begin{bmatrix}0 & 0\\0 & C\end{bmatrix}.$$
Therefore, the operator $H = \begin{bmatrix}A & R^\dagger\\R & G\end{bmatrix}$, where $G = B+gC$.

We define $\Delta_0 \coloneqq \lambda_{\min}(B) - \lambda_{\max}(A)$ as the spectral distance between the two diagonal blocks in $Q$, and $\Delta \coloneqq \lambda_{\min}(G) - \lambda_{\max}(A)$ being the spectral distance between the two diagonal blocks in $H$. Note that the spectral gaps $\Delta_0$ and $\Delta$ are not necessarily positive. By Weyl's inequality, we have
$$\Delta \ge g \lambda_1 + \Delta_0,$$
where $\lambda_1 > 0$ is the minimum eigenvalue of $C$ (recall that $\S$ is the ground-energy subspace of $\Hpen$).

By choosing sufficiently large $g > 0$, we can make $\Delta > 0$ and $\Delta \sim g$. For large $\Delta$ such that $\|R\|/\Delta \ll 1$, the off-diagonal part in $H$ can be regarded as a perturbation and we can invoke the Schrieffer-Wolff transformation~\cite{davis1970rotation,bravyi2011schrieffer}.

\begin{theorem}[Schrieffer-Wolff transformation]\label{thm:dir-rotation}
    Define $\kappa \coloneqq \|R\|/\Delta$. We assume that $\Delta > 0$ and $\kappa < 1/2$. Then, there exists a unitary transformation $U$ such that $\|I-U\|\le 2\sqrt{2}\kappa$ and $U^\dagger H U$ is block-diagonal. Moreover, if $\kappa < 1/16$, there exists an absolute constant $c> 0$ such that 
    \begin{align}\label{eqn:error_embedding}
        \left\|\left(U^\dagger H U\right)_{\S} - A\right\| \le c \|R\|\kappa.
    \end{align}
    In other words, $H$ is a $(q, 2\sqrt{2}\kappa, c\|R\|\kappa)$-embedding of $A$ if $\kappa < 16$.
\end{theorem}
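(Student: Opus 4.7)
My plan is to invoke the direct-rotation (Schrieffer--Wolff) construction for the splitting $H = H_0 + V$, where $H_0 = \diag(A,G)$ is block-diagonal with respect to $\S$ and $\S^\perp$, and $V$ is the purely block-off-diagonal perturbation whose off-diagonal blocks are $R$ and $R^\dagger$, so that $\|V\| = \|R\|$. Since the spectra of $A$ and $G$ are separated by $\Delta$ and $V$ has no component inside $\S$, the first-order perturbative correction $P_\S V P_\S$ to $A$ vanishes identically; this is the structural reason why the effective Hamiltonian differs from $A$ only at second order in $\|V\|/\Delta = \kappa$, giving the claimed bound $c\|R\|\kappa = c\|V\|^2/\Delta$ rather than $c\|V\|$.

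I would first locate a spectral gap in $H$ itself. Weyl's inequality places the spectrum of $H$ in $(-\infty, \lambda_{\max}(A)+\|V\|] \cup [\lambda_{\min}(G)-\|V\|, \infty)$, leaving a gap of width $\Delta - 2\|V\| = \Delta(1-2\kappa) > 0$ whenever $\kappa < 1/2$. Let $P$ be the spectral projector of $H$ onto its lowest $\dim\S$ eigenvalues and $P_0 = P_\S$ the corresponding projector for $H_0$. The Davis--Kahan $\sin\Theta$ theorem, applied with the perturbed gap $\Delta - \|V\|$, yields $\|\sin\Theta(P_0, P)\| \le \|V\|/(\Delta - \|V\|) \le \kappa/(1-\kappa) \le 2\kappa$. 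I would then take $U$ to be Davis's direct rotation from $P_0$ to $P$, i.e.\ the unique unitary satisfying $U P_0 U^\dagger = P$ coming from the polar decomposition of $P P_0 + P^\perp P_0^\perp$. Using the identity $\|I - U\|^2 = 4\sin^2(\Theta_{\max}/2)$ together with the half-angle inequality $\sin(\Theta/2) \le \sin\Theta/\sqrt{2}$ on $[0, \pi/2]$ (which applies since $\kappa < 1/2$ forces $\sin\Theta_{\max} < 1$), I obtain $\|I - U\| \le \sqrt{2}\,\|\sin\Theta\| \le 2\sqrt{2}\kappa$. Block-diagonality of $U^\dagger H U$ is then immediate from $[H, P] = 0$ and $U P_0 U^\dagger = P$, which jointly yield $[U^\dagger H U, P_0] = 0$.

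The main obstacle is the fine error bound $\|(U^\dagger H U)|_\S - A\| \le c\|R\|\kappa$: the naive estimate $\|I - U\|\cdot\|H\|$ is useless, because $\|H\|$ grows with the penalty coefficient $g$ while the target bound does not. To obtain the correct second-order scaling I would exploit the fact that the direct rotation is generated by an anti-Hermitian block-off-diagonal operator $S$, and perform a Schrieffer--Wolff perturbative expansion. Concretely, $S$ solves the Sylvester equation $[H_0, S] = -V$ to leading order, with matrix elements $S_{ij} = V_{ij}/(E_j^{(0)} - E_i^{(0)})$ in the eigenbasis of $H_0$, so $\|S\| \le \|V\|/\Delta = \kappa$. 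The Hadamard-lemma expansion then yields
\begin{align*}
U^\dagger H U \;=\; H_0 \;+\; \tfrac{1}{2}[V, S] \;+\; \sum_{n \ge 2} c_n \,\mathrm{ad}_S^{\,n}(V),
\end{align*}
where the first-order off-diagonal term $V + [H_0, S]$ has cancelled by construction. Projecting onto $\S$ leaves $A + \tfrac{1}{2}\,P_\S[V,S]P_\S + \cdots$, and each remaining commutator term has operator norm at most $\|V\|(2\|S\|)^{n-1} \le \|R\|(2\kappa)^{n-1}$; for $\kappa < 1/16$ the geometric tail sums to $O(\|R\|\kappa)$, producing the claimed bound with an explicit absolute constant $c$. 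The final technical point, standard in the Schrieffer--Wolff literature, is that Davis's direct rotation and the exponential $e^S$ of the perturbative generator differ only by a block-diagonal unitary that leaves $(U^\dagger H U)|_\S$ unchanged up to higher-order corrections, so the same $U$ realizes both conclusions of the theorem.
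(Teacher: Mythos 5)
Your proposal follows essentially the same route as the paper's proof: both split $H = H_0 + V$ with $H_0 = \diag(A,G)$ and $\|V\|=\|R\|$, take $U$ to be the Davis direct rotation (which is exactly the Schrieffer--Wolff unitary), bound $\|I-U\|$ through the angle operator together with the half-angle inequality to get $2\sqrt{2}\kappa$, and obtain the $c\|R\|\kappa$ error from the second-order term of the Schrieffer--Wolff effective-Hamiltonian series, using that the first-order term is exactly $A$. The only difference is presentational: where you sketch Weyl/Davis--Kahan for the projector distance and the Sylvester-equation/Hadamard-lemma bookkeeping for the series remainder, the paper outsources precisely these two quantitative facts to Lemma 3.1 and the convergent expansion bound of Bravyi et al., so your argument is correct and at essentially the same level of rigor.
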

\begin{proof}
    We define $H_0 = \begin{bmatrix}A & 0\\0 & G\end{bmatrix}$ and $V = \begin{bmatrix}0 & R\\R^\dagger & 0\end{bmatrix}$, so $H = H_0 + V$. Note that $\|V\| = \|R\|$. When $\kappa = \|V\|/\Delta< 1/2$, there is a unique Schrieffer-Wolff transformation $U$ such that $U(H_0 + V)U^\dagger$ is block-diagonal (see~\cite{bravyi2011schrieffer} for detailed discussions). By~\cite[Section 1]{davis1970rotation}, there exists an \emph{angle} operator $\Theta$ such that $\|I-U\|= 2\|\sin \frac{1}{2}\Theta\|$. Let $\mathcal{T}= U \S$ be the rotated subspace, we also have that $\left\|P_{\S} - P_{\mathcal{T}}\right\| = \|\sin\Theta\|$. On the other hand, by \cite[Lemma 3.1]{bravyi2011schrieffer}, we can estimate the distance between $\S$ and $\mathcal{T}$: 
    $$\left\|P_{\S} - P_{\mathcal{T}}\right\| \le 2\kappa.$$

    Since we have $\kappa < 1/2$, the angle operator is bounded by $\pi/2$~\cite{davis1970rotation}. Note that for any angle $|\theta |< \pi/2$, we have
    $$\left|\sin \frac{1}{2}\theta\right| = \sqrt{\frac{1}{2}(1-\cos \theta)} = \sqrt{\frac{1}{2}(1-\sqrt{1-\sin^2\theta})} \le \frac{\sqrt{2}}{2}|\sin\theta|,$$
    which implies that $\|\sin \frac{1}{2}\Theta \|\le \frac{\sqrt{2}}{2}\|\sin \Theta\|$. It turns out that
    $$\|I-U\|\le \sqrt{2}\|P_{\S}-P_{\mathcal{T}}\| \le 2\sqrt{2}\kappa.$$

    We define the effective low-energy Hamiltonian $H_{\mathrm{eff}}\coloneqq P_{\S}U^\dagger H UP_{\S}$. The effective Hamiltonian $H_{\mathrm{eff}}$ allows a Taylor series expansion~\cite{bravyi2011schrieffer,bravyi2017complexity}:
    \begin{align}\label{eqn:H_series}
        H_{\mathrm{eff}} = \sum^\infty_{j=1}H_{\mathrm{eff},j}.
    \end{align}
    We define $H_{\mathrm{eff}}(k) = \sum^k_{j=1}H_{\mathrm{eff},j}$. When $\kappa < 1/16$, the series \eqn{H_series} converge absolutely and there exists an absolute constant $c> 0$ such that
    $$\left\|H_{\mathrm{eff}}- H_{\mathrm{eff}}(k)\right\| \le c\frac{\|V\|^{k+1}}{\Delta^k} = c\|R\|\kappa^k.$$
    Note that $H_{\mathrm{eff}}(1) = A$. Therefore, we prove that $\left\|H_{\mathrm{eff}}- A\right\| \le c\|R\|\kappa$.
\end{proof}

By \thm{dir-rotation}, a bigger $g$ implies a more accurate Hamiltonian embedding $H$. The idea is that a large parameter $g$ suppresses the leakage of the quantum evolution from the embedding subspace $\S$ to its orthogonal complement $\S^\perp$. Based on this intuition, we refer to $g$ as the \emph{penalty coefficient} and the operator $\Hpen$ as the \emph{penalty Hamiltonian}.

\vspace{4mm}
\begin{remark}
    In a perturbative Hamiltonian embedding with penalty coefficient $g > 0$, the error characterization parameters are $\eta \le 2\sqrt{2}\kappa \sim \frac{1}{g}$, $\epsilon \le c\|R\|\kappa \sim \frac{\|R\|}{g}$.
    By choosing a large enough $g$, the error parameters $\eta$ and $\epsilon$ can be made arbitrarily small. On the other hand, when $\Hpen$ is fast-forwardable (which is always the case for embedding schemes discussed in this paper, see \append{sparse_matrix}), there exist quantum algorithms (e.g., continuous qDRIFT, see \append{sim_perturbative}) that can efficiently simulate the Hamiltonian embedding even for large $g$. Therefore, for simplicity, we will not explicitly specify $\eta$ and $\epsilon$ when discussing a perturbative Hamiltonian embedding in the rest of this paper.
\end{remark}

\vspace{4mm}
All the basic rules of constructing Hamiltonian embeddings can be reformulated for perturbative Hamiltonian embedding. The proof is similar to \append{building}.

\begin{proposition}
    \phantom{}
    \begin{enumerate}
        \item (Addition) For $j = 1, 2$, suppose $H_j = g\Hpen+Q_j$ is an embedding of $A_j$ with the same embedding subspace $\S$. Then, $H = g\Hpen + Q_1 + Q_2$ is an embedding of $A = A_1 + A_2$ with embedding subspace $\S$.
        \item (Multiplication) Suppose $H = g\Hpen+Q$ is an embedding of $A$ with embedding subspace $\S$. Then, for any real number $\alpha$, $H' = g\Hpen + \alpha Q$ is an embedding of $A' = \alpha A$.
        \item (Composition) For $j = 1, 2$, suppose that the $q_j$-qubit operator $H_j = g\Hpen_j + Q_j$ is an embedding of $A_j$ with the embedding subspace $\S_j$ being the ground-energy subspace of $\Hpen_j$. Then, $H = g \Hpen + Q$ with
        \begin{align}
            \Hpen = \Hpen_1\otimes I + I \otimes \Hpen_2,~Q = Q_1\otimes I + I \otimes Q_2
        \end{align}
        is an embedding of $A= A_1\otimes I + I \otimes A_2$ with embedding subspace $\S = \S_1\otimes \S_2$.
        \item (Tensor product) For $j = 1,2$, suppose that $H_j = g \Hpen_j + Q_j$ is an embedding of $A_j$ with the embedding subspace $\S_j$ being the ground-energy subspace of $\Hpen_j$. Then, $H = g \Hpen + Q_1\otimes Q_2$ with
    \begin{align}
        \Hpen = \Hpen_1\otimes I + I \otimes \Hpen_2
    \end{align}
    is an embedding of $H = A_1\otimes A_2$ with embedding subspace $\S = \S_1\otimes \S_2$.
    \end{enumerate}
\end{proposition}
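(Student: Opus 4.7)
The plan is to derive all four rules as direct applications of \thm{dir-rotation}. In each case I recognize the stated composite Hamiltonian as having the standard perturbative form $H=g\Hpen+Q$, identify the claimed embedding subspace $\S$ as the zero-energy ground subspace of $\Hpen$, verify that $P_{\S}QP_{\S}$ equals the claimed target $A$, and estimate $\|R\|=\|P_{\S^\perp}QP_{\S}\|$. For any fixed $A_j$, $Q_j$, $R_j$, choosing the penalty coefficient $g$ large enough drives $\kappa=\|R\|/\Delta$ below the $1/16$ threshold required by \thm{dir-rotation}, and the theorem then supplies both the Schrieffer--Wolff unitary $U$ and the embedding error bound.

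For (1) Addition and (2) Multiplication, the penalty structure is unchanged: the same $\Hpen$ and $\S$, hence the same spectral gap $\Delta$, are used throughout. The on-diagonal identity $P_{\S}QP_{\S}=A$ follows by linearity (giving $A_1+A_2$ and $\alpha A$, respectively), and the off-diagonal norm either satisfies $\|R\|\le\|R_1\|+\|R_2\|$ by the triangle inequality or simply rescales as $|\alpha|\|R\|$. In both cases enlarging $g$ by a constant factor (depending on $|\alpha|$ or on $\|R_1\|+\|R_2\|$) restores the smallness condition and \thm{dir-rotation} applies.

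For (3) Composition, I would first check that $\S_1\otimes \S_2$ is the unique zero-energy ground subspace of $\Hpen=\Hpen_1\otimes I+I\otimes \Hpen_2$ and that its spectral gap equals $\min(\lambda_1,\lambda_2)$, using the fact that the spectrum of a tensor sum of commuting positive-semidefinite operators is the Minkowski sum of the individual spectra. Decomposing $\S^\perp=(\S_1^\perp\otimes \S_2)\oplus(\S_1\otimes \S_2^\perp)\oplus(\S_1^\perp\otimes \S_2^\perp)$ and projecting $Q=Q_1\otimes I+I\otimes Q_2$ block by block, the cross contributions vanish and one obtains $R=R_1\otimes P_{\S_2}+P_{\S_1}\otimes R_2$ together with $P_{\S}QP_{\S}=A_1\otimes I_{\S_2}+I_{\S_1}\otimes A_2$. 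Hence $\|R\|\le\|R_1\|+\|R_2\|$, and \thm{dir-rotation} yields the claim.

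For (4) Tensor product, the penalty Hamiltonian and its ground subspace are the same as in (3). Writing $Q_jP_{\S_j}=A_j+R_j$, with $A_j$ supported in $\S_j$ and $R_j$ in $\S_j^\perp$, I compute
\begin{align*}
(Q_1\otimes Q_2)P_{\S}=(A_1+R_1)\otimes(A_2+R_2)=A_1\otimes A_2+R_1\otimes A_2+A_1\otimes R_2+R_1\otimes R_2.
\end{align*}
The first summand is the $\S$-block, giving $P_{\S}QP_{\S}=A_1\otimes A_2$, while the remaining three summands live in $\S^\perp$, producing $\|R\|\le\|A_2\|\|R_1\|+\|A_1\|\|R_2\|+\|R_1\|\|R_2\|$. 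Invoking \thm{dir-rotation} finishes the argument. The main technical care is the bookkeeping of the several cross terms in this last case and confirming that the spectral gap of the composite $\Hpen$ decouples cleanly into the minimum of the two individual gaps; the rest of the proof parallels the general rules in \append{building} very closely.
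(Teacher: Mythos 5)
Your proof is correct and follows essentially the route the paper intends: the paper dismisses this proposition with ``the proof is similar to \append{building}'', and your argument fleshes that out by redoing the same block decompositions of $\S^\perp$ to verify $P_\S Q P_\S = A$ and to bound $\|R\|$, then invoking \thm{dir-rotation} with $g$ large enough that $\kappa$ falls below threshold. The only remark worth making is that your bookkeeping (the gap of the composite $\Hpen$ being $\min(\lambda_1,\lambda_2)$, and the leakage terms $R_1\otimes P_{\S_2}+P_{\S_1}\otimes R_2$ and $R_1\otimes A_2+A_1\otimes R_2+R_1\otimes R_2$) is more explicit than anything the paper records, and it is consistent with the paper's convention of not tracking $(\eta,\epsilon)$ for perturbative embeddings.
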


\subsubsection{Improved simulation error analysis}\label{append:proof_perturb}

\begin{theorem}[Quantum simulation with perturbative embedding]\label{thm:perturb}
Let all the notations be the same as above.
We assume $\Delta > 0$ and $\kappa < 1/2$. Then, for any fixed $t > 0$, we have
    \begin{align}
        \|P_{\S}e^{-iH t} P_{\S} - e^{-iAt}\| \le 4\sqrt{2}\kappa\|R\|t.
    \end{align}
\end{theorem}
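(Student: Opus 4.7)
My plan is to replace the factor $\|H\|$ that would appear in a direct application of \thm{main} with $\|R\|$. This improvement is essential because $\|H\|$ grows with the penalty coefficient $g$, whereas $\|R\|$ does not. The key is to exploit two structural features simultaneously: the block-off-diagonal form of the perturbation $V$, and the near-identity property of the Schrieffer-Wolff unitary $U$ supplied by \thm{dir-rotation}.

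Let $\bar{A}$ denote the extension of $A$ to $\H$ that acts as zero on $\S^\perp$, so that $P_\S e^{-i\bar{A}t}P_\S = e^{-iAt}$ and $P_{\S^\perp} e^{-i\bar{A}\tau} P_\S = 0$. Applying the standard Duhamel (variation-of-parameters) formula gives
$$e^{-iHt} - e^{-i\bar{A}t} = -i\int_0^t e^{-iH(t-\tau)}(H - \bar{A})e^{-i\bar{A}\tau}\, d\tau.$$
Reading off from the block decomposition, $(H - \bar{A})P_\S$ is precisely the off-diagonal block $R\colon \S \to \S^\perp$, since $H_0$ agrees with $\bar{A}$ on $\S$ and the lower-left block of $H-\bar{A}$ is exactly $R$. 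Combined with $e^{-i\bar{A}\tau}P_\S = e^{-iA\tau}P_\S$, the integrand collapses, and inserting $P_{\S^\perp}$ on the right of $e^{-iH(t-\tau)}$ (since $R$ lands in $\S^\perp$) yields
$$P_\S(e^{-iHt} - e^{-i\bar{A}t})P_\S = -i\int_0^t P_\S e^{-iH(t-\tau)}P_{\S^\perp}\cdot R\cdot e^{-iA\tau}P_\S\, d\tau.$$
Taking norms, it remains to show $\|P_\S e^{-iHs}P_{\S^\perp}\| \le 4\sqrt{2}\kappa$ uniformly in $s$.

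This leakage bound is the heart of the argument and the step I expect to require the most care. Here I use the Schrieffer-Wolff unitary $U$: since $\tilde{H} = U^\dagger HU$ is block-diagonal in $\S\oplus \S^\perp$, we have $P_\S U^\dagger e^{-iHs} U P_{\S^\perp} = P_\S e^{-i\tilde{H}s}P_{\S^\perp} = 0$. Consequently
$$P_\S e^{-iHs}P_{\S^\perp} = P_\S\bigl(e^{-iHs} - U^\dagger e^{-iHs}U\bigr)P_{\S^\perp} = P_\S e^{-iHs}(I-U)P_{\S^\perp} + P_\S(I - U^\dagger)e^{-iHs}UP_{\S^\perp},$$
where the second equality uses the algebraic identity $e^{-iHs} - U^\dagger e^{-iHs}U = e^{-iHs}(I-U) + (I-U^\dagger)e^{-iHs}U$. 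Taking operator norms and invoking $\|I - U\| = \|I - U^\dagger\| \le 2\sqrt{2}\kappa$ from \thm{dir-rotation}, each of the two terms is bounded by $2\sqrt{2}\kappa$, for a total of $4\sqrt{2}\kappa$.

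Combining the two estimates gives
$$\|P_\S e^{-iHt}P_\S - e^{-iAt}\| \le \|R\|\int_0^t \|P_\S e^{-iH(t-\tau)}P_{\S^\perp}\|\, d\tau \le 4\sqrt{2}\kappa\|R\|t,$$
as required. The conceptual insight behind the improvement over \thm{main} is that bounding the leakage operator $P_\S e^{-iHs}P_{\S^\perp}$ through the Schrieffer-Wolff rotation is far sharper than bounding $\|[H,U]\| \le 2\eta\|H\|$, because the former exploits the cancellation $P_\S e^{-i\tilde{H}s}P_{\S^\perp}=0$ and thereby avoids any dependence on $\|H\|$ (and hence on the penalty coefficient $g$).
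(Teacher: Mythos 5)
Your proposal is correct and follows essentially the same route as the paper: a Duhamel/variation-of-parameters expansion in which the difference $H-\bar{A}$ contributes only the off-diagonal block $R$, followed by a bound on the leakage cross term $\|P_\S e^{-iHs}P_{\S^\perp}\|\le 2\|I-U\|\le 4\sqrt{2}\kappa$ via the Schrieffer--Wolff unitary (the paper's \lem{cross}). The only cosmetic difference is that the paper derives an ODE for the projected propagator $P_\S e^{-iHt}P_\S$ and places the cross term $P_{\S^\perp}e^{-iH\tau}P_\S$ on the right of the integrand, whereas you apply Duhamel to the full unitaries and project afterwards; the constants and ingredients are identical.
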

\begin{proof}
Let $H_0 = P_\S H P_\S = \diag(A,0)$, and we define:
\begin{align}
    \mathscr{A}(t) = P_\S e^{-iHt}P_\S,\quad \mathscr{B}(t) = P_\S e^{-iH_0 t}P_\S.
\end{align}
We also define the $\mathscr{E}(t) \coloneqq \mathscr{A}(t) - \mathscr{B}(t)$. Note that $\mathscr{E}(0) = 0$. We find that
\begin{align*}
    \frac{\d}{\d t}\mathscr{A}(t) &= P_\S \left(-i H\right)\left(P_\S + P_{\S^\perp}\right) e^{-iHt} P_\S= -i H_0\mathscr{A}(t) -i R e^{-iHt} P_\S,\\
    \frac{\d}{\d t}\mathscr{B}(t) &= P_\S \left(-iH_0\right)e^{-iH_0 t}P_\S=-i H_0 \mathscr{B}(t).
\end{align*}
Therefore, we have $\dot{\mathscr{E}}(t) =\dot{\mathscr{A}}(t)-\dot{\mathscr{B}}(t) = -i H_0 \mathscr{E}(t) - i Re^{-iHt} P_\S$. By the variation-of-parameter formula \cite[Theorem~4.9]{knapp2007basic}, we represent the function $\mathscr{E}(t)$ as the time integral:
\begin{align}
    \mathscr{E}(t) = -i\int^t_0 \exp(-iH_0 (t-\tau))R \exp(-iH\tau) P_{\S} ~\d\tau,
\end{align}
so we can estimate the error (recall that $R = P_\S H P_{\S^\perp}$):
\begin{align*}
    \|\mathscr{E}(t)\| &\le \int^t_0 \|P_\S H P_{\S^\perp} \exp(-iH\tau) P_{\S}\| ~\d \tau \le \int^t_0 \|RP_{\S^\perp} \exp(-iH\tau) P_{\S}\|~\d \tau \\
    &\le \int^t_0 \|R\|\left(4\sqrt{2}\kappa\right)~\d \tau = 4\sqrt{2}\kappa\|R\|t.
\end{align*}
In the second step, we use $P_{\S^\perp} = P^2_{\S^\perp}$. In the second to last step, we invoke \lem{cross}.
\end{proof}

\begin{lemma}\label{lem:cross}
Assume that $\kappa < 1/2$ and let $\tau \ge 0$ be an arbitrary real number. Then, we have
\begin{align}
    \|P_{\S^\perp} \exp(-iH\tau) P_{\S}\| \le 4\sqrt{2}\kappa.
\end{align}
\end{lemma}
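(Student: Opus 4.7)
The plan is to exploit the Schrieffer-Wolff transformation $U$ from \thm{dir-rotation}, which block-diagonalizes $H$ in the sense that $\tilde{H}\coloneqq U^\dagger H U$ satisfies $P_{\S^\perp}\tilde{H}P_\S = 0$ and hence $P_{\S^\perp}e^{-i\tilde{H}\tau}P_\S = 0$ for all $\tau\ge 0$. The cross-block matrix element $P_{\S^\perp}e^{-iH\tau}P_\S$ is therefore zero in the rotated frame, and its norm in the original frame should be controlled entirely by how far $U$ is from the identity. Since $\thm{dir-rotation}$ already gives $\|I-U\|\le 2\sqrt{2}\kappa$ under the hypothesis $\kappa<1/2$, the target bound $4\sqrt{2}\kappa$ should appear as $2\|I-U\|$.

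To implement this, I would first write $e^{-iH\tau} = U e^{-i\tilde{H}\tau} U^\dagger$, so that
\begin{align*}
 P_{\S^\perp}e^{-iH\tau}P_\S = P_{\S^\perp}U\,e^{-i\tilde{H}\tau}\,U^\dagger P_\S.
\end{align*}
Next, I would insert $I = (U^\dagger - I) + I$ on the right and $I = (U - I) + I$ on the left to peel off the small factors. This decomposes the expression as
\begin{align*}
 P_{\S^\perp}U\,e^{-i\tilde{H}\tau}(U^\dagger - I)P_\S \;+\; P_{\S^\perp}(U-I)\,e^{-i\tilde{H}\tau}P_\S \;+\; P_{\S^\perp}e^{-i\tilde{H}\tau}P_\S.
\end{align*}
The third term vanishes because $\tilde{H}$ is block-diagonal with respect to $\S$ and $\S^\perp$, and $e^{-i\tilde{H}\tau}$ therefore commutes with $P_\S$. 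Taking operator norms in the remaining two terms, using unitarity of $U$ and $e^{-i\tilde{H}\tau}$ together with $\|P_{\S^\perp}\|,\|P_\S\|\le 1$, yields a bound of $\|U^\dagger - I\| + \|U - I\| = 2\|I-U\|\le 4\sqrt{2}\kappa$.

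The main (and essentially only) obstacle is the clean algebraic bookkeeping: one must ensure that after the two insertions of the identity, every cross term involving a bare $e^{-i\tilde{H}\tau}$ between $P_{\S^\perp}$ and $P_\S$ is eliminated by block-diagonality, so that no $O(1)$ terms survive and each of the surviving terms carries exactly one factor of $U-I$ (or its adjoint) — yielding the factor of $2$ in front of $\|I-U\|$ rather than something worse. Once this cancellation pattern is set up, the estimate is immediate from the norm bound on $I-U$ already supplied by \thm{dir-rotation}.
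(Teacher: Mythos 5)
Your proposal is correct and is essentially the paper's own argument: both proofs use the Schrieffer--Wolff unitary to note that the cross block vanishes in the rotated frame and then charge the discrepancy to $2\|I-U\|\le 4\sqrt{2}\kappa$. The only difference is bookkeeping — the paper subtracts the (zero) rotated cross term and bounds the commutator $\|Ue^{-iH\tau}-e^{-iH\tau}U\|$, while you expand $Ue^{-i\tilde{H}\tau}U^\dagger$ directly into the block-diagonal piece plus two terms each carrying one factor of $U-I$ — which yields the same factor of $2$ and the same final bound.
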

\begin{proof}
By \thm{dir-rotation}, there exists a Schrieffer-Wolff transformation $U$ such that
$$U^\dagger H U = \begin{bmatrix}
    H_{\mathrm{eff}} & 0\\0 & *
\end{bmatrix},$$
where $H_{\mathrm{eff}} = P_{\S}U^\dagger H U P_{\S}$.
We remark that $P_{\S^\perp} U^\dagger \exp(-iH\tau) U P_\S=0$. To see this, note that $U$ block-diagonalizes $H$, so does $\exp(-iUHU^\dagger\tau)$:
$$P_{\S^\perp}U \exp(-iH\tau) U^\dagger P_{\S} = P_{\S^\perp}\exp(-iUHU^\dagger\tau)P_{\S} = 0.$$
Now, we bound the cross term $P_{\S^\perp} \exp(-iH\tau) P_{\S}$ by,
\begin{align*}
    \|P_{\S^\perp} \exp(-iH\tau) P_{\S}\| 
    &= \|P_{\S^{\perp}} \exp(-iH\tau) P_{\S} - P_{\S^{\perp}}U^\dagger \exp(-iH\tau) U P_{\S}\|\\
    &\leq \|\exp(-iH\tau) - U^\dagger \exp(-iH\tau)U\|
    = \|U\exp(-iH\tau) - \exp(-iH\tau)U\|.
\end{align*}
Then, by \thm{dir-rotation}, we conclude that 
\begin{align*}
    \|P_{\S^\perp} \exp(-iH\tau) P_{\S}\| 
    \le \|(U-I)\exp(-iH\tau) - \exp(-iH\tau)(U-I)\|
    \leq 2 \| U-I \|
    \leq 4\sqrt{2} \kappa.
\end{align*}
\end{proof}

\subsubsection{Quantum simulation of perturbative embedding}\label{append:sim_perturbative}
The perturbative Hamiltonian embedding $\Hebd = g\Hpen + Q$ often comes with a large penalty coefficient $g$. This could lead to significant simulation error if one uses a standard quantum algorithm (e.g., a product formula) to simulate $\Hebd$. In this section, we discuss two types of quantum simulation protocols (product formulas and \texttt{qDRIFT}) for perturbative Hamiltonian embedding. It turns out that \texttt{qDRIFT} is friendly to near-term devices with minimal overheads in terms of $g$.

First, we consider the standard first-order Trotter formula:
\begin{align}
    e^{-i\Hebd t} \approx U_1(t) \coloneqq \prod^r_{k=1} e^{-iQt/r}e^{-ig\Hpen t/r}.
\end{align}
By \cite{childs2021theory}, the Trotter error scales like
\begin{align}
    \|e^{-i\Hebd t} - U_1(t)\| \le O\left(\frac{g[\Hpen, Q]t^2}{r}\right).
\end{align}
To simulate the embedding Hamiltonian up to an additive error $\epsilon$, we need to choose the Trotter number 
\begin{align}
    r = O\left(\frac{g[\Hpen, Q]t^2}{\epsilon}\right).
\end{align}

Similarly, the second-order Trotter-Suzuki formula requires the Trotter number
\begin{align}
    r = O\left(\frac{\Gamma t^{3/2}}{\epsilon^{1/2}}\right),
\end{align}
where $\Gamma = \left(\frac{g}{12}\|[Q,[Q,\Hpen]]\| + \frac{g^2}{24}\|[\Hpen,[\Hpen,Q]]\|\right)^{1/2}$.

We can improve the first-order Trotter formula by randomization~\cite{childs2019faster}, which turns out to be effectively a second-order product formula:
\begin{align}
    \|e^{-i\Hebd t} - U_{1,\mathrm{rand}}(t)\| \le O\left(\frac{\Lambda^3 t^3}{r^2}\right),
\end{align}
where $U_{1,\mathrm{rand}}(t)$ is the randomized first-order Trotter formula, and $\Lambda = \max\{g\|\Hpen\|, \|Q\|\}$. It follows that we need the Trotter number
\begin{align}
    r = O\left(\frac{\Lambda^{3/2} t^{3/2}}{\epsilon^{1/2}}\right)
\end{align}
in order to achieve an $\epsilon$-accurate simulation of $H'$. A second-order (deterministic) Trotter formula will require that $r = O(gt^{3/2}\epsilon^{-1/2})$, but potentially have a better pre-factor in terms of the commutators of $\Hpen$ and $Q$~\cite{childs2021theory}.

\vspace{4mm}
A major drawback of the aforementioned quantum simulation protocols we discussed above is that their simulation error has explicit dependence on the penalty coefficient $g$, which could be very huge in practice. To address this issue, we consider quantum simulation in the interaction picture. For a Hamiltonian $H = A + B$, we transform it to the interaction picture:
\begin{align}
    H_I(t) = e^{iAt}B e^{-iAt},
\end{align}
and consider the Schr\"odinger equation 
\begin{align}
    i\frac{\d }{\d t}\ket{\psi_I(t)} = H_I(t)\ket{\psi_I(t)},
\end{align}
subject to $\ket{\psi_I(0)} = \ket{\psi_0}$. One can easily verify that the state vector $\ket{\psi(t)} = e^{-iHt}\ket{\psi_0}$ can be recovered by
\begin{align}
    \ket{\psi(t)} = e^{-iAt}\ket{\psi_I(t)}. 
\end{align}
Based on this observation, digital quantum simulation protocols for time-dependent Hamiltonian that has $L^1$-norm scaling can be used to simulate $H_I(t)$ to achieve an error scaling only depends on $\|B\|$, given that $A$ can be fast-forwardable. 

In our construction of Hamiltonian embedding, $\Hpen$ is often fast-forwardable, so we can leverage the interaction-picture quantum simulation protocols to simulate 
\begin{align}
    \Hebd_I(t) = e^{ig\Hpen t}Qe^{-ig\Hpen t}
\end{align}
and achieve $g$-independent error scaling. There are a handful of quantum algorithms that can achieve $l^1$-norm scaling for interaction picture Hamiltonian, e.g., truncated Dyson series~\cite{low2018hamiltonian, berry2020time}, continuous qDRIFT~\cite{berry2020time}, qHOP~\cite{an2022time}, etc. Unfortunately, the truncated Dyson series method and the qHOP method require complicated control flows and advanced input models (e.g., block-encodings) that are infeasible for near-term quantum computers. On the other hand, continuous qDRIFT relies on a simple product formula and a classical stochastic sampling protocol, which is possible to implement on near-term devices. The details of \texttt{qDRIFT} are presented in \algo{qDRIFT}.

\begin{algorithm}[htbp]
    \SetKwInOut{Input}{input}\SetKwInOut{Output}{output}
    \caption{Quantum simulation by \texttt{qDRIFT}.}
    \label{algo:qDRIFT}   
    
    \Input{An initial state $\ket{\psi_0}$, number of samples $M$, Trotter number $K$, total evolution time $T$.}
    \Output{An estimate of the measurement result $\Tilde{E}$.}
    \BlankLine
    \emph{Let $\Delta t = T/K$}\;
    \For{$j\leftarrow 1$ \KwTo $M$}{
    \For{$k\leftarrow 1$ \KwTo $K$}{
    Sample a uniformly random number $\xi_k \in [(k-1)\Delta t, k\Delta t]$\;
    Propagate the quantum state
    $$\ket{\psi_k} = e^{-i\Hebd_I(\xi_k)\Delta t} \ket{\psi_{k-1}} = e^{ig\Hpen \xi_k}e^{-iQ\Delta t}e^{-ig\Hpen \xi_k}\ket{\psi_{k-1}}.$$
    }
    Measure the quantum state $\ket{\psi_K}$ with the observable $M_I = e^{ig\Hpen T}O e^{-ig\Hpen T}$ and compute $m_j = \bra{\psi_K}M_I\ket{\psi_K}$\;
    }
    Compute the sample mean $\Tilde{E} = \frac{1}{M}\sum^M_{j=1}m_j$.
\end{algorithm}

We can prove that when $M, K$ are large enough, 
\begin{align}
    \bra{\psi_0}e^{i\Hebd T} O e^{-i\Hebd T}\ket{\psi_0} \approx \Tilde{E}.
\end{align}
More precisely, by \cite{berry2020time}, the Trotter number required by continuous qDRIFT to simulate the interaction-picture Hamiltonian $H'_I(t)$ is 
\begin{align}\label{eqn:err_qdrfit}
    r = O\left(\frac{\|Q\|^2 t^2}{\epsilon}\right).
\end{align}
Notably, the error bound in \eqn{err_qdrfit} is independent of the penalty coefficient $g$, which allows us to simulate the embedding Hamiltonian $\Hebd$ with large $g$.

\section{Hamiltonian embedding of sparse matrices}\label{append:sparse_matrix}
In this section, we construct several Hamiltonian embeddings whose locality has explicit dependence on the sparsity structure of the target matrix. Such embeddings are advantageous for special sparse matrices, e.g., tridiagonal matrices.

\subsection{Band matrix}\label{append:band}
A band matrix is a sparse matrix whose non-zero entries are confined to a diagonal band. We say a matrix $A$ has bandwidth $d$ if $A_{i,j} = 0$ for any $i,j=1,\dots, n$ such that $|i-j|>d$. For example, a tri-diagonal matrix is of bandwidth $1$. Clearly, a band matrix with bandwidth $d$ is $(2d+1)$-sparse. 

In this section, we discuss the Hamiltonian embedding of Hermitian band matrices. We will introduce two types of embeddings whose locality explicitly depends on the bandwidth of a target matrix. 

\subsubsection{Unary embedding}\label{append:unary}
\begin{definition}[Unary code]
    Given an integer $j \in \{1,\dots, n\}$, the unary coding of $j$, denoted by $u_j$, is the $(n-1)$-bit string whose first\footnote{In this paper, we use little-endian ordering, i.e., we enumerate the bit of a string \textit{from right to left}. For example, in the string $100$, we refer the first bit to be $0$, and the last bit to be $1$.} $(j-1)$ bits are $1$'s and the last $(n-j)$ bits are $0$'s.
\end{definition}

In \tab{unary_code}, we list the unary code for integers $1 \le j \le 8$.

\begin{table}[!ht]
    \centering
        \begin{tabular}{|p{2cm}|p{2cm}||p{2cm}|p{2cm}|}
     \hline
     \multicolumn{4}{|c|}{\textbf{Unary code ($n = 8$)}} \\
     \hline
     Basis index & Codeword & Basis index & Codeword\\
     \hline
     1 & 0000000 & 5 & 0001111\\
     2 & 0000001 & 6 & 0011111\\
     3 & 0000011 & 7 & 0111111\\
     4 & 0000111 & 8 & 1111111\\
     \hline
    \end{tabular}
    \caption{$7$-bit unary code representing integers $\{1,2,\dots, 8\}$.}
    \label{tab:unary_code}
\end{table}

The unary embedding of an $n$-by-$n$ Hermitian matrix $A$ is an $(n-1)$-qubit quantum operator with the embedding subspace spanned by the unary code, i.e., $\S = \spn\{\ket{u_1},\dots,\ket{u_{n}}\}$. It is easily verified that $\S$ is the ground-energy subspace of the following quantum operator: 
\begin{align}
    \Hpen_{\mathrm{unary}} &= -\sum^{n-2}_{j=1} Z_{j+1}Z_j + Z_1 - Z_{n-1}.
\end{align}

\begin{example}
    For $n = 4$, the penalty Hamiltonian $\Hpen_{\mathrm{unary}}$ has 2 degenerate eigenspaces:
    \begin{align}
        E_0 &= -2: \ket{000}, \ket{001}, \ket{011},\ket{111},\\
        E_1 &= 2: \ket{010},\ket{100}, \ket{101},\ket{110},
    \end{align}
    and the energy gap between the two eigenspaces is $4$. In fact, for any $n \ge 3$, the spectral gap between the ground and the first-excited energy levels of $\Hpen_{\mathrm{unary}}$ is always $4$.
\end{example}

For $1 \le j < k\le n$, we find that 
\begin{align}
    &\bra{u_j}\left(X_{k-1} \otimes \dots \otimes X_{j+1} \otimes X_j\right)\ket{u_k} = 1,~
    \bra{u_k}\left(X_{k-1} \otimes \dots \otimes X_{j+1} \otimes X_j\right)\ket{u_j} = 1,\\
    &\bra{u_j}\left(X_{k-1} \otimes \dots \otimes X_{j+1} \otimes Y_j\right)\ket{u_k} = -i,~
    \bra{u_k}\left(X_{k-1} \otimes \dots \otimes X_{j+1} \otimes Y_j\right)\ket{u_j} = i.
\end{align}

We denote $\hat{n}_j$ as the number operator at site $j$:
\begin{align}\label{eqn:number-operator}
    \hat{n}_j \coloneqq \frac{1}{2}\left(I - Z_j\right).
\end{align}

\begin{proposition}[Unary embedding]
    Given an $n$-dimensional Hermitian matrix $A$, we define the following $(n-1)$-qubit quantum operator:
    \begin{align}
        Q_A &= \left(\alpha_1 I + \sum_{2\le j \le n} (\alpha_{j}-\alpha_{j-1}) \hat{n}_{j-1}\right) + \sum_{1\le j < k \le n} X_{k-1}\otimes \dots \otimes X_{j+1} \otimes (\alpha_{j,k} X_j - \beta_{j,k} Y_j),
    \end{align}
    where $\alpha_j$ is the $j$-th diagonal entry of $A$, $\alpha_{j,k}$ and $\beta_{j,k}$ are the real and imaginary part of the $(j,k)$-th entry of $A$. Then, the quantum operator
    \begin{align}
        \Hebd_A = g \Hpen_{\mathrm{unary}} + Q_A
    \end{align}
    is an embedding of $A$. Moreover, the embedding Hamiltonian $\Hebd_A$ is $\max(d,2)$-local if $A$ has band width $d$.
\end{proposition}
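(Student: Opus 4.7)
The plan is to apply the perturbative Hamiltonian embedding theorem (\thm{perturbative-thm}) with embedding subspace $\S = \spn\{\ket{u_1}, \ldots, \ket{u_n}\}$ and to verify its two hypotheses: (i) $\S$ is the ground-energy subspace of $\Hpen_{\mathrm{unary}}$ (up to a harmless constant shift), and (ii) the restriction $Q_A|_\S$ equals $A$ under the identification $\ket{j} \leftrightarrow \ket{u_j}$. The locality claim will then follow by inspecting the weights of the Pauli strings appearing in $\Hebd_A$.

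First, I would show that the unary codewords are exactly the minimum-energy configurations of $\Hpen_{\mathrm{unary}} = -\sum_{j=1}^{n-2} Z_{j+1} Z_j + Z_1 - Z_{n-1}$. Each $-Z_{j+1}Z_j$ term prefers adjacent bits to agree, while the boundary terms $Z_1$ and $-Z_{n-1}$ pin bit $1$ to $0$ and bit $n-1$ to $1$; a computational basis state attains the minimum energy exactly when it is of the form ``left block of $1$'s followed by right block of $0$'s,'' i.e., precisely a $\ket{u_k}$. Since flipping any single bit away from this pattern costs energy at least $4$, the spectral gap of $\Hpen_{\mathrm{unary}}$ is a constant independent of $n$, so the perturbative hypothesis $\kappa = \|R\|/\Delta < 1/2$ holds for any sufficiently large $g$.

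Second, I would verify $Q_A|_\S = A$ by matching diagonal and off-diagonal matrix elements separately. For the diagonal part, observe that $\hat{n}_{j-1}\ket{u_k} = \mathbbm{1}[k \ge j]\ket{u_k}$, so $\alpha_1 I + \sum_{j=2}^n (\alpha_j - \alpha_{j-1})\hat{n}_{j-1}$ telescopes to $\alpha_k$ on $\ket{u_k}$, reproducing the diagonal entries of $A$. For the off-diagonal part, I would invoke the matrix-element identities listed just above the proposition: for $j < k$, the combination $\alpha_{j,k}\, X_{k-1}\otimes\cdots\otimes X_j - \beta_{j,k}\, X_{k-1}\otimes\cdots\otimes X_{j+1}\otimes Y_j$ contributes $\alpha_{j,k}\cdot 1 - \beta_{j,k}\cdot(-i) = \alpha_{j,k} + i\beta_{j,k} = A_{j,k}$ to $\bra{u_j} Q_A \ket{u_k}$, and by Hermiticity the correct conjugate value to $\bra{u_k} Q_A \ket{u_j}$. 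One must additionally check that every other Pauli string in $Q_A$ contributes zero to $\bra{u_j} Q_A \ket{u_k}$, which is true because each such string is a bit-flip on a specific set of sites and thus maps $\ket{u_k}$ to a unique computational basis vector that coincides with $\ket{u_j}$ only when the string indices are $(j,k)$.

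The locality claim then follows by inspection: the Pauli string indexed by $(j,k)$ acts nontrivially on $k-j$ qubits, which is at most $d$ when $A$ has bandwidth $d$, while $\Hpen_{\mathrm{unary}}$ is $2$-local, yielding overall locality $\max(d,2)$. I expect the principal difficulty to be bookkeeping rather than conceptual — one must carefully track indices to confirm that the two-term combination $\alpha_{j,k}X\otimes\cdots\otimes X - \beta_{j,k}X\otimes\cdots\otimes Y$ correctly assembles both the real and imaginary parts of $A_{j,k}$ without leakage between matrix elements. Once these computations are in place, the conclusion that $\Hebd_A$ is a Hamiltonian embedding of $A$ for sufficiently large $g$ follows immediately from \thm{perturbative-thm}.
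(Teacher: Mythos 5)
Your proposal is correct and takes essentially the same route as the paper's proof: telescope the number-operator sum for the diagonal entries, use the listed Pauli-string matrix-element identities (noting all other strings give vanishing matrix elements between codewords) to get $P_{\S} Q_A P_{\S} = A$, invoke the perturbative-embedding framework, and read off the locality from the bandwidth. One minor slip in your auxiliary remark (a fact the paper merely asserts): the boundary terms $+Z_1$ and $-Z_{n-1}$ act as ferromagnetic couplings to virtual boundary spins that favor bit $1$ equal to $1$ and bit $n-1$ equal to $0$, and the degenerate ground space consists of the single-domain-wall configurations, which are exactly the unary codewords — your stated pinning directions are reversed (e.g., $\ket{u_1}=\ket{0\cdots0}$ has bit $1$ equal to $0$), but the conclusion of a constant gap $4$ and ground space equal to $\S$ is unaffected.
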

\begin{proof}
    For any $k=1,\dots,n$,
    \begin{equation}
        \bra{u_k} \left(\alpha_1 I + \sum_{2\le j \le n} (\alpha_{j}-\alpha_{j-1}) \hat{n}_{j-1}\right) \ket{u_k} = \alpha_1 + \sum_{2\leq j \leq k} (\alpha_{j} - \alpha_{j-1}) = \alpha_{k}.
    \end{equation}
    For any $\ell,m \in \{1,\dots,n\}$, 
    \begin{align}
        \bra{u_{\ell}} Q_A \ket{u_{m}}
        &= \bra{u_{\ell}} \left(\sum_{1\le j < k \le n} X_{k-1}\otimes \dots \otimes X_{j+1} \otimes (\alpha_{j,k} X_j - \beta_{j,k} Y_j)\right) \ket{u_{m}}\\
        &= \bra{u_{\ell}} \left(X_{m-1} \otimes \dots \otimes X_{\ell+1}\otimes (\alpha_{\ell,m} X_{\ell} - \beta_{\ell,m} Y_{\ell})\right) \ket{u_{m}}\\
        &= \alpha_{\ell,m} + \beta_{\ell,m} i.
    \end{align}
    Letting $P_{\S} = \sum_{k=1}^{n} \ketbra{u_k}$, it follows that $P_{\S} Q_{A} P_{\S} = A$ and thus $\Hebd_A$ is an embedding of $A$.
    If $A$ has bandwidth $d$, then $\alpha_{j,k}=\beta_{j,k}=0$ for $|j-k|>d$, making $Q_{A}$ $d$-local.
    Since $\Hpen_{\mathrm{unary}}$ is 2-local, $\Hebd_A$ is $\max(d,2)$-local.
\end{proof}

\subsubsection{Antiferromagnetic embedding}\label{append:antiferro}
\begin{definition}[Antiferromagnetic code]
    Given an integer $j \in \{1,\dots, n\}$, the antiferromagnetic coding of $j$, denoted by $a_j$, is an $(n-1)$-bit string defined as follows:
    \begin{enumerate}
        \item when $j = 1$, $a_j$ is the $0$-$1$ alternating string starting with $0$;
        \item when $j = 2,\dots, n$, $a_j$ is generated by flipping the $(j-1)$-th bit in the previous codeword $a_{j-1}$.
    \end{enumerate}
\end{definition}

In \tab{antiferro_code}, we list the antiferromagnetic code for integers $1 \le j \le 8$.

\begin{table}[!ht]
    \centering
    \begin{tabular}{|p{2cm}|p{2cm}||p{2cm}|p{2cm}|}
     \hline
     \multicolumn{4}{|c|}{\textbf{Antiferromagnetic code ($n = 8$)}} \\
     \hline
     Basis index & Codeword & Basis index & Codeword\\
     \hline
     1 & 0101010 & 5 & 0100101\\
     2 & 0101011 & 6 & 0110101\\
     3 & 0101001 & 7 & 0010101\\
     4 & 0101101 & 8 & 1010101\\
     \hline
    \end{tabular}
    \caption{$7$-bit antiferromagnetic code representing integers $\{1,2,\dots, 8\}$.}
    \label{tab:antiferro_code}
\end{table}

The antiferromagnetic embedding of an $n$-by-$n$ Hermitian matrix $A$ is an $(n-1)$-qubit quantum operator with the embedding subspace spanned by the antiferromagnetic code, i.e., $\S = \spn\{\ket{a_1},\dots,\ket{a_n}\}$. It is easily verified that $\S$ is the ground-energy subspace of the following quantum operator:
\begin{align}\label{eqn:pen_antiferro}
    \Hpen_{\mathrm{antiferro}} &= \sum^{n-2}_{j=1}Z_{j+1}Z_j + Z_1 + (-1)^{n-1}Z_{n-1}.
\end{align}

\begin{example}
    For $n = 4$, the penalty Hamiltonian $\Hpen_{\mathrm{antiferro}}$ has 2 degenerate eigenspaces:
    \begin{align}
        E_0 &= -2: \ket{001}, \ket{010}, \ket{011},\ket{101},\\
        E_1 &= 2: \ket{000},\ket{100}, \ket{110},\ket{111},
    \end{align}
    and the energy gap between the two eigenspaces is $4$. In fact, for any $n \ge 3$, the spectral gap between the ground and the first-excited energy levels of $\Hpen_{\mathrm{antiferro}}$ is always $4$.
\end{example}

For any $1 \le j < k \le n$, we find that 
\begin{align}
    &\bra{a_j}\left(X_{k-1} \otimes \dots \otimes X_{j+1} \otimes X_j\right)\ket{a_k} = 1,~
    \bra{a_k}\left(X_{k-1} \otimes \dots \otimes X_{j+1} \otimes X_j\right)\ket{a_j} = 1,\\
    &\bra{a_j}\left(X_{k-1} \otimes \dots \otimes X_{j+1} \otimes Y_j\right)\ket{a_k} = -i,~
    \bra{a_k}\left(X_{k-1} \otimes \dots \otimes X_{j+1} \otimes Y_j\right)\ket{a_j} = i.
\end{align}

\begin{proposition}[Antiferromagnetic embedding]
    Given an $n$-dimensional Hermitian matrix $A$, we define the following $(n-1)$-qubit quantum operator:
    \begin{align}
        Q_A &= \left(\gamma I + \sum_{2\le j \le n} (-1)^{j}(\alpha_{j} - \alpha_{j-1})\hat{n}_{j-1}\right) + \sum_{1\le j < k \le n} X_{k-1}\otimes \dots \otimes X_{j+1} \otimes (\alpha_{j,k} X_j - \beta_{j,k} Y_j),
    \end{align}
    where $\alpha_j$ is the $j$-th diagonal entry of $A$, $\alpha_{j,k}$ and $\beta_{j,k}$ are the real and imaginary part of the $(j,k)$-th entry of $A$, and 
    \begin{align}
        \gamma \coloneqq \sum^n_{j=1} (-1)^{j+1}\alpha_j.
    \end{align}
    Then, the quantum operator
    \begin{align}
        \Hebd_A = g \Hpen_{\mathrm{antiferro}} + Q_A
    \end{align}
    is an embedding of $A$. Moreover, the embedding Hamiltonian $\Hebd_A$ is $\max(d,2)$-local if $A$ has band width $d$.
\end{proposition}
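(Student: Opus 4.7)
The plan follows the structure of the unary embedding argument immediately preceding, with the antiferromagnetic codewords replacing the unary ones. I will verify the three required facts: the embedding subspace $\S = \spn\{\ket{a_1}, \dots, \ket{a_n}\}$ is the ground-energy subspace of $\Hpen_{\mathrm{antiferro}}$ (so that \thm{perturbative-thm} applies), the diagonal and off-diagonal entries of $Q_A$ in the codeword basis reproduce the corresponding entries of $A$, and the locality bound $\max(d, 2)$ holds.

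The structural observation driving the proof is that, by construction, $a_j$ and $a_k$ (for $j < k$) differ in exactly the bits $j, j+1, \dots, k-1$, since each transition $a_{m-1} \to a_m$ flips only bit $m-1$. This immediately yields the off-diagonal matrix elements: the Pauli string $X_{k-1} \otimes \cdots \otimes X_{j+1} \otimes X_j$ flips precisely the bits that distinguish $\ket{a_k}$ from $\ket{a_j}$, giving $\bra{a_j}(X_{k-1} \cdots X_j)\ket{a_k} = 1$; the analogous string with $Y_j$ in place of $X_j$ produces the required imaginary factor. Consequently, the only summand in $Q_A$ coupling $\ket{a_\ell}$ to $\ket{a_m}$ is the one indexed by $(j, k) = (\ell, m)$, and it reproduces $A_{\ell, m}$, with the conjugate direction handled by Hermiticity of $Q_A$.

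For the diagonal entries, I compute $\bra{a_k}\hat{n}_{j-1}\ket{a_k}$, which equals the value of bit $j-1$ of $a_k$. The alternating pattern of $a_1$ combined with the sequential flipping rule shows this indicator equals $\mathbbm{1}[j \text{ even}]$ when $j \le k$ and $\mathbbm{1}[j \text{ odd}]$ when $j > k$. Substituting into the diagonal part $D = \gamma I + \sum_{j=2}^{n} (-1)^j (\alpha_j - \alpha_{j-1}) \hat{n}_{j-1}$, the alternating sign $(-1)^j$ aligns with the alternating parity of the indicator so that the sum telescopes and, together with the chosen constant $\gamma$, yields $\bra{a_k} D \ket{a_k} = \alpha_k$ for every $k$. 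Locality is then immediate: if $A$ has bandwidth $d$, then $\alpha_{\ell, m} = \beta_{\ell, m} = 0$ whenever $|m - \ell| > d$, so every surviving Pauli string in $Q_A$ has weight at most $d$; the number operators are $1$-local and $\Hpen_{\mathrm{antiferro}}$ is $2$-local, giving a $\max(d, 2)$-local embedding.

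The main technical obstacle is the diagonal bookkeeping. Only a single free parameter $\gamma$ is available, yet we need the $n$ constraints $\bra{a_k} D \ket{a_k} = \alpha_k$ to hold simultaneously. The proof must show that the prescribed coefficients $(-1)^j (\alpha_j - \alpha_{j-1})$ induce exactly the telescoping identity so that varying $k$ shifts the sum by a fixed offset that $\gamma$ absorbs uniformly. Carefully tracking the parities of $j$ and $k$ under the little-endian convention is the subtle step; once the bit-parity table is in hand, the remainder of the computation mirrors the unary case directly.
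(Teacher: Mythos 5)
Your proposal is correct and takes essentially the same approach as the paper's own proof: a direct matrix-element verification in the antiferromagnetic codeword basis, where the bit-parity bookkeeping makes the diagonal sum telescope to $\alpha_k$, the observation that $a_\ell$ and $a_m$ ($\ell<m$) differ in exactly bits $\ell,\dots,m-1$ isolates the single Pauli string contributing to each off-diagonal entry $\alpha_{\ell,m}+i\beta_{\ell,m}$, and the locality bound follows immediately from the bandwidth together with the $2$-locality of $\Hpen_{\mathrm{antiferro}}$. The only difference is cosmetic: the paper first rewrites the diagonal part using $\hat{n}_{j-1}=\tfrac{1}{2}(I-Z_{j-1})$ and then evaluates the same telescoping sum, whereas you evaluate the number operators on the codewords directly via the parity table.
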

\begin{proof}
    Note that
    \begin{equation}
        \left(\gamma I + \sum_{2\le j \le n} (-1)^{j}(\alpha_{j} - \alpha_{j-1})\hat{n}_{j-1}\right)
        = \alpha_1 + \sum_{2\leq j \leq n} (\alpha_j - \alpha_{j-1})\left(\frac{I-(-1)^{j}Z_{j-1}}{2}\right).
    \end{equation}
    Then for any $k=1,\dots,n$,
    \begin{equation}
        \bra{a_k} \left(\alpha_1 + \sum_{2\leq j \leq n} (\alpha_j - \alpha_{j-1})\left(\frac{I-(-1)^{j}Z_{j-1}}{2}\right)\right) \ket{a_k}
        = \alpha_1 + \sum_{2 \leq j \leq k} (\alpha_j - \alpha_{j-1})
        = \alpha_k.
    \end{equation}
    For $1 \leq \ell < m \leq n$, 
    \begin{align}
        \bra{a_{\ell}} Q_A \ket{a_{m}}
        &= \bra{a_{\ell}} \left(\sum_{1\le j < k \le n} X_{k-1}\otimes \dots \otimes X_{j+1} \otimes (\alpha_{j,k} X_j - \beta_{j,k} Y_j)\right) \ket{a_{m}}\\
        &= \bra{a_{\ell}} \left(X_{m-1} \otimes \dots \otimes X_{\ell+1}\otimes (\alpha_{\ell,m} X_{\ell} - \beta_{\ell,m} Y_{\ell})\right) \ket{a_{m}}\\
        &= \alpha_{\ell,m} + \beta_{\ell,m} i.
    \end{align}
    Letting $P_{\S} = \sum_{k=1}^{n} \ketbra{a_k}$, it follows that $P_{\S} Q_{A} P_{\S} = A$ and thus $\Hebd_A$ is an embedding of $A$.
    If $A$ has bandwidth $d$, then $\alpha_{j,k}=\beta_{j,k}=0$ for $|j-k|>d$, making $Q_{A}$ $d$-local.
    Since $\Hpen_{\mathrm{unary}}$ is 2-local, $\Hebd_A$ is $\max(d,2)$-local.
\end{proof}

\subsection{Banded circulant matrix}\label{append:circulant}
In this section, we discuss how to embed (real) symmetric circulant matrices of bandwidth $1$ (i.e., tridiagonal) into $2$-local quantum operators. Our construction can be readily generalized to circulant matrices of bandwidth $d$.
It is worth noting that the embedding discussed in this section only applies to circulant matrices of even dimensions. 

Given an even number $n$, we consider the following $n$-by-$n$ circulant matrix:
\begin{align}\label{eqn:circulant}
    C = \begin{bmatrix}
        0 & 1 & 0 & \dots & 1 \\
        1 & 0 & 1 & \dots & 0\\
        \dots & \dots & \dots & \dots & \dots\\
        0 & \dots & 1 & 0 & 1\\
        1 & \dots & 0 & 1 & 0
    \end{bmatrix}.
\end{align}
The matrix $C$ is precisely the adjacency matrix of the (unweighted) $n$-node cycle graph. We introduce two types of embeddings for the matrix $C$.

\subsubsection{Circulant unary embedding}\label{append:circ-unary}
\begin{definition}[Circulant unary code]
    Given an integer $j \in \{1,\dots, n\}$, the circulant unary coding of $j$, denoted by $c_j$, is an $\frac{n}{2}$-bit string defined as follows:
    \begin{enumerate}
        \item when $j = 1,\dots, n/2$, the first $(j-1)$ bits of $c_j$ are $1$'s and the remaining $(n/2+1-j)$ bits are $0$'s;
        \item when $j = n/2+1,\dots, n$, the codeword $c_j$ is the negation of $c_{j-n/2}$.
    \end{enumerate}
\end{definition}

In \tab{unary_code_circulant}, we list the circulant unary code for integers $1 \le j \le 8$.

\begin{table}[!ht]
    \centering
    \begin{tabular}{|p{2cm}|p{2cm}||p{2cm}|p{2cm}|}
     \hline
     \multicolumn{4}{|c|}{\textbf{Circulant unary code ($n = 8$)}} \\
     \hline
     Basis index & Codeword & Basis index & Codeword\\
     \hline
     1 & 0000 & 5 & 1111\\
     2 & 0001 & 6 & 1110\\
     3 & 0011 & 7 & 1100\\
     4 & 0111 & 8 & 1000\\
     \hline
    \end{tabular}
    \caption{$4$-bit circulant unary code representing integers $\{1,2,\dots, 8\}$.}
    \label{tab:unary_code_circulant}
\end{table}

The circulant unary embedding of the matrix $C$ is an $\frac{n}{2}$-qubit quantum operator with the embedding subspace spanned by the circulant unary code, i.e., $\S = \spn\{\ket{c_1},\dots,\ket{c_{n}}\}$. It is easily verified that $\S$ is the ground-energy subspace of the following quantum operator:
\begin{align}
    \Hpen_{\mathrm{circ-unary}} = -\sum^{n/2-1}_{j=1} Z_{j+1}Z_j  + Z_{n/2}Z_1.
\end{align}

\begin{example}
    For $n = 8$, the penalty Hamiltonian $\Hpen_{\mathrm{circ-unary}}$ has 2 degenerate eigenspaces:
    \begin{align}
        E_0 &= -2: \ket{0000}, \ket{0001}, \ket{0011},\ket{0111},\ket{1000},\ket{1100},\ket{1110},\ket{1111},\\
        E_1 &= 2: \ket{0010},\ket{0100}, \ket{0101},\ket{0110},\ket{1001},\ket{1010},\ket{1011},\ket{1101},
    \end{align}
    and the energy gap between the two eigenspaces is $4$. In fact, for any $n \ge 4$, the spectral gap between the ground and the first-excited energy levels of $\Hpen_{\mathrm{circ-unary}}$ is always $4$.
\end{example}

\begin{proposition}[Circulant unary embedding]
    We define the following $\frac{n}{2}$-qubit quantum operator:
    \begin{align}
        Q_C = \sum^{n/2}_{j=1} X_j.
    \end{align}
    Then, the quantum operator
    \begin{align}
        \Hebd_C = g \Hpen_{\mathrm{circ-unary}} + Q_C
    \end{align}
    is an embedding of the circulant matrix $C$ as in~\eqn{circulant}. 
\end{proposition}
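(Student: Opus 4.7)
The plan is to verify that the subspace $\S = \spn\{\ket{c_1},\dots,\ket{c_n}\}$ is invariant under $Q_C$ restricted to $\S$, and that $P_\S Q_C P_\S = C$. Once this is established, the statement follows from \thm{perturbative-thm} applied to $H = g\Hpen_{\mathrm{circ-unary}} + Q_C$, since $\S$ is (by the preceding discussion) the ground-energy subspace of $\Hpen_{\mathrm{circ-unary}}$. The nontrivial content is therefore the matrix element computation $\bra{c_j} Q_C \ket{c_k} = C_{j,k}$.

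The key combinatorial observation is that the circulant unary codewords $c_1,\dots,c_n$ trace out a Hamiltonian cycle on the $(n/2)$-dimensional hypercube: each consecutive pair $c_j, c_{j+1}$ (with indices modulo $n$) differs in exactly one bit. Concretely, I would verify case by case: for $1\le j \le n/2-1$, the transition $c_j \to c_{j+1}$ flips bit $j$ from $0$ to $1$; for $n/2+1 \le j \le n-1$, the transition $c_j \to c_{j+1}$ flips bit $j-n/2$ from $1$ to $0$ (by negation); the transition $c_{n/2} \to c_{n/2+1}$ flips bit $n/2$ from $0$ to $1$; and the wrap-around transition $c_n \to c_1$ flips bit $n/2$ from $1$ to $0$. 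Hence for every $j$, applying $X_k$ to $\ket{c_j}$ yields $\ket{c_{j+1}}$ for exactly one $k$ and $\ket{c_{j-1}}$ for exactly one other $k$ (indices mod $n$).

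Given this, for any $1\le j,k \le n$ we have
\begin{align}
\bra{c_j} Q_C \ket{c_k} = \sum_{\ell=1}^{n/2} \bra{c_j} X_\ell \ket{c_k}.
\end{align}
Since $X_\ell\ket{c_k}$ either equals $\ket{c_{k\pm 1 \bmod n}}$ (when the flipped bit lies on the Hamiltonian cycle edge leaving $c_k$) or lies outside $\S$, the only nonzero contributions arise when $j = k\pm 1 \bmod n$, each contributing exactly $1$. This is precisely the adjacency matrix of the $n$-cycle, i.e. the matrix $C$ of \eqn{circulant}. It remains to check that no pair of distinct codewords $c_j, c_k$ with $|j-k|\bmod n >1$ differs in exactly one bit, which would produce a spurious off-diagonal entry; this is immediate because the Hamming distance between two distinct codewords equals the shortest distance along the cycle (in fact, the codewords form a Gray-code-like sequence, so the Hamming distance is exactly the cycle distance up to a bound of $n/2$). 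Thus $P_\S Q_C P_\S = C$, as desired.

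Finally, I would note that $Q_C$ is $1$-local and $\Hpen_{\mathrm{circ-unary}}$ is $2$-local, so $\Hebd_C$ is $2$-local, and apply \thm{perturbative-thm} (with $\|R\| = \|P_{\S^\perp} Q_C P_\S\| = O(1)$ and the ground gap of $\Hpen_{\mathrm{circ-unary}}$ equal to $4$ from the example) to conclude that for sufficiently large $g$, $\Hebd_C$ is a perturbative Hamiltonian embedding of $C$. The main obstacle is the careful case analysis at the ``seams'' $j=1,\,n/2,\,n/2+1,\,n$, where the codeword structure changes between the unary and negated-unary halves; all other steps are routine verifications.
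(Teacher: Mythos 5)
Your proof is correct and follows essentially the same route as the paper's: both verify that the circulant unary codewords trace a cycle in the $(n/2)$-dimensional hypercube whose adjacency matrix is $Q_C$, so that $P_\S Q_C P_\S = C$, with the penalty term handled by the perturbative embedding framework. Your write-up is in fact somewhat more careful than the paper's one-line case statement, since you explicitly treat the seam transitions $c_{n/2}\to c_{n/2+1}$ and the wrap-around edge $c_n \to c_1$ and check that codewords at cycle distance greater than one never have Hamming distance one, which is exactly the point the paper leaves implicit.
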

\begin{proof}
    Observe that $Q_C$ is the adjacency matrix of the $(n/2)$-dimensional hypercube, containing edges between bitstrings with Hamming distance 1.
    Therefore, for any $1\leq k < \ell \leq n$,
    \begin{equation}
        \bra{c_k} Q_c \ket{c_\ell} =
        \begin{cases}
            1 & \text{if $|k-\ell|=1$,}\\
            0 & \text{otherwise.}
        \end{cases}
    \end{equation}
\end{proof}

\subsubsection{Circulant antiferromagnetic embedding}\label{append:circ-antiferro}
\begin{definition}[Circulant antiferromagnetic code]
    Given an integer $j \in \{1,\dots,n\}$, the circulant antiferromagnetic coding of $j$, denoted by $f_j$, is an $\frac{n}{2}$-bit string defined as follows:
    \begin{enumerate}
        \item when $j = 1$, the first codeword $f_1$ is the $0$-$1$ alternating string starting with $0$;
        \item when $j = 2,\dots, n/2$, $f_j$ is generated by flipping the $(j-1)$-th bit in the previous codeword $f_{j-1}$;
        \item when $j = n/2+1,\dots, n$, the codeword $f_j$ is the negation of $f_{j-n/2}$.
    \end{enumerate}
\end{definition}

In \tab{antiferro_code_circulant}, we list the circulant antiferromagnetic code for integers $1\le j \le 8$.

\begin{table}[!ht]
    \centering
    \begin{tabular}{|p{2cm}|p{2cm}||p{2cm}|p{2cm}|}
     \hline
     \multicolumn{4}{|c|}{\textbf{Circulant antiferromagnetic code ($n = 8$)}} \\
     \hline
     Basis index & Codeword & Basis index & Codeword\\
     \hline
     1 & 1010 & 5 & 0101\\
     2 & 1011 & 6 & 0100\\
     3 & 1001 & 7 & 0110\\
     4 & 1101 & 8 & 0010\\
     \hline
    \end{tabular}
    \caption{$4$-bit circulant antiferromagnetic code representing integers $\{1,2,\dots, 8\}$.}
    \label{tab:antiferro_code_circulant}
\end{table}

The circulant antiferromagnetic embedding of the matrix $C$ is an $\frac{n}{2}$-qubit quantum operator with the embedding subspace spanned by the circulant antiferromagnetic code, i.e., $\S = \{\ket{f_1},\dots,
\ket{f_n}\}$. It is easily verified that $\S$ is the ground-energy subspace of the following quantum operator:
\begin{align}
    \Hpen_{\mathrm{circ-antiferro}} = \sum^{n/2-1}_{j=1}Z_{j+1}Z_{j} - (-1)^{n/2} Z_{n/2}Z_1.
\end{align}

\begin{example}
    For $n = 8$, the penalty Hamiltonian $\Hpen_{\mathrm{circ-antiferro}}$ has 2 degenerate eigenspaces:
    \begin{align}
        E_0 &= -2: \ket{0010},\ket{0100}, \ket{0101},\ket{0110},\ket{1001},\ket{1010},\ket{1011},\ket{1101},\\
        E_1 &= 2: \ket{0000}, \ket{0001}, \ket{0011},\ket{0111},\ket{1000},\ket{1100},\ket{1110},\ket{1111},
    \end{align}
    and the energy gap between the two eigenspaces is $4$. In fact, for any $n \ge 4$, the spectral gap between the ground and the first-excited energy levels of $\Hpen_{\mathrm{circ-antiferro}}$ is always $4$.
\end{example}

\begin{proposition}[Circulant antiferromagnetic embedding]
    We define the following $\frac{n}{2}$-qubit quantum operator:
    \begin{align}
        Q_C = \sum^{n/2}_{j=1} X_j.
    \end{align}
    Then, the quantum operator
    \begin{align}
        \Hebd_C = g \Hpen_{\mathrm{circ-antiferro}} + Q_C
    \end{align}
    is an embedding of the circulant matrix $C$ as in~\eqn{circulant}.
\end{proposition}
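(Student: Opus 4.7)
My plan is to follow the argument pattern of the circulant unary case almost verbatim, since $Q_C = \sum_{j=1}^{n/2} X_j$ is unchanged. The operator $Q_C$ is the adjacency operator of the Hamming graph on $\{0,1\}^{n/2}$, so for any two codewords the matrix element $\bra{f_k} Q_C \ket{f_\ell}$ simply counts the single-bit flips taking $f_\ell$ to $f_k$, giving $1$ if $d_H(f_k, f_\ell) = 1$ and $0$ otherwise, where $d_H$ denotes Hamming distance. It therefore suffices to show that $d_H(f_j, f_\ell) = 1$ precisely when $(j,\ell)$ is an edge of the $n$-node cycle encoded by $C$, and then conclude via $P_\S Q_C P_\S = C$ together with \thm{perturbative-thm} (noting that $\S = \spn\{\ket{f_1},\ldots,\ket{f_n}\}$ is by construction the ground-energy subspace of $\Hpen_{\mathrm{circ-antiferro}}$).

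First I would confirm that each consecutive pair of codewords differs by exactly one bit. For $j = 1, \ldots, n/2 - 1$ this is the defining construction of $f_{j+1}$ from $f_j$. For $j = n/2+1, \ldots, n-1$, the relation $f_j = \overline{f_{j-n/2}}$ preserves Hamming distance, so $d_H(f_j, f_{j+1}) = d_H(f_{j-n/2}, f_{j+1-n/2}) = 1$. For the two ``seam'' pairs, observe that $f_{n/2}$ is obtained from $f_1$ by flipping bits $1, \ldots, n/2-1$, so it agrees with $f_1$ exactly at bit $n/2$; since $f_{n/2+1} = \overline{f_1}$ disagrees with $f_1$ at every bit, the pair $(f_{n/2}, f_{n/2+1})$ differs only at bit $n/2$. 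A symmetric check shows $f_n = \overline{f_{n/2}}$ differs from $f_1$ only at bit $n/2$, providing the wrap-around edge.

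Next I would rule out Hamming distance $1$ for non-cycle-adjacent pairs. Going from $f_j$ to $f_\ell$ with $j < \ell$ applies the flip sequence $s_j, s_{j+1}, \ldots, s_{\ell-1}$, where $(s_1, \ldots, s_n) = (1, 2, \ldots, n/2, 1, 2, \ldots, n/2)$ visits each position in $\{1, \ldots, n/2\}$ exactly twice. A short case split on whether the window $[j, \ell-1]$ lies in one half of the sequence or straddles the midpoint shows that the set of positions flipped an odd number of times has size
\begin{equation*}
    d_H(f_j, f_\ell) = \min(\ell - j,\ n - \ell + j),
\end{equation*}
which is precisely the graph distance in the $n$-cycle. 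This equals $1$ exactly for cycle edges, so $P_\S Q_C P_\S = C$ as desired.

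The main obstacle I anticipate is in this last combinatorial step: tracking the parity of each position's multiplicity as the window slides across the boundary at index $n/2$. The case split is shallow but must be carried out cleanly to avoid off-by-one errors when $j$ and $\ell$ lie in different halves; once the Hamming distance formula is established, identifying the nonzero matrix elements of $Q_C$ with the cycle edges is immediate.
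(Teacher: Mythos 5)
Your proof is correct and takes essentially the same approach as the paper: identify $Q_C$ as the adjacency operator of the $(n/2)$-dimensional hypercube and check that the circulant antiferromagnetic codewords trace out an $n$-cycle inside it, so that $P_{\S} Q_C P_{\S} = C$ and the perturbative embedding theorem applies. You merely carry out explicitly the combinatorial verification the paper asserts in one line — the consecutive flips, the two seam pairs, and the distance formula $d_H(f_j,f_\ell)=\min(\ell-j,\ n-\ell+j)$ — and your explicit check of the wrap-around pair $(f_n,f_1)$ covers the edge that the paper's stated condition $|k-\ell|=1$ technically leaves out.
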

\begin{proof}
    Observe that $Q_C$ is the adjacency matrix of the $(n/2)$-dimensional hypercube, containing edges between bitstrings with Hamming distance 1.
    Therefore, for any $1\leq k < \ell \leq n$,
    \begin{equation}
        \bra{f_k} Q_c \ket{f_\ell} =
        \begin{cases}
            1 & \text{if $|k-\ell|=1$,}\\
            0 & \text{otherwise.}
        \end{cases}
    \end{equation}
\end{proof}

\subsection{Arbitrary sparse matrices: one-hot embedding}\label{append:one-hot}
For arbitrary Hermitian matrix $H$, the bandwidth of $H$ could be huge even if it is sparse. For such target matrices, the unary or antiferromagnetic embedding protocol will lead to a Hamiltonian embedding with a high locality (depending on the bandwidth of $H$). In this section, we introduce a new type of Hamiltonian embedding based on the one-hot code that has a constant locality.

\begin{definition}[One-hot code]
    Given an integer $j \in \{1,\dots, n\}$, the one-hot coding of $j$, denoted by $h_j$, is an $n$-bit string whose all but the $j$-th bit are $0$'s. 
\end{definition}

In \tab{one_hot_code}, we list the one-hot code for integers $1\le j \le 8$.

\begin{table}[!ht]
    \centering
    \begin{tabular}{|p{2cm}|p{2cm}||p{2cm}|p{2cm}|}
     \hline
     \multicolumn{4}{|c|}{\textbf{One-hot code ($n = 8$)}} \\
     \hline
     Basis index & Codeword & Basis index & Codeword\\
     \hline
     1 & 00000001 & 5 & 00010000\\
     2 & 00000010 & 6 & 00100000\\
     3 & 00000100 & 7 & 01000000\\
     4 & 00001000 & 8 & 10000000\\
     \hline
    \end{tabular}
    \caption{$8$-bit antiferromagnetic code representing integers $\{1,2,\dots, 8\}$.}
    \label{tab:one_hot_code}
\end{table}

\subsubsection{One-hot embedding with penalty}\label{append:onehot}
We remark that the edit distance between any two codewords in the one-hot code is $2$. This nice property allows us to embed any matrix element with a $2$-local operator. Indeed, for $1 \le j < k \le n$, we find that 
\begin{align}
    &\bra{h_j}\left(X_{k} \otimes X_j\right)\ket{h_k} = 1,~
    \bra{h_k}\left(X_{k} \otimes X_j\right)\ket{h_j} = 1,\\
    &\bra{h_j}\left(X_{k} \otimes Y_j\right)\ket{h_k} = i,~
    \bra{h_k}\left(X_{k} \otimes Y_j\right)\ket{h_j} = -i.
\end{align}

Meanwhile, the matrix elements on the main diagonal of the target Hamiltonian can be embedded into the number operator. For $1\le j, k \le n$, we observe that (note that the number operator $\hat{n}_j$ is defined in \eqn{number-operator})
\begin{align}
    \bra{h_j}\hat{n}_{j}\ket{h_j} = \begin{cases}
        0 & (j \neq k)\\
        1 & (j = k).
    \end{cases}
\end{align}

Denote $\S$ as the embedding subspace spanned by the one-hot code states. Unfortunately, neither $X_k\otimes X_j$ nor $X_k\otimes Y_j$ is block-diagonalized by $\S$ and $\S^\perp$. Therefore, we need a penalty Hamiltonian in the full embedding Hamiltonian.

The one-hot code are $n$-bit strings with Hamming weight $1$, so they span the $(n-2)$-eigenspace of the ``sum-of-Z'' operator:
\begin{align}
    \Hpen_{\mathrm{one-hot}} = \sum^n_{j=1} Z_j.
\end{align}
We also note that the one-hot code spans the ground-energy subspace of the following operator:
\begin{align}\label{eqn:one-hot-ground}
    \Hpen_{\mathrm{one-hot,g}} = \left(\sum^n_{j=1} \hat{n}_j - 1\right)^2.
\end{align}

\begin{example}
    For $n = 3$, the penalty Hamiltonian $\Hpen_{\mathrm{one-hot}}$ has $4$ eigenspaces that correspond to computational basis with Hamming weight $3, 2, 1, 0$, respectively:
    \begin{align}
        E_0 &= -3: \ket{111},\\
        E_1 &= -1: \ket{011},\ket{101},\ket{110},\\
        E_2 &= 1: \ket{001}, \ket{010}, \ket{100},\\
        E_3 &= 3: \ket{000}.
    \end{align}
    Clearly, the one-hot code spans the $1$-eigenspace. Meanwhile, the penalty Hamiltonian $\Hpen_{\mathrm{one-hot,g}}$ has $3$ eigenspaces:
    \begin{align}
        E_0 &= 0: \ket{011},\ket{101},\ket{110},\\
        E_1 &= 1: \ket{000}, \ket{011},\ket{101},\ket{110},\\
        E_2 &= 4: \ket{111},\\
    \end{align}
    and the one-hot code spans the ground-energy subspace.
\end{example}

\begin{proposition}[One-hot embedding with penalty]
    Given an $n$-dimensional Hermitian matrix $A$, we define the following $n$-qubit quantum operator:
    \begin{align}
        Q_A = \left(\sum^n_{j=1} \alpha_j \hat{n}_j\right) + \left(\sum_{1 \le j < k \le n} \alpha_{j,k} X_k \otimes X_j + \beta_{j,k} X_k\otimes Y_j\right),
    \end{align}
    where $\alpha_j$ is the $j$-th diagonal entry of $A$, $\alpha_{j,k}$ and $\beta_{j,k}$ are the real and imaginary part of the $(j,k)$-th entry of $A$, respectively.
    Then, the $2$-local quantum operator 
    \begin{align}
        \Hebd_A = g \Hpen_{\mathrm{one-hot}} + Q_A 
    \end{align}
    is an embedding of $A$.
\end{proposition}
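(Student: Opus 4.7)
The plan is to verify directly that $P_\S Q_A P_\S = A$ on the one-hot subspace $\S = \spn\{\ket{h_1},\dots,\ket{h_n}\}$, and then invoke the perturbative Hamiltonian embedding theorem (\thm{perturbative-thm}) to conclude that $\Hebd_A = g\Hpen_{\mathrm{one-hot}} + Q_A$ is an embedding of $A$ for sufficiently large $g$. The calculation splits cleanly into diagonal and off-diagonal matrix elements, using only the Pauli-action identities stated just before the proposition.

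First, I would compute $\bra{h_\ell} Q_A \ket{h_m}$ for all $\ell,m \in \{1,\dots,n\}$. For the diagonal case $\ell = m = j$, the off-diagonal terms $X_k \otimes X_j$ and $X_k \otimes Y_j$ change the Hamming weight by $\pm 2$ or $0$ via bit-flips, so they annihilate against $\ket{h_j}$; only $\sum_k \alpha_k \hat{n}_k$ survives, giving $\alpha_j = A_{j,j}$. For off-diagonal entries with $\ell = j < k = m$, the number-operator piece contributes zero (it is diagonal in the computational basis), while $\bra{h_j}(X_k \otimes X_j)\ket{h_k} = 1$ and $\bra{h_j}(X_k \otimes Y_j)\ket{h_k} = i$ combine to yield $\alpha_{j,k} + i\beta_{j,k} = A_{j,k}$. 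Hermitian symmetry on the other triangle is automatic because $Q_A$ is manifestly Hermitian. Thus $P_\S Q_A P_\S = A$.

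Next, I would argue that $\S$ is an isolated eigenspace of $\Hpen_{\mathrm{one-hot}}$ with a constant spectral gap: taking the ground-subspace form $(\sum_j \hat{n}_j - 1)^2$, $\S$ is exactly the zero-eigenspace and the remainder of the spectrum lies at eigenvalues $\ge 1$. Consequently, in the block decomposition $\Hebd_A = g\Hpen_{\mathrm{one-hot}} + Q_A$, the diagonal block $G = B + gC$ satisfies $\lambda_{\min}(G) \ge g - \|B\|$, while the upper block equals $A$. Choosing $g$ large enough that $\kappa = \|R\|/\Delta < 1/16$ places us in the regime of \thm{perturbative-thm}, which certifies that $\Hebd_A$ is a $(q,\eta,\epsilon)$-embedding of $A$ with $\eta = O(1/g)$ and $\epsilon = O(\|R\|/g)$.

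The main obstacle is essentially bookkeeping rather than conceptual: one must confirm that the chosen $\Hpen_{\mathrm{one-hot}}$ indeed places $\S$ as the (unique) ground subspace so that the hypothesis of \thm{perturbative-thm} is met, and then verify that the off-diagonal coupling $R = P_{\S^\perp}(g\Hpen_{\mathrm{one-hot}} + Q_A)P_\S = P_{\S^\perp}Q_A P_\S$ is bounded independently of $g$ (it is, since $Q_A$ is a fixed operator). If instead one prefers the shift-form $\sum_j Z_j$, one simply subtracts the constant $(n-2)g$ from $\Hpen_{\mathrm{one-hot}}$, which is harmless up to a global phase in the dynamics. After this, the rest of the argument is immediate.
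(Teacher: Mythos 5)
Your proposal is correct and follows essentially the same route as the paper: the paper's entire proof is the matrix-element check $\bra{h_j}Q_A\ket{h_j}=\alpha_j$ and $\bra{h_\ell}Q_A\ket{h_m}=\alpha_{\ell,m}+i\beta_{\ell,m}$ on the one-hot codewords, i.e.\ $P_\S Q_A P_\S = A$, which is exactly your first step; your additional invocation of \thm{perturbative-thm} makes explicit the perturbative step the paper leaves implicit. One small caveat in your side remark: subtracting $(n-2)g$ from $g\sum_j Z_j$ shifts $\S$ to the zero eigenspace but does \emph{not} make it the ground subspace (Hamming-weight-$\ge 2$ sectors then sit at $2-2w<0$), so the hypotheses of \thm{perturbative-thm} as stated are only literally met with the $\bigl(\sum_j \hat{n}_j-1\bigr)^2$ penalty you use in your main argument; for $\sum_j Z_j$ one should instead appeal to spectral isolation of $\S$ (separation $\sim g$ on both sides, and $Q_A$ only couples $\S$ to the weight-$0$ and weight-$3$ sectors), a point the paper itself glosses over.
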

\begin{proof}
    For any $j=1,\dots,n$,
    \begin{equation}
        \bra{h_j} \Hebd_A \ket{h_j} = \bra{h_j} \left(\sum^n_{j=1} \alpha_j \hat{n}_j\right) \ket{h_j} = \alpha_j.
    \end{equation}
    For $1 \leq \ell < m \leq n$, 
    \begin{align}
        \bra{h_{\ell}} \Hebd_A \ket{h_{m}} 
        &= \frac{1}{2}\bra{h_{\ell}} \left(\sum_{1 \le j < k \le n} \alpha_{j,k} (X_k \otimes X_j) + \beta_{j,k} (X_k\otimes Y_j)\right) \ket{h_{m}}\\
        &= \bra{h_{\ell}} \left( \alpha_{\ell,m} (X_{m} \otimes X_{\ell} + \beta_{\ell,m} (X_{m} \otimes Y_{\ell})\right) \ket{h_{m}}\\
        &= \alpha_{\ell,m} + \beta_{\ell,m} i
    \end{align}
    Thus, $\Hebd_A$ is an embedding of $A$.
\end{proof}

\subsubsection{Penalty-free one-hot embedding}\label{append:onehot_penfree}
The one-hot code allows one to design the perturbation Hamiltonian to be block-diagonal, thereby eliminating the need for a penalty term.
In physics, the one-hot code is known as the single-excitation subspace, and the leakage can be fully eliminated by considering excitation-preserving Hamiltonians.
In the digital setting, this corresponds to excitation-preserving gates, which often arise in simulations of fermionic systems \cite{arute2020observation}. 

There are two parameterized gates which are particularly relevant to the one-hot code.
The first is the fSim gate \cite{foxen2020demonstrating} which acts as a Pauli-$X$ rotation in the $\{\ket{01},\ket{10}\}$ subspace.
The other is the RBS gate \cite{johri2021nearest, mathur2021medical}, which acts as a Pauli-$Y$ rotation in the $\{\ket{01},\ket{10}\}$ subspace.
For both of these gates, we can ignore the local phase in the $\{\ket{00},\ket{11}\}$ subspace.
Based on these observations, we consider the following two Hamiltonians:
\begin{align}
    \hat{X} &\coloneqq \begin{pmatrix}
        0 & 0 & 0 & 0\\
        0 & 0 & 1 & 0\\
        0 & 1 & 0 & 0\\
        0 & 0 & 0 & 0
    \end{pmatrix}
    = \frac{1}{2}(X \otimes X + Y \otimes Y) = \ket{10}\bra{01}+\ket{01}\bra{10},\\
    \hat{Y} &\coloneqq \begin{pmatrix}
        0 & 0 & 0 & 0\\
        0 & 0 & -i & 0\\
        0 & i & 0 & 0\\
        0 & 0 & 0 & 0
    \end{pmatrix}
    = \frac{1}{2}(Y \otimes X - X \otimes Y) = i\ket{10}\bra{01}-i\ket{01}\bra{10}.
\end{align}
Note that $\comm{X \otimes X}{Y \otimes Y}=\comm{X \otimes Y}{Y \otimes X}=0$, giving rise to a straightforward implementation of $e^{-i\hat{X}}$ and $e^{-i\hat{Y}}$ each using two parameterized two-qubit gates.

In particular, the subspace spanned by the one-hot code is an invariant subspace of $\hat{X}$ and $\hat{Y}$.
That is, for $1\leq j < k \leq n$,
\begin{align}
    &\frac{1}{2} \left(X_{k} \otimes X_{j} + Y_{k} \otimes Y_{j}\right) \ket{h_j} = \ket{h_k},~
    \frac{1}{2} \left(X_{k} \otimes X_{j} + Y_{k} \otimes Y_{j}\right) \ket{h_k} = \ket{h_j},\\
    &\frac{1}{2} \left(X_{k} \otimes Y_{j} - Y_{k} \otimes X_{j}\right) \ket{h_j} = i\ket{h_k},~
    \frac{1}{2} \left(X_{k} \otimes X_{j} - Y_{k} \otimes Y_{j}\right) \ket{h_k} = -i\ket{h_j}.
\end{align}

Similar to the one-hot embedding with penalty, any matrix element is embedded with a $2$-local operator.
Indeed, for $1 \le j < k \le n$, 
\begin{align}
    &\bra{h_j}\frac{\left(X_{k} \otimes X_j + Y_{k} \otimes Y_{j}\right)}{2}\ket{h_k} = 1,~
    \bra{h_k}\frac{\left(X_{k} \otimes X_{j} + Y_{k} \otimes Y_{j}\right)}{2}\ket{h_j} = 1,\\
    &\bra{h_j}\frac{\left(X_{k} \otimes Y_{j} - Y_{k} \otimes X_{j}\right)}{2}\ket{h_k} = i,~
    \bra{h_k}\frac{\left(X_{k} \otimes Y_{j} - Y_{k} \otimes X_{j}\right)}{2}\ket{h_j} = -i.
\end{align}
Meanwhile, for $1\le j, k \le n$, we observe that
\begin{align}
    \bra{h_j}n_{j}\ket{h_j} = \begin{cases}
        0 & (j \neq k)\\
        1 & (j = k).
    \end{cases}
\end{align}

\begin{proposition}[One-hot embedding without penalty]
    Given an $n$-dimensional Hermitian matrix $A$, we define the following $n$-qubit quantum operator:
    \begin{align}
        \Hebd_A = \left(\sum^n_{j=1} \alpha_j \hat{n}_j\right) + \frac{1}{2}\left(\sum_{1 \le j < k \le n} \alpha_{j,k} (X_k \otimes X_j + Y_k \otimes Y_j) + \beta_{j,k} (X_k\otimes Y_j - Y_k \otimes X_j)\right),
    \end{align}
    where $\alpha_j$ is the $j$-th diagonal entry of $A$, $\alpha_{j,k}$ and $\beta_{j,k}$ are the real and imaginary part of the $(j,k)$-th entry of $A$, respectively.
    Then, the $2$-local quantum operator $\Hebd_A$ is an embedding of $A$.
\end{proposition}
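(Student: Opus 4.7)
The plan is to verify Definition 1 directly with the trivial choice $U = I$, $\eta = 0$, and $\epsilon = 0$, so that $\Hebd_A$ is an exact $(n, 0, 0)$-embedding of $A$. Two things must be checked: (i) block-diagonality, i.e., $P_{\S^\perp}\Hebd_A P_\S = 0$ where $\S = \spn\{\ket{h_1},\dots,\ket{h_n}\}$, and (ii) exactness of the upper block, i.e., $\bra{h_j}\Hebd_A\ket{h_k} = A_{jk}$ for all $1\le j,k\le n$.

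For (i), I would argue that every summand of $\Hebd_A$ commutes with the total Hamming-weight operator $\hat{N} \coloneqq \sum_{j=1}^{n}\hat{n}_j$. Since $\S$ is precisely the eigenspace of $\hat{N}$ with eigenvalue $1$, this commutation immediately gives $\S$-invariance and hence block-diagonality. The $\hat{n}_j$ terms trivially commute with $\hat{N}$. For the two-qubit pieces, I would rewrite them in ladder-operator form: with $\sigma_j^{\pm} = (X_j \mp i Y_j)/2$ one obtains
\begin{equation*}
\tfrac{1}{2}(X_k X_j + Y_k Y_j) = \sigma_k^{+}\sigma_j^{-} + \sigma_k^{-}\sigma_j^{+}, \qquad \tfrac{1}{2}(X_k Y_j - Y_k X_j) = i\bigl(\sigma_k^{+}\sigma_j^{-} - \sigma_k^{-}\sigma_j^{+}\bigr),
\end{equation*}
so each monomial contains exactly one raising and one lowering operator and therefore preserves $\hat{N}$. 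This is exactly the statement already highlighted in the excerpt that $\S$ is an invariant subspace of $\hat{X}$ and $\hat{Y}$.

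For (ii), I would compute the matrix elements of $\Hebd_A$ in the basis $\{\ket{h_j}\}$ directly, reusing the formulas already listed in the excerpt. For diagonal entries, $\bra{h_j}\hat{n}_\ell\ket{h_j} = \delta_{j\ell}$, so only the $\ell = j$ number operator contributes, giving $\alpha_j$; and the two-qubit monomials $\sigma_k^{\pm}\sigma_j^{\mp}$ all vanish on $\ket{h_j}\bra{h_j}$ since they shift excitation location. For off-diagonal entries with $j < k$, the only surviving two-qubit summand is the one on sites $(j,k)$, and the action identities given just above the proposition yield
\begin{equation*}
\bra{h_j}\tfrac{1}{2}(X_kX_j+Y_kY_j)\ket{h_k} = 1, \qquad \bra{h_j}\tfrac{1}{2}(X_kY_j - Y_kX_j)\ket{h_k} = i,
\end{equation*}
so the contribution is $\alpha_{j,k} + i\beta_{j,k} = A_{jk}$. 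Hermiticity of $A$ and of $\Hebd_A$ then handles the $j > k$ case automatically.

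There is essentially no genuine obstacle here; the work is the bookkeeping in (ii). The only subtle point worth emphasizing is why the $Y\!\otimes\!Y$ term is the exact correction needed to eliminate leakage — without it, $X_kX_j$ alone would map $\ket{h_j}$ to $\ket{h_k}$ but also couple the all-zero state to $\ket{0\cdots 01\cdots 010\cdots}$ (Hamming weight $2$), violating (i). Pairing it with $Y_kY_j$ projects onto the hopping term $\sigma_k^{+}\sigma_j^{-} + \sigma_k^{-}\sigma_j^{+}$, which is exactly the mechanism that makes this embedding penalty-free. The same observation explains the $-Y_kX_j$ companion to $X_kY_j$ for the imaginary part.
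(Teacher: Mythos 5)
Your proof is correct and takes essentially the same route as the paper's: a direct computation of the matrix elements of $\Hebd_A$ on the one-hot codewords, combined with invariance of the single-excitation subspace, which the paper asserts from the displayed action of the hopping terms and you justify a bit more explicitly via commutation with $\hat{N}=\sum_j \hat{n}_j$. (One trivial slip that does not affect the argument: with $\sigma^{\pm}=(X\mp iY)/2$ one has $\tfrac{1}{2}(X_kY_j-Y_kX_j)=i\bigl(\sigma_k^{-}\sigma_j^{+}-\sigma_k^{+}\sigma_j^{-}\bigr)$, i.e.\ the opposite sign of the identity you wrote.)
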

\begin{proof}
    For any $j=1,\dots,n$,
    \begin{equation}
        \bra{h_j} \Hebd_A \ket{h_j} = \bra{h_j} \left(\sum^n_{j=1} \alpha_j \hat{n}_j\right) \ket{h_j} = \alpha_j.
    \end{equation}
    For $1 \leq \ell < m \leq n$, 
    \begin{align}
        \bra{h_{\ell}} \Hebd_A \ket{h_{m}} 
        &= \frac{1}{2}\bra{h_{\ell}} \left(\sum_{1 \le j < k \le n} \alpha_{j,k} (X_k \otimes X_j + Y_k \otimes Y_j) + \beta_{j,k} (X_k\otimes Y_j - Y_k \otimes
 X_j)\right) \ket{h_{m}}\\
        &= \frac{1}{2}\bra{h_{\ell}} \left( \alpha_{\ell,m} (X_{m} \otimes X_{\ell} + Y_{m} \otimes Y_{\ell}) + \beta_{\ell,m} (X_{m} \otimes Y_{\ell} - Y_{m} \otimes X_{\ell})\right) \ket{h_{m}}\\
        &= \alpha_{\ell,m} + \beta_{\ell,m} i
    \end{align}
    Since the encoding subspace for the one-hot code is an invariant subspace of $\Hebd_A$, no penalty Hamiltonian is needed.
    Thus, $\Hebd_A$ is an embedding of $A$.
\end{proof}

\section{Methodology of resource analysis}\label{append:methodology}
For each of the quantum simulation tasks considered, we formulate a target goal which corresponds to simulating the Hamiltonian for some time $T$, typically depending on the system size.
We numerically compute an estimate of the resources (i.e. total gate count) required to perform each target task using Hamiltonian embedding.
In all of the resource estimates considered, we do not consider any hardware specific error or additional overheads that may come from error correction.

As a baseline, we consider the naive approach of computing the Pauli decomposition of a Hamiltonian and performing simulation by Trotter-Suzuki product formulas.
The Pauli decomposition of a Hamiltonian implicitly assumes the use of standard binary encoding to represent each computational basis state.
The standard binary encoding is also the typical approach used extensively in quantum algorithms (for example, in the commonly used circuit for the quantum Fourier transform).
When the system size $N$ is not a power of 2, we start from the problem Hamiltonian to a larger one of dimension $2^{\lceil \lg N \rceil} \times 2^{\lceil \lg N \rceil}$ and remove any off-diagonal terms corresponding to edges between the target subgraph and the unused component.
When the problem Hamiltonian has nontrivial diagonal terms, we additionally modify the diagonal terms to ensure a correct encoding of the desired problem Hamiltonian.

For each task experimentally demonstrated on the IonQ processor, we compare the estimated gate count of the standard binary encoding with other Hamiltonian embedding schemes.
For a fair comparison, we fix the same Hamiltonian simulation method (i.e. the choice of product formula) and the desired precision across the different embedding methods.
For the standard binary encoding, we use Qiskit \cite{Qiskit} to compute the Pauli decomposition of the (possibly modified) problem Hamiltonian and compile the circuit to one- and two-qubit gates.
In addition to optimizing the circuit using the Qiskit transpiler, we further optimize the circuit using TKET \cite{sivarajah2020t}. 
The basis gate set is chosen to include parameterized single-qubit Pauli rotations and $XX$-rotation gates: $\{R_x(\theta), R_y(\theta), R_z(\theta), R_{xx}(\theta)\}$.
The additive Trotter error is estimated by Monte Carlo sampling.
In all resource estimates performed, we use binary search to numerically compute the Trotter number to achieve error within $\epsilon = 5\times 10^{-2}$.
For reference, we also include theoretical estimates of the gate count, computed using the worst-case Trotter error bound for the unary and one-hot embeddings \cite{childs2019faster, childs2021theory}.
We fit the data to power laws to obtain estimates of the asymptotic gate complexity.

For the experiments on QuEra, we do not perform a resource comparison between the standard binary encoding and the use of Hamiltonian embedding for simulating real-space dynamics.
The programmability of analog quantum simulators naturally forbids the use of dense encodings (such as the standard binary encoding and the Gray code) to simulate the desired problem Hamiltonian directly.
For the standard binary code, adjacent vertices in the graph of the problem Hamiltonian may have Hamming distance larger than 1 (for instance, bitstrings 0111 and 1000 representing 7 and 8, respectively).
On the other hand, the Gray code suffers from the issue of having unwanted interactions, even if the Hamming distance is small (for instance, bitstrings 0111 and 1111 representing 7 and 15, respectively).

\section{Quantum walk on graphs}\label{append:qwalk}

Given a graph $G = (V, E)$, the continuous-time quantum walk on the graph $G$ is described by the Schr\"odinger equation,
\begin{align}\label{eqn:quantum-walk-equation}
    i \frac{\d}{\d t}\ket{\psi} = L \ket{\psi},
\end{align}
where $L$ is the graph Laplacian defined by:
\begin{align}\label{eqn:graph-laplacian}
    (L)_{j,k} = \begin{cases}
        0 & (j,k) \notin E\\
        1 & (j,k) \in E\\
        - \deg(j) & j = k
    \end{cases}
\end{align}

In this section, we construct Hamiltonian embeddings of the graph Laplacian $L$ for several important families of graphs, including finite-dimensional lattices and trees.

\subsection{Lattice graphs}

In this section, we discuss the embedding of $d$-dimensional Lattice graphs with two types of boundary conditions: the regular lattice graphs are grids with open boundaries, and the periodic lattice graphs have periodic boundary conditions (i.e., torus-like shape). For both types of graphs, the size of the graph Laplacian $L$ scales exponentially in dimension $d$. By leveraging a tensor-product decomposition of $L$, we construct $2$-local Hamiltonian embeddings of the graph Laplacian with $O(d)$ qubits.

\subsubsection{Regular lattice graphs}\label{append:reg_lattice}
\begin{definition}[Regular lattice graphs]
    Fix an integer $N \ge 1$, the $d$-dimensional regular lattice graph $\mathcal{G}_{\mathrm{reg}} = (\mathcal{V}_{\mathrm{reg}}, \mathcal{E}_{\mathrm{reg}})$ has $N^d$ vertices:
    \begin{align}
        \V_{\mathrm{reg}} = \{\mathbf{v} = (v_1,v_2,\dots, v_d): v_1,\dots,v_d \in \{1,\dots,N\}\},
    \end{align}
    and there is an edge between $\mathbf{u}$ and $\mathbf{v}$ if and only if their coordinates differ by 1 at a single site. In other words, $(\mathbf{u}, \mathbf{v}) \in \mathcal{E}_{\mathrm{reg}}$ if there exists an index $k\in \{1,\dots,d\}$ such that $|v_k - u_k| = 1$ and $v_j = u_j$ for all $j \neq k$.
\end{definition}

Regular lattice graphs are sometimes referred to as grid graphs. The graph Laplacian of the $d$-dimensional regular lattice graph is given by
\begin{align}\label{eqn:d-regular-laplacian}
    L_{\mathrm{reg}} = \sum^d_{k=1} I \otimes \dots \otimes \underbrace{L_{\mathrm{chain}}}_{\text{the $k$-th operator}} \otimes \dots \otimes I,
\end{align}
where $L_{\mathrm{chain}}$ is the graph Laplacian of the 1D chain graph (with $N$ nodes):
\begin{align}
    L_{\mathrm{chain}} = \begin{bmatrix} -1 & 1 & & \\
    1 & -2 & 1 & \\
    ...& ... & ... &...\\
    & 1 & -2 & 1\\
    & & 1 & -1\\
    \end{bmatrix}
\end{align}

From \eqn{d-regular-laplacian}, it is clear that the graph Laplacian $L_{\mathrm{reg}}$ is the composition of $d$ independent copies of $L_{\mathrm{chain}}$. Since the Laplacian of the 1D chain graph is a tridiagonal matrix, we can use an embedding scheme in \append{band} or the one-hot embedding\footnote{It is worth noting that we must use the penalty Hamiltonian defined in \eqn{one-hot-ground} because \lem{composition} requires the embedding subspace to be the ground-energy subspace of $\Hpen$.} (see \append{one-hot}) to construct an embedding of $L_{\mathrm{chain}}$.

\begin{enumerate}
    \item Unary embedding of $L_{\mathrm{chain}}$: $$\Hpen = -\sum^{N-2}_{j=1} Z_{j+1}Z_j + Z_1 - Z_{N-1}, \quad Q = (-I-\hat{n}_1+\hat{n}_{N-1}) + \sum^{N-1}_{j=1} X_j.$$
    \item Antiferromagnetic embedding of $L_{\mathrm{chain}}$: $$\Hpen = \sum^{N-2}_{j=1}Z_{j+1}Z_j + Z_1 + (-1)^{N-1}Z_{N-1}, \quad Q = -2(N-1)I - \hat{n}_1 + (-1)^N \hat{n}_{N-1} + \sum^{N-1}_{j=1} X_j.$$
    \item One-hot embedding of $L_{\mathrm{chain}}$ (with penalty): $$\Hpen = \left(\sum^N_{j=1} \hat{n}_j - 1\right)^2, \quad Q = \left(-\hat{n}_1 - \hat{n}_N -2\sum^{N-1}_{j=2} \hat{n}_j\right) + \sum^{N-1}_{j=1} X_{j+1}X_j.$$
    \item Penalty-free one-hot embedding of $L_{\mathrm{chain}}$: $$\Hebd = \left(-\hat{n}_1 - \hat{n}_N -2\sum^{N-1}_{j=2} \hat{n}_j\right) + \sum^{N-1}_{j=1} \left(X_{j+1}X_j + Y_{j+1}Y_j\right).$$
\end{enumerate}

Next, we use \lem{composition} to construct an embedding of $L_{\mathrm{reg}}$ using $d$ independent embeddings of $L_{\mathrm{chain}}$. In \tab{reg_resource}, we summarize the resources required in different Hamiltonian embeddings of $L_{\mathrm{reg}}$.

\begin{table}[!ht]
    \centering
    \begin{tabular}{|p{4cm}|p{3cm}|p{4.5cm}|p{2.5cm}|}
     \hline
     \textbf{Embedding scheme} & \textbf{\# of qubits} & \textbf{\# of 2-local operators} & \textbf{Max. weight}\\
     \hline
     Unary & $d(N-1)$ & $d(N-2)$ & 2\\
     \hline
     Antiferromagnetic & $d(N-1)$ & $d(N-2)$ & 2\\
     \hline
     One-hot w/ penalty & $dN$ & $d(N-1)(N+2)/2$ & 2\\
     \hline
     Penalty-free one-hot & $dN$ & $2d(N-1)$ & 2\\
     \hline
    \end{tabular}
    \caption{Resource count for various Hamiltonian embeddings of $L_{\mathrm{reg}}$.}
    \label{tab:reg_resource}
\end{table}

\subsubsection{Periodic lattice graphs}
\begin{definition}[Periodic lattice graphs]
    Fix an integer $N \ge 1$, the $d$-dimensional periodic lattice graph 
    $\mathcal{G}_{\mathrm{prd}} = (\mathcal{V}_{\mathrm{prd}}, \mathcal{E}_{\mathrm{prd}})$ has $N^d$ vertices:
    \begin{align}
        \V_{\mathrm{prd}} = \{\mathbf{v} = (v_1,v_2,\dots, v_d): v_1,\dots,v_d \in \Z_N\},
    \end{align}
    and there is an edge between $\mathbf{u}$ and $\mathbf{v}$ if and only if there exists and index $k\in \{1,\dots,d\}$ such that $v_k = u_k \pm 1$ and $v_j = u_j$ for all $j \neq k$.
\end{definition}

The graph Laplacian of the $d$-dimensional periodic lattice graph is given by
\begin{align}\label{eqn:d-periodic-laplacian}
    L_{\mathrm{prd}} = \sum^d_{k=1} I \otimes \dots \otimes \underbrace{L_{\mathrm{cycle}}}_{\text{the $k$-th operator}} \otimes \dots \otimes I,
\end{align}
where $L_{\mathrm{cycle}}$ is the graph Laplacian of the 1D cycle graph (with $N$ nodes):
\begin{align}
    L_{\mathrm{cycle}} = \begin{bmatrix} -2 & 1 & & 1\\
    1 & -2 & 1 & \\
    ...& ... & ... &...\\
    & 1 & -2 & 1\\
    1 & & 1 & -2\\
    \end{bmatrix}
\end{align}

Similarly, the graph Laplacian $L_{\mathrm{prd}}$ is the composition of $d$ independent copies of $L_{\mathrm{cycle}}$. We observe that $L_{\mathrm{cycle}}$ is a tridiagonal circulant matrix, so we can use an embedding scheme in \append{circulant} or the one-hot embedding to construct an embedding of $L_{\mathrm{cycle}}$. Note that we omit the $-2$ global phase in the following embeddings.

\begin{enumerate}
    \item Circulant unary embedding of $L_{\mathrm{cycle}}$ ($N$ must be an even number): $$\Hpen = -\sum^{N/2-1}_{j=1} Z_{j+1}Z_j  + Z_{N/2}Z_1, \quad Q = \sum^{N/2}_{j=1} X_j.$$
    \item Circulant antiferromagnetic embedding of $L_{\mathrm{cycle}}$ ($N$ must be an even number): $$\Hpen = \sum^{N/2-1}_{j=1}Z_{j+1}Z_{j} - (-1)^{N/2} Z_{N/2}Z_1, \quad Q = \sum^{N/2}_{j=1} X_j.$$
    \item One-hot embedding of $L_{\mathrm{cycle}}$ (with penalty, $N$ can be even or odd): $$\Hpen = \left(\sum^N_{j=1} \hat{n}_j - 1\right)^2, \quad Q = \sum^{N-1}_{j=1} X_{j+1}X_j + X_N X_1.$$
    \item Penalty-free one-hot embedding of $L_{\mathrm{cycle}}$: $$\Hebd = \sum^{N-1}_{j=1} \left(X_{j+1}X_j + Y_{j+1}Y_j\right) + X_N X_1 + Y_N Y_1.$$
\end{enumerate}

An embedding of $L_{\mathrm{prd}}$ can be obtained by \lem{composition}. In \tab{prd_resource}, we summarize the resources required in different Hamiltonian embeddings of $L_{\mathrm{prd}}$.

\begin{table}[!ht]
    \centering
    \begin{tabular}{|p{4cm}|p{3cm}|p{4.5cm}|p{2.5cm}|}
     \hline
     \textbf{Embedding scheme} & \textbf{\# of qubits} & \textbf{\# of 2-local operators} & \textbf{Max. weight}\\
     \hline
     Circ-Unary & $dN/2$ & $dN/2$ & 2\\
     \hline
     Circ-Antiferromagnetic & $dN/2$ & $dN/2$ & 2\\
     \hline
     One-hot w/ penalty & $dN$ & $dN(N+1)/2$ & 2\\
     \hline
     Penalty-free one-hot & $dN$ & $2dN$ & 2\\
     \hline
    \end{tabular}
    \caption{Resource count for various Hamiltonian embeddings of $L_{\mathrm{prd}}$.}
    \label{tab:prd_resource}
\end{table}

\subsubsection{Experiments on IonQ}
We conduct experiments for quantum walk on both the 1D chain and 1D cycle graphs on the IonQ Aria-1 processor.

For both the 1D chain and 1D cycle, we demonstrate quantum walk on $N=15$ vertices with the penalty-free one-hot embedding.
The initial state is taken to be the middle vertex in the case of the regular lattice, prepared using a single one-qubit gate.
Due to symmetry, the choice of the vertex does not matter for quantum walk on a periodic lattice.
The maximum evolution time is taken to be $T=4.0$ with $r=5$ Trotter steps using the randomized first-order Trotter-Suzuki formula.
The two-qubit gate counts correspond to the number of 2-local operators, as shown in \tab{reg_resource}.
For each circuit, the two-qubit gate counts are 140 and 150 for the 1D chain and 1D cycle, respectively.

For the experimental demonstrations of quantum walk on lattices, we adopt the definition of continuous time quantum walk using the adjacency matrix of the graph, rather than the graph Laplacian.
The choice of using the adjacency matrix is merely for simplicity, as it avoids introducing additional Pauli-$Z$ rotations.
Since the lattice graph without periodic boundary conditions is not a regular graph, this gives rise to different dynamics.
For the periodic lattice, the adjacency matrix differs from the Laplacian by a constant, so the dynamics are equivalent.

In panels A, B, D, and E of \fig{experiments_lattice}, we show experimental results simulating quantum walk on the 1D chain and cycle, each with $15$ vertices.
Due to the initial state, the dynamics are qualitatively similar between the chain and lattice.
Experimentally, we observe the expected propagation distance increases linearly with time, closely matching the numerically predicted result.

\begin{figure}[ht!]
    \centering
    \includegraphics{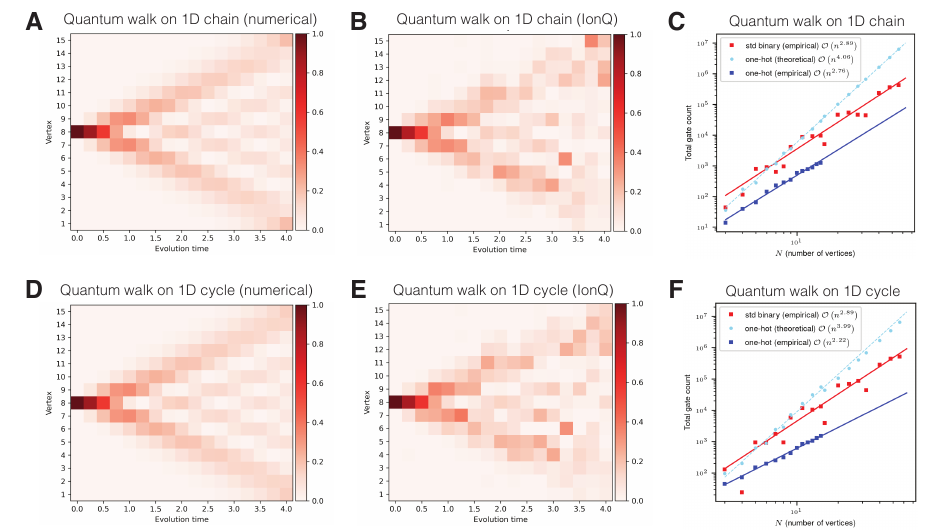}
    \caption{\small Quantum walk on 1-dimensional regular lattice graph (chain) and periodic lattice graph (cycle) of order $N=15$.}
    \label{fig:experiments_lattice}
\end{figure}

\subsubsection{Resource analysis}\label{append:resource_lattice}
We conduct a resource comparison between the standard binary code and Hamiltonian embedding to perform quantum walk on one-dimensional lattice graphs.
We assume the initial state is the first vertex $v_1$ on the chain or cycle and define the task as follows.

We define the expected propagation distance observable as
\begin{equation}
    O_{\mathrm{prop}}=\sum_{j=1}^{N} d(v_1,v_j)\ket{v_j}\bra{v_j},
\end{equation}
where $d(v_1,v_j)$ is the distance between $v_1$ and $v_j$ in the graph, and $\ket{v_j}$ denotes the encoded state for vertex $v_j$.
The task is then to evolve the system for time $T_N$ such that the propagation distance is at least $N/4$, i.e.
\begin{equation}
    T_N = \min_{t>0} \left\{t \mid \bra{\psi(t)}O_{\mathrm{prop}}\ket{\psi(t)} > N / 4\right\},
\end{equation}
where $\ket{\psi(t)}$ is the wave function at time $t$ as defined in \eqn{quantum-walk-equation}.
In panels C and F of \fig{experiments_lattice}, we show estimates of the total gate count required to simulate quantum walk on regular and periodic lattice graphs of order $N$ for evolution time $T_N$ using standard binary and the penalty-free one-hot embedding.
As in the experiments, we use the randomized first-order Trotter formula to simulate the Hamiltonian.
In both cases, the empirical data suggests that the penalty-free one-hot embedding has better constant factors as well as slightly better asymptotic scaling.

\subsection{Binary tree and glued trees graph}
\label{append:binary-glued}
A binary tree is a graph in which each node has at most two children. A glued trees graph consists of two copies of binary trees in which leaf nodes are ``glued'' together randomly to form a cycle.
Binary tree and the glued trees graph are important structures in computer science.
In particular, the glued trees graph is of significant theoretical interest because the problem of traversing the graph has been shown to exhibit an exponential separation between classical and quantum algorithms \cite{childs2003exponential}.
In this section, we discuss the embedding of perfect binary trees and glued trees graphs.

\begin{figure}[ht!]
    \centering
    \includegraphics[width=10cm]{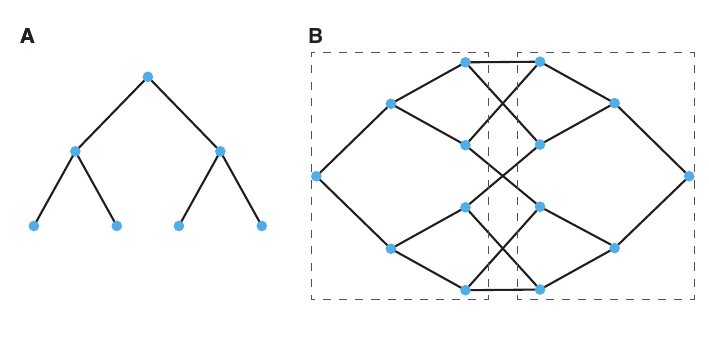}
    \caption{\small \textbf{A.} A height-2 perfect binary tree. \textbf{B.} A glued trees graph consisting of two height-2 binary trees.}
    \label{fig:append_1}
\end{figure}

\subsubsection{Binary tree}
A perfect height-$h$ binary tree has $2^{h+1} - 1$ nodes. In \fig{append_1}A, we illustrate a binary tree of height $2$. The adjacency matrix of this tree is of dimension $7$,
\begin{align}
    A_{\mathrm{tree}} = \begin{bmatrix}
         & 1 & 1 & & & & \\
        1 & & & 1 & 1 & & \\
        1 & & & & & 1 & 1\\
          & 1 & \\
          &  1 & \\
         & & 1 \\
         & & 1
    \end{bmatrix}
\end{align}

The adjacency matrix $A_{\mathrm{tree}}$ does not have a band or circulant structure. We use the one-hot embedding (see \append{one-hot}) to construct an embedding of $A_{\mathrm{tree}}$. To simplify the notation, we define the set of edges of the binary tree $E_{\mathrm{tree}} = \{(1,2),(1,3),(2,4),(2,5),(3,6),(3,7)\}$.

\begin{enumerate}
    \item One-hot embedding (with penalty) of $A_{\mathrm{tree}}$: $\Hpen = \sum^7_{j=1} Z_j, \quad Q = \sum_{(j,k)\in E} X_jX_k$.
    \item Penalty-free one-hot embedding of $A_{\mathrm{tree}}$: $\Hebd = \frac{1}{2}\left(\sum_{(j,k)\in E} X_jX_k + Y_jY_k\right)$.
\end{enumerate}

In \tab{btree_resource}, we summarize the resources required in different Hamiltonian embeddings of $A_{\mathrm{tree}}$.

\begin{table}[!ht]
    \centering
    \begin{tabular}{|p{4cm}|p{3cm}|p{4.5cm}|p{2.5cm}|}
     \hline
     \textbf{Embedding scheme} & \textbf{\# of qubits} & \textbf{\# of 2-local operators} & \textbf{Max. weight}\\
     \hline
     One-hot w/ penalty & $2^{h+1}-1$ & $2^{h+1}-2$ & 2\\
     \hline
     Penalty-free one-hot & $2^{h+1}-1$ & $2^{h+2}-4$ & 2\\
     \hline
    \end{tabular}
    \caption{Resource count for one-hot embeddings of height-$h$ binary trees.}
    \label{tab:btree_resource}
\end{table}

\subsubsection{Glued trees graph}
\label{append:gluedtrees}
In \fig{append_1}B, we show a 6-level glued trees graph (consisting of 2 height-2 binary trees). The adjacency matrix of the glued trees graph is $A_{\mathrm{glued}} = \begin{bmatrix}
        A_{\mathrm{tree}} & W\\
        W^\dagger & A_{\mathrm{tree}}
    \end{bmatrix}$, where $W$ is a $7$-by-$7$ matrix representing the connectivity pattern between the two trees,
$$W = \begin{bmatrix}
    \\
    \\
    \\
    1 & 1 &  & & & &\\
    1 &  & 1 & & & &\\
     & 1 &  & 1& & &\\
     &  & 1 & 1& & &\\
\end{bmatrix}$$

Since the full adjacency matrix $A_{\mathrm{glued}}$ is a $14$-by-$14$ sparse matrix, we can construct Hamiltonian embeddings of $A_{\mathrm{glued}}$ using one-hot embedding. In the standard one-hot embedding (with penalty), we need to use $14$ qubits and 20 two-qubit operators (all $XX$ rotations). In the penalty-free one-hot embedding, we need $14$ qubits and $40$ two-qubit operators (20 $XX$ plus 20 $YY$ rotations).

\subsubsection{Experiments on IonQ}
We experimentally demonstrate quantum walk on the binary tree and glued trees graph using the IonQ Aria-1 processor.
In both cases, we use the penalty-free one-hot embedding and we choose to embed the adjacency matrix rather than the Laplacian.
This choice of Hamiltonian avoids the need for additional Pauli-$Z$ rotations required to embed the diagonal terms of the Laplacian.
In both cases, we demonstrate traversal through the graphs by choosing the initial state to be the root node (in the case of the binary tree) or the entrance node (in the case of the glued trees graph).

The binary tree is chosen to be a perfect binary tree with height 3, i.e. $N=15$ nodes.
The maximum evolution time is $T=3.0$ with $r=6$ Trotter steps.
Each circuit consists of 168 two-qubit gates and a single one-qubit gate for preparing the root node state.
In \fig{experiments_binary_tree}A, we plot the propagation of the wavefunction across the levels of the binary tree, starting from the root (denoted as level 1).

For quantum walk on the glued trees graph, we use two binary trees each of height 2, resulting in a 14-node glued trees graph as shown in \fig{append_1}.
The maximum evolution time is taken to be $T=2.0$ with $r=4$ Trotter steps.
Each circuit consists of 120 two-qubit gates and a single one-qubit gate for preparing the entrance node state.
In both cases, we use the randomized first-order Trotter-Suzuki formula to simulate the Hamiltonian.

\begin{figure}[ht!]
    \includegraphics{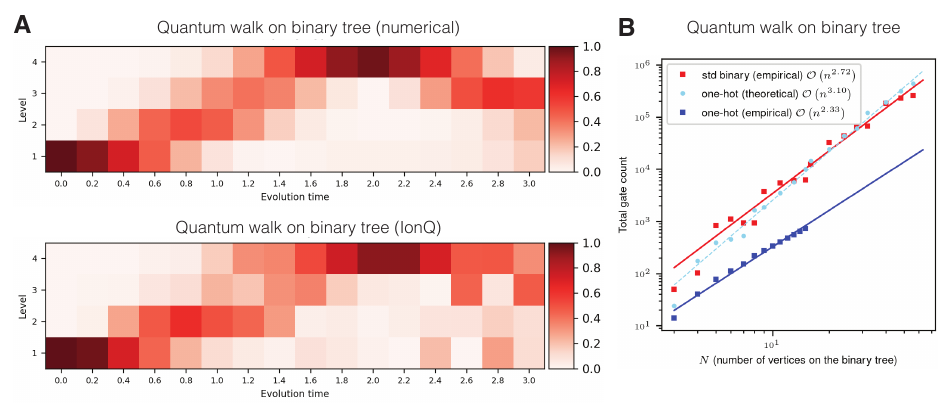}
    \centering
    \caption{\small Quantum walk on a binary tree of height 3. \textbf{A.} Heatmap of the quantum walk evolution (clustered on each layer of the binary tree). \textbf{B.} Estimated total gate count for quantum walk on binary trees with $N$ nodes. For standard binary, we estimate the gate count for all integers $N$ ranging from 3 to 63, but only a selection of points are plotted for visualization.}
    \label{fig:experiments_binary_tree}
\end{figure}

\subsubsection{Resource analysis}
\label{append:gluedtrees-resource}
In this section, we perform a resource analysis for the task of traversing the binary tree and glued trees graph. 

For the binary tree, we are interested in the propagation $O_{\mathrm{prop}}$ defined in exactly the same manner as in \append{resource_lattice}.
Here, we set the target distance to be equal to half of the height of the binary tree, so that
\begin{equation}
    T_N = \min_{t>0} \left\{t \mid \bra{\psi(t)}O_{\mathrm{prop}}\ket{\psi(t)} > \frac{\lfloor \lg N \rfloor}{2}\right\}.
\end{equation}

For the glued tree, we are primarily interested in the traversal from the entrance node to the exit node, which we define as follows.
Let $G_h$ be a glued trees graph with $2(h + 1)$ layers ($h$ being the height of each of the binary trees), and let $A_h$ be the adjacency matrix.
As defined in \cite{childs2003exponential}, let $\ket{\col j}$ be the uniform superposition over vertices in the $j$-th layer, i.e.
\begin{equation}
    \ket{\col j} = \frac{1}{\sqrt{N_j}} \sum_{a \in \text{layer $j$}} \ket{a},
\end{equation}
where
\begin{equation}
    N_j = 
    \begin{cases}
        2^j & 0 \leq j \leq h,\\
        2^{2h+1-j} & h+1 \leq j \leq 2h+1.
    \end{cases}
\end{equation}
for $j=0,1,\dots,2h+1$.
For a constant $s\in (0,1)$, let $T_{h,s}$ be the evolution time needed for the probability on the exit node to be at least $s$, i.e.,
\begin{equation}
    T_{h,s} = \min_{t>0}\left\{t : \left|\bra{\col 0} e^{-i t A_h}\ket{\col 2h + 1}\right|^2 \geq s\right\},
\end{equation}
where the initial state is the entrance node $\ket{\col 0}$.

In \fig{experiments_binary_tree}B, we show estimated total gate counts for quantum walk on the binary tree and glued trees graphs.
The target evolution time is set to $T_{h,s}$ with $s=0.4$.
As in the experimental demonstrations, we use the randomized first-order product formula to simulate the Hamiltonian.
Empirical estimates suggest that one-hot embedding requires significantly fewer gates than the standard binary code, with a slightly better asymptotic scaling.

\section{Spatial search on regular lattices}\label{append:spatial_search}

The spatial search problem is to find a marked item in a structured database. Precisely, given an unweighted graph $G = (\V,\E)$ without self loops, and an oracle function $f\colon \V \to \{0, 1\}$ that returns $1$ if the input vertex $x$ is \textit{marked} and $0$ otherwise. For simplicity, we assume there is only one marked vertex $\vv$, and the oracle function $f$ is given by
\begin{align}
    f(x) = \begin{cases}
        0 & (x \neq \vv)\\
        1 & (x = \vv).
    \end{cases}
\end{align}
The spatial search problem requires finding the marked vertex $\vv$ by querying the oracle $f$.

While no classical algorithms can do better than exhaustive search, quantum algorithms can provide speedups for this problem \cite{aaronson2003quantum,childs2004spatial,tulsi2008faster}. Childs and Goldstone \cite{childs2004spatial} proposed a quantum search algorithm by continuous-time quantum walk, which demonstrates quantum speedups for spatial search on various graphs. This quantum algorithm is formulated as a continuous-time quantum walk:

\begin{align}\label{eqn:spatial-search-original}
    H_{\mathrm{search}} = -\gamma L - H_\vv,
\end{align}
where $L$ is the graph Laplacian of $G$ (for details, see \append{qwalk}), and the \textit{oracle Hamiltonian} is defined as
\begin{align}
    H_\vv = \ket{\vv}\bra{\vv}.
\end{align}
The parameter $\gamma > 0$ is chosen to minimize the spectral gap of $H$. 

To simulate the Hamiltonian $H_{\mathrm{search}}$ on quantum computers, we need to build a Hamiltonian embedding for it. In this section, we consider the spatial search problem on a $d$-dimensional \emph{regular lattice graph}, whose graph Laplacian $L_\mathrm{reg}$ in can be embedded using a Hamiltonian embedding scheme proposed in \append{reg_lattice}. In what follows, we focus on the embedding of the oracle Hamiltonian $H_\vv$.

\subsection{Embedding of the oracle Hamiltonian}\label{append:spatial_search_ebd}

We exemplify the embedding of the oracle Hamiltonian $H_\vv$ using the unary embedding scheme. The same idea can be readily generalized to antiferromagnetic embedding and one-hot embedding.

First, we discuss how to embed the oracle Hamiltonian for a 1D chain graph with $N$ vertices $\{v_1,\dots,v_N\}$:
\begin{enumerate}
    \item If the marked vertex is $v_1$, the unary codeword for $v_1$ is the $(N-1)$-bit all-zero string $\ket{0\dots0}$. This is the only unary codeword with the first bit $0$, so we can construct an embedding of $H_{v_1}$ by marking the first qubit: $\Hebd_{v_1} = I - \hat{n}_1$.
    \item Similarly, if the marked vertex is $v_N$, the unary codeword for $v_N$ is the $(N-1)$-bit all-one string $\ket{1\dots1}$. This is the only unary codeword with the last bit $1$, so we can construct an embedding of $H_{v_1}$ by marking the last qubit: $\Hebd_{v_N} = \hat{n}_{N-1}$.
    \item If the marked vertex is among $\{v_2,\dots,v_{N-1}\}$, i.e., in the interior of the chain, we can distinguish this vertex from others by its $(N-1)$-th and $N$-th bits. In particular, $v_j$ is the only unary codeword whose $(N-1)$-th bit is $1$ and $N$-th bit is $0$. Therefore, we can construct an embedding of $H_{v_j}$ by marking its $(N-1)$-th and $N$-th bits: $\Hebd_{v_j} = (I - \hat{n}_j)\otimes \hat{n}_{j-1}$.
\end{enumerate}

In summary, the oracle Hamiltonian for the spatial search on the 1D chain is at most $2$-local. Based on the embedding of the oracle Hamiltonian for 1D chain, we can construct embeddings of the oracle Hamiltonian for any $d$-dimensional regular lattice:
\begin{align}
    \Hebd_{\xx} = \Hebd_{x_1}\otimes \Hebd_{x_2}\otimes \dots \otimes \Hebd_{x_d},
\end{align}
where the vertex $\xx = (x_1,\dots,x_d)$, each $x_k \in \{v_1,\dots, v_N\}$, and we define
\begin{align}
    \Hebd_{x_k} = \begin{cases}
        I - \hat{n}_1 & (x_k = v_1)\\
        \hat{n}_{N-1} & (x_k = v_N)\\
        (I - \hat{n}_j)\otimes \hat{n}_{j-1} & (x_k = v_j)
    \end{cases}
\end{align}

It turns out that the embedding of the oracle Hamiltonian $H_\xx$ is at least $d$-local and at most $2d$-local.

\begin{remark}
    Similar ideas can be used to mark vertices on the edge and interior of a 2D lattice graph. However, we need a 3- or 4-tensor operator to realize the oracle Hamiltonian because we need 3 qubits to unambiguously distinguish an edge vertex, and 4 qubits for an interior vertex. For example, on a $4$-by-$4$ lattice, the operator that marks the vertex $\vv = (2,4)$ is $H' = \hat{n}_1\hat{n}_2\hat{n}_6$, and the operator that marks the vertex $\vv = (3,3)$ is $H' = (I-\hat{n}_2)(I-\hat{n}_3)(I-\hat{n}_5)(I-\hat{n}_6)$.
\end{remark}

\paragraph{Generalization to periodic lattices.}
Similarly, a marked vertex on the periodic lattice can be perfectly distinguished by a few bits. Consider the codewords for the cycle graph $\texttt{CW-cycle}(n)$ with $n = 3$, we have:
\begin{table}[ht!]
    \centering
    \begin{tabular}{ | m{2cm} | m{1.5cm}| m{1.5cm} | m{1.5cm}| m{1.5cm}| m{1.5cm}| m{1.5cm}| } 
      \hline
      Node & 1 & 2 & 3 & 4 & 5 & 6\\ 
      \hline
      Code word & $ g_1 = {010}$ & $g_2 = 110$ & $g_3 = {100}$ & $g_4 = {101}$ & $g_5 = 001$ & $g_6 = 011$ \\ 
      \hline
    \end{tabular}
    \caption{Code words for the $4$-node cycle graph.}
\end{table}
There are no ``endpoints'' on the cycle graph, and each vertex codeword is distinguished from the others by exactly 2 bits. For example, $g_1$ is the only codeword with the first and last bit being $0$, and $g_2$ is the only codeword with the first and second bit being $1$, etc. This implies that we need a 4-tensor to encode the oracle Hamiltonian $H_\vv$ for any vertex $\vv$ on the 2D periodic lattice. Again, such operators can not be engineered on neutral-atom quantum computers.

\subsection{Experiments on IonQ}\label{append:spatial_search_experiments}
\subsubsection{Experiment setup}
We perform experimental demonstrations of spatial search by continuous time quantum walk using Hamiltonian embedding.
We present two spatial search experiments: the first being on a $4\times 4$ regular lattice with the unary embedding, and the second being on a $5\times 5$ regular lattice with the penalty-free one-hot embedding.

The algorithm for spatial search by continuous time quantum walk requires the initial state to be in a uniform superposition.
For the standard binary encoding, this is trivially done using a layer of Hadamard gates.
However, initial state preparation becomes nontrivial when considering alternate encodings such as unary and one-hot.

We design state preparation circuits for both the unary and one-hot codes.
In \algo{unary_state_prep} and \algo{one_hot_state_prep}, we show the construction of state preparation circuits for the unary and one-hot encodings, assuming real-valued amplitudes.
The idea behind these circuits is to start from the first encoded state and sequentially rotate probability amplitude using two-qubit gates on consecutive qubits.
In the case of complex-valued amplitudes, one may introduce local phases by an additional sequence of diagonal gates, such as single $Z$ rotations or $ZZ$ rotations.

\begin{figure}[ht!]
    \centering
    \begin{minipage}{\hsize}
    \begin{algorithm}[H]
    \setlength{\lineskip}{5pt}
    \caption{1D state preparation with unary encoding}\label{algo:unary_state_prep}
        \KwIn{Real amplitudes $a_1,\dots,a_{n+1}$, input state $\ket{00\dots 0}$}
        \KwOut{Quantum state $\sum_{k=1}^{n+1} a_k \ket{w_k}$}
        Apply $R_y(2\arccos(a_1))$ to qubit 1.\\
        \For{$k=1,\dots,n$}{
            Let $\alpha_{k+1} = a_{k+1} \left(\sum_{j=k+1}^{n+1} |a_j|^2\right)^{-1/2}$.\\
            Apply $R_y\left(2\arccos(\alpha_{k+1})\right)$ to qubit $k+1$ controlled on qubit $k$.
        }
    \end{algorithm}
    \end{minipage}
\end{figure}

\begin{figure}[ht!]
    \centering
    \begin{minipage}{\hsize}
    \begin{algorithm}[H]
    \setlength{\lineskip}{5pt}
    \caption{1D state preparation with one-hot encoding, sequential version}\label{algo:one_hot_state_prep}
        \KwIn{Real amplitudes $a_1,\dots,a_{n}$, input state $\ket{10\dots 0}$}
        \KwOut{Quantum state $\sum_{k=1}^{n} a_k \ket{w_k}$}
        Apply $R_y(2\arccos(a_1))$ to qubit 2.\\
        Apply $\mathrm{CNOT}$ to qubit $1$ controlled on qubit $2$.\\
        \For{$k=1,\dots,n-1$}{
            Let $\alpha_{k+1} = a_{k+1} \left(\sum_{j=k+1}^{n} |a_j|^2\right)^{-1/2}$.\\
            Apply $R_y\left(2\arccos(\alpha_{k+1})\right)$ to qubit $k+1$ controlled on qubit $k$.\\
            Apply $\mathrm{CNOT}$ to qubit $k$ controlled on qubit $k+1$.
        }
    \end{algorithm}
    \end{minipage}
\end{figure}
We note that a version of the state preparation circuit for the one-hot code has appeared in \cite{mathur2021medical}, referred to as a ``diagonal unary loader'' in the context of quantum neural networks.
For the one-hot code specifically, the circuit can additionally be parallelized to achieve $O(\log n)$ circuit depth.
In practice, we implement the parallel version involving the partial swap gate by using two arbitrary-angle MS gates, resulting in fewer gates. 

For both the unary and one-hot embeddings, we mark the corner vertex $\xx=(N,1)$ when constructing the embedding of the oracle Hamiltonian.
This results in a single $\hat{n}_{N-1}\otimes (I-\hat{n}_{1})$ term for the unary embedding and a $\hat{n}_{N}\otimes \hat{n}_{1}$ term for the one-hot embedding.

For spatial search with the unary embedding, we set $N=4$, for a total of 6 qubits.
For the unary embedding, the embedding Hamiltonian is given by
\begin{equation}
    \Hebd_{\mathrm{unary}} = -\gamma\Hebd_{L} - \Hebd_{\xx} + g\Hpen,
\end{equation}
where $\Hebd_{L}$ and $\Hebd_{\xx}$ are embeddings of the graph Laplacian and oracle Hamiltonian, respectively, and $\Hpen$ is the penalty Hamiltonian for the unary embedding.
For the real-machine experiment, we set the penalty parameter to $g=2$.
The embedding Hamiltonian consists of two-qubit $ZZ$ terms, as well as additional single-qubit $X$ and $Z$ terms.

For the one-hot embedding, the embedding Hamiltonian is given by
\begin{equation}
    \Hebd_{\mathrm{one-hot}} = -\gamma\Hebd_{L} - \Hebd_{\xx},
\end{equation}
where $\Hebd_{L}$ and $\Hebd_{\xx}$ are embeddings of the graph Laplacian and oracle Hamiltonian.
No penalty term is needed due to the use of the penalty-free embedding.
Here, the embedding Hamiltonian consists of two-qubit $XX$, $YY$, and $ZZ$ terms as well as additional single-qubit $Z$ terms to encode the diagonal entries of the graph Laplacian.

In both cases, we simulate the Hamiltonian using the second-order Trotter-Suzuki formula. For spatial search on a $4\times 4$ lattice with the unary encoding, we perform simulation for unitless time up to $T \approx 4.894$ with $r=12$ Trotter steps.
The largest resulting circuit consists of 6 qubits, 116 one-qubit gates, and 112 two-qubit gates.
For the $5\times 5$ lattice with the penalty-free one-hot embedding, we perform simulation for unitless time $T \approx 6.569$ with $r=5$ Trotter steps.
Each circuit consists of 10 qubits, 22 one-qubit gates and 181 two-qubit gates.
The evolution times are chosen to achieve a desired threshold success probability and is described in more detail in \append{spatial-resource} as part of the resource analysis.

\subsubsection{Resource analysis}\label{append:spatial-resource}
We conduct a resource analysis comparing the standard binary encoding with the unary and penalty-free one-hot embeddings.
In all cases, we consider spatial search on a $d$-dimensional lattice with $N$ vertices along each axis.
Starting from a uniform initial state, we define the success probability to be $|\bra{\psi(t)}\ket{\vv}|^2$.
The spatial search task is to evolve the initial state according to the Hamiltonian $H_{\text{search}}$ for time
\begin{equation}
    T_p=\min_{t>0}\{t : |\bra{\psi(t)}\ket{\vv}|^2 \geq p\},
\end{equation}
such that the success probability is at least $p$.
Note that for dimension $d=2$, the success probability is bounded by $O((\log (N) / N)^2)$ \cite{childs2004spatial}, so we choose $p=4 (\log (N) / N)^2$.
For reference, $T_p \approx 4.89$ for $N=4$ and $T_p \approx 6.57$ for $N=5$.
The parameter $\gamma$ is numerically computed by minimizing the spectral gap using the Nelder-Mead method.

For the standard binary encoding, we start with the spatial search Hamiltonian for $2^{\lceil \lg N \rceil}$ vertices, where $N$ is the desired number of vertices along a single axis and $d$ is the dimension of the lattice. 
To properly encode the Laplacian of the $d$-dimensional lattice with $N^d$ vertices, we isolate the subgraph with $N^d$ vertices, removing all off-diagonal terms that contribute to interactions between the subgraph and the remaining component of the graph.
The diagonal terms of the target submatrix are modified to match the spatial search Hamiltonian.
This procedure helps preserve the tensor product structure of the target Hamiltonian.

For the unary embedding, we choose the penalty parameter $g=\gamma nd T_{p}$, where $\gamma$ is the parameter for the oracle Hamiltonian.
This choice of penalty parameter yields a unitary fidelity of over $0.99$ for all system sizes tested.
Here, we define the unitary fidelity restricted to the encoding subspace by
\begin{equation}
    f = \frac{1}{Nd}\left\|\tr \left[U_{\text{search}}^{\dagger} (P_{\S} U_{\text{ebd}} P_{\S})\right]\right\|,
\end{equation}
where $U_{\text{search}}$ and $U_{\text{ebd}}$ are the evolution operators for $H_{\text{search}}$ and $\Hebd$, respectively.

For both the unary and one-hot embedding, we apply the second-order Trotter-Suzuki product formula and use Monte Carlo sampling to evaluate the additive Trotter error.

In \fig{fig_3}B we show the estimated gate counts for performing spatial search on $N\times N$ lattices.
The marked vertex is $\xx = (N,1)$, so the embedded oracle Hamiltonian is $\hat{n}_{N-1}\otimes (I-\hat{n}_1)$.
We test system sizes up to $N=11$ for standard binary and up to $N=6$ for the unary and penalty-free one-hot embeddings.
While the unary embedding requires fewer gates empirically than standard binary for the system sizes tested, the asymptotic scaling is slightly worse.
On the other hand, the penalty-free one-hot embedding shows a significant advantage both in the empirically estimated gate counts and the asympotic scaling.

We additionally show resource estimates on 3D lattices in \fig{resource_spatial_search}.
In this case, the marked vertex is set to $\xx=(N,N,N)$, resulting in the embedded oracle Hamiltonian to be $\hat{n}_N \otimes \hat{n}_N \otimes \hat{n}_N$.
We observe advantages for the unary and penalty-free one-hot embeddings in both the constant factor and asymptotic scaling, according to the empirical estimates.
\begin{figure}[ht!]
    \centering
    \includegraphics[width=0.5\linewidth]{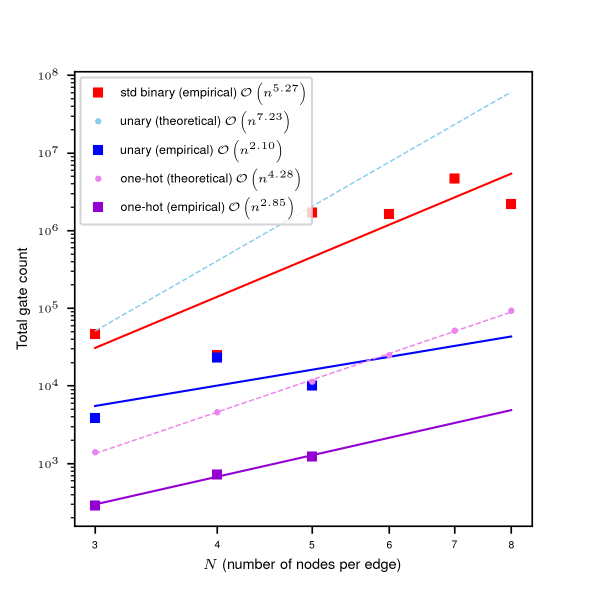}
    \caption{\small Estimated total gate count for spatial search on 3D lattice.}
    \label{fig:resource_spatial_search}
\end{figure}

\section{Simulating quantum dynamics in the real space}\label{append:real_space_simulation}
In this section, we consider the simulation of quantum dynamics in the real space. The quantum dynamics is described by the Schr\"odinger equation over $\R^d$:
\begin{align}\label{eqn:schrodinger}
    i \frac{\partial}{\partial t}\Psi = \left[-\frac{1}{2}\nabla^2 + f(x)\right] \Psi(t,x),
\end{align}
where $\Psi(t,x)\colon [0,T]\times \R^d \to \C$ is the quantum wave function. The Schr\"odinger equation is subject to an initial condition $\Psi(0,x) = \Psi_0(x)$.

\subsection{Experiments on IonQ}\label{append:real-space-ionq}
\subsubsection{Problem formulation}
\begin{lemma}
    Consider the 1D Schr\"odinger equation with quadratic potential field ($a > 0$):
    \begin{align}
        i \frac{\partial}{\partial t}\Psi(t,x) = \left[-\frac{1}{2}\frac{\partial^2}{\partial x^2} + \left(\frac{1}{2}ax^2 + bx\right)\right] \Psi(t,x),
    \end{align}
    subject to a Gaussian initial state
    \begin{align}
        \Psi(0,x) = \left(\frac{1}{2\pi \sigma^2}\right)^{1/4} e^{-\frac{x^2}{4\sigma^2}}.
    \end{align}
    Then, the expectation value of the position observable $\hat{x}$ is given by 
    \begin{align}\label{eqn:expect_X}
        \<\hat{x}\>_t = \frac{b}{a}\left(\cos(\sqrt{a}t) - 1\right).
    \end{align}
\end{lemma}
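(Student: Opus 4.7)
The plan is to avoid computing $\Psi(t,x)$ explicitly (which would require Mehler's formula for the propagator of a shifted harmonic oscillator) and instead exploit the fact that for a strictly quadratic potential Ehrenfest's theorem yields a closed second-order linear ODE for $\langle \hat{x}\rangle_t$. Since $V(x)=\tfrac12 a x^2 + bx$ has $V'(x)=ax+b$, the standard commutator computations $[\hat{x},\hat{H}]=i\hat{p}$ and $[\hat{p},\hat{H}]=-iV'(\hat{x}) = -ia\hat{x}-ib I$ give
\begin{align*}
    \frac{d}{dt}\langle \hat{x}\rangle_t = \langle \hat{p}\rangle_t, \qquad \frac{d}{dt}\langle \hat{p}\rangle_t = -a\langle \hat{x}\rangle_t - b,
\end{align*}
so that $\langle \hat{x}\rangle_t$ satisfies the classical equation of motion $\ddot{x} = -ax - b$.

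The next step is to pin down the initial conditions from the Gaussian state $\Psi_0$. Since $\Psi_0(x)$ is real and even, $\langle \hat{x}\rangle_0 = \int x |\Psi_0(x)|^2\,dx = 0$ by parity, and $\langle \hat{p}\rangle_0 = -i\int \Psi_0(x)\,\partial_x\Psi_0(x)\,dx = 0$ because $\Psi_0\,\partial_x\Psi_0 = \tfrac12 \partial_x(\Psi_0^2)$ is an exact derivative of a Schwartz function. Note that neither the width $\sigma$ nor any other detail of the initial Gaussian enters $\langle \hat{x}\rangle_t$; this is the special feature of quadratic potentials.

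Finally I would solve the ODE. The particular (equilibrium) solution is $x_* = -b/a$, and the general solution to $\ddot{x}+ax = -b$ is $\langle \hat{x}\rangle_t = A\cos(\sqrt{a}\,t) + B\sin(\sqrt{a}\,t) - b/a$. Imposing $\langle \hat{x}\rangle_0 = 0$ forces $A = b/a$, and $\langle \hat{p}\rangle_0 = 0$ forces $B = 0$, yielding
\begin{align*}
    \langle \hat{x}\rangle_t = \frac{b}{a}\cos(\sqrt{a}\,t) - \frac{b}{a} = \frac{b}{a}\bigl(\cos(\sqrt{a}\,t) - 1\bigr),
\end{align*}
which is exactly \eqn{expect_X}.

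The proof is essentially routine, and I do not expect a substantive obstacle. The only minor technical point is justifying the differentiation under the integral sign when applying Ehrenfest's theorem to an unbounded potential; this is standard given that $\Psi(t,\cdot)$ remains in the Schwartz class throughout the evolution because both $\hat{x}$ and $\hat{p}$ preserve Schwartz functions and the harmonic-oscillator-type Hamiltonian generates a strongly continuous unitary group on $L^2(\mathbb{R})$ whose domain contains $\mathcal{S}(\mathbb{R})$.
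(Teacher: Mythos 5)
Your proposal is correct and follows essentially the same route as the paper: invoke Ehrenfest's theorem to reduce to the classical equations of motion $\dot{\langle\hat{x}\rangle}=\langle\hat{p}\rangle$, $\dot{\langle\hat{p}\rangle}=-a\langle\hat{x}\rangle-b$ with zero initial data, then solve the ODE system. Your write-up simply adds more detail (the explicit commutators, the parity argument for the initial conditions, and the Schwartz-class justification) than the paper's brief proof.
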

\begin{proof}
    Since the potential field is quadratic, we invoke the Ehrenfest theorem to obtain the following Hamilton's equation for $\<\hat{x}\>_t$ and $\<\hat{p}\>_t$:
    \begin{align}
        \frac{\d}{\d t}\<\hat{x}\>_t &= \<\hat{p}\>_t,\\
        \frac{\d}{\d t}\<\hat{p}\>_t &= - \<f\>_t = -a \<\hat{x}\>_t - b,
    \end{align}
    where the initial data are given by
    \begin{align}
        \<\hat{x}\>_0 = \<\hat{p}\>_0 = 0.
    \end{align}
    Therefore, the closed-form formula of $\<\hat{x}\>_t$ is obtained by solving the system of ODEs.
\end{proof}

\subsubsection{Fock space truncation method}\label{append:fock_space_truncation}
We can rewrite the real-space Schr\"odinger operator using the position operator $\hat{x}$ and the momentum operator $\hat{p} = -i\frac{\partial}{\partial x}$:
\begin{align}\label{eqn:bosonic}
    H = \frac{1}{2}\sum^d_{j=1}\hat{p}^2_j + f(\hat{x}_1,\dots,\hat{x}_d).
\end{align}
We define two operators:
\begin{align}
    \hat{a} = \frac{1}{\sqrt{2}}(i\hat{p}+\hat{x}),\quad \hat{a}^\dagger = \frac{1}{\sqrt{2}}(-i\hat{p}+\hat{x}).
\end{align}
The operator $\hat{a}^\dagger$ is the raising/creation operator, and $\hat{a}$ is the lowering/annihilation operator. They are infinite-dimensional quantum operators with eigenstates being the number basis (corresponding to the eigenstates of quantum harmonic oscillators):
\begin{align}
    \hat{a}\ket{0} &= 0,~\hat{a}\ket{n} = \sqrt{n} \ket{n-1} ~\text{(for $n \ge 1$)},\\
    \hat{a}^\dagger \ket{n} &= \sqrt{n+1}\ket{n+1} ~\text{(for $n \ge 0$)}.
\end{align}
We can express the creation and annihilation operators in the standard matrix form,
 \begin{align}
    \hat{a}^\dagger = \begin{bmatrix}
        0 & & & & \\
        1 & 0 & & & \\
        & \sqrt{2} & 0 & & \\
        & & \sqrt{3} & 0 & \\
        & & & ...& ...\end{bmatrix}, ~\hat{a} =\begin{bmatrix}
        0 & 1 & & & \\
        & 0 & \sqrt{2} & & \\
        & & 0 & \sqrt{3} & \\
        & & & 0 & 2 \\
        & & & ...& ...\end{bmatrix}.
\end{align}

Using the ladder operators, we can rewrite the position and momentum operators:
\begin{align}
    \hat{x} = \frac{1}{\sqrt{2}}\left(\hat{a}^\dagger + \hat{a}\right),~\hat{p} = \frac{i}{\sqrt{2}} \left(\hat{a}^\dagger - \hat{a}\right).
\end{align}
In the matrix form, the two operators are infinite-dimensional tri-diagonal matrices. 
\begin{align}
    \hat{x} &= \frac{1}{\sqrt{2}}\sum^{\infty}_{j=0} \sqrt{j+1} \left(\ket{j}\bra{j+1}+ \ket{j+1}\bra{j}\right),\\
    \hat{p} &= \frac{i}{\sqrt{2}}\sum^{\infty}_{j=0} \sqrt{j+1} \left(-\ket{j}\bra{j+1}+ \ket{j+1}\bra{j}\right).
\end{align}
Similarly, we can compute the quadratic operators:
\begin{align}
    \hat{p}^2 &= \frac{1}{2}\sum^{\infty}_{j=0}(2j+1) \ket{j}\bra{j} -\frac{1}{2}\sum^{\infty}_{j=0} \sqrt{(j+1)(j+2)} \left(\ket{j}\bra{j+2}+ \ket{j+2}\bra{j}\right),\\
    \hat{x}^2 &= \frac{1}{2}\sum^{\infty}_{j=0}(2j+1) \ket{j}\bra{j} +\frac{1}{2}\sum^{\infty}_{j=0} \sqrt{(j+1)(j+2)} \left(\ket{j}\bra{j+2}+ \ket{j+2}\bra{j}\right),
\end{align}
which are infinite-dimensional matrices with bandwidth $2$.

\vspace{4mm}
To simulate the real-space quantum operator~\eqn{bosonic} on a quantum computer, we need to truncate the Fock space (spanned by the number basis) up to finite dimension $N$. Then, we can construct Hamiltonian embeddings of the truncated quantum operator for real-machine implementation. 

By truncating Fock space up to $N$ levels, we have the truncated operators,
\begin{align}
    \hat{x} &= \frac{1}{\sqrt{2}}\sum^{N-2}_{j=0} \sqrt{j+1} \left(\ket{j}\bra{j+1}+ \ket{j+1}\bra{j}\right),\\
    \hat{p}^2 &= \frac{1}{2}\sum^{N-1}_{j=0}(2j+1) \ket{j}\bra{j} -\frac{1}{2}\sum^{N-3}_{j=0} \sqrt{(j+1)(j+2)} \left(\ket{j}\bra{j+2}+ \ket{j+2}\bra{j}\right),\\
    \hat{x}^2 &= \frac{1}{2}\sum^{N-1}_{j=0}(2j+1) \ket{j}\bra{j} +\frac{1}{2}\sum^{N-3}_{j=0} \sqrt{(j+1)(j+2)} \left(\ket{j}\bra{j+2}+ \ket{j+2}\bra{j}\right).
\end{align}

Therefore, if we use the unary embedding, the encoding Hamiltonians are $(N-1)$-qubit operators:
\begin{align}
    Q^{\mathrm{unary}}_{\hat{x}} &= \frac{1}{\sqrt{2}} \sum^{N-1}_{j=1} \sqrt{j}X_j,\\
    Q^{\mathrm{unary}}_{\hat{p}^2} &= \sum^{N-1}_{j=1} \hat{n}_j - \frac{1}{2}\sum^{N-2}_{j=1} \sqrt{j(j+1)}X_{j+1}X_j,\\
    Q^{\mathrm{unary}}_{\hat{x}^2} &= \sum^{N-1}_{j=1} \hat{n}_j + \frac{1}{2}\sum^{N-2}_{j=1} \sqrt{j(j+1)}X_{j+1}X_j.
\end{align}
It is worth noting that we already discard the global phase in the embedding operators.

Similarly, if we choose the one-hot embedding, the encoding Hamiltonians are $N$-qubit operators:
\begin{align}
    Q^{\mathrm{one-hot}}_{\hat{x}} &= \frac{1}{\sqrt{2}} \sum^{N-1}_{j=1} \sqrt{j}X_{j+1}X_j,\\
    Q^{\mathrm{one-hot}}_{\hat{p}^2} &= \sum^{N}_{j=1} (j-\frac{1}{2})\hat{n}_j - \frac{1}{2}\sum^{N-2}_{j=1} \sqrt{j(j+1)}\frac{\left(X_{j+2}X_j + Y_{j+2} Y_j\right)}{2},\\
    Q^{\mathrm{one-hot}}_{\hat{x}^2} &= \sum^{N}_{j=1} (j-\frac{1}{2})\hat{n}_j + \frac{1}{2}\sum^{N-2}_{j=1} \sqrt{j(j+1)}\frac{\left(X_{j+2}X_j + Y_{j+2} Y_j\right)}{2}.
\end{align}
Again, the global phase is discarded.

With these embedding operators, we can use the rules in \append{building} to construct an embedding of the real-space Schr\"odinger operator with a quadratic potential field.

To simulate the $N$-level truncated Hamiltonian, we require a large penalty coefficient $g$. With a large $g$, the norm of the embedding Hamiltonian becomes so huge that the standard product formula requires a significant number of Trotter steps that is infeasible for near-term devices. To address this issue, we transform the embedding operators to the interaction picture and use the quantum simulation algorithm \texttt{qDRIFT} (for details, see \append{sim_perturbative}).

\subsubsection{Experiment setup and result}\label{append:real-setup}
We perform experiments of the Schr\"odinger equation in one-dimensional real space using the IonQ Aria-1 processor.
We use the penalty-free one-hot embedding to simulate the real-space Schr\"odinger equation with quadratic potential $f(x)=x^2-\frac{1}{2}x$.
The embedding Hamiltonian is then given by
\begin{equation}
    H_{\mathrm{real-space}} = \frac{1}{2} Q_{\hat{p}^2}^{\mathrm{one-hot}} - \frac{1}{2}Q_{\hat{x}^2}^{\mathrm{one-hot}} + \frac{1}{2} Q_{\hat{x}}^{\mathrm{one-hot}},
\end{equation}
as described in \append{fock_space_truncation}.

We truncate the Hamiltonian to $N=5$ levels, corresponding to 5 qubits using the one-hot embedding.
The initial state is the first encoded state of the one-hot code, prepared using a single $X$ rotation gate.
We perform quantum simulation using the randomized first-order Trotter-Suzuki formula for evolution times up to $T=5$ with $r=11$ Trotter steps.
For each circuit, the total gate count is 154 two-qubit gates in addition to a single one-qubit gate for preparing the initial state.
To compute the expected position observable $\hat{x}$, we measure in $x$-basis using 1500 shots.
For computing the expected kinetic energy $\frac{1}{2}\hat{p}^2$, we also measure in the computational basis using 500 shots.

\begin{figure}[ht!]
    \centering
    \includegraphics[width=9cm]{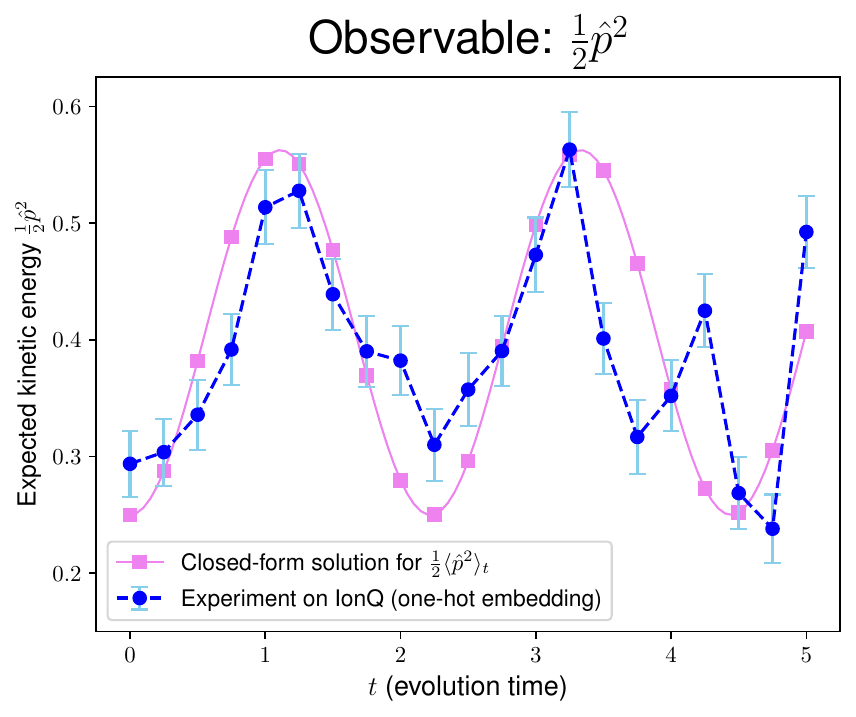}
    \caption{\small Expectation value of kinetic energy $\frac{1}{2}\hat{p}^2$ depicted as a function of evolution time $t$.
    The violet curve denotes the analytically computed expected kinetic energy with squares drawn at the same evolution times as the experiment data for comparison.
    Blue dots denote experimental results obtained from the IonQ Aria-1 processor using 1500 shots for $x$-basis measurements and 500 shots for $z$-basis measurements for each data point. Error bars denote standard error. }
    \label{fig:real_space_kinetic_energy}
\end{figure}

In \fig{fig_4}C, we plot the expected value of the position observable.
In violet, we plot the closed form solution, given by $\langle \hat{x} \rangle_t = \frac{1}{4}(1-\cos(\sqrt{2}t))$.
In blue, we plot the real-machine data obtained from the IonQ processor.
The experimentally observed values show a sinusoidal trend consistent with the analytically predicted values.

In addition, we present the expected kinetic energy $\frac{1}{2}\langle p^2 \rangle$ in \fig{real_space_kinetic_energy}.
The analytical solution, shown in violet, is evaluated as $\frac{1}{2}\langle p^2 \rangle = -\frac{5}{16}\cos(\sqrt{2}t)^2 + \frac{9}{16}$.
The experimental data is depicted by blue dots, showing a sinusoial trend consistent with the analytically predicted solution.

\subsubsection{Resource analysis}\label{append:real-resource}
We conduct a resource analysis for performing a simulation of the real-space Schr\"odinger equation via the truncated Fock space method.
The task is to perform quantum simulation for a fixed evolution time $T$.
To be consistent with the real-machine experiment, we fix the evolution time to be $T=5$ and the potential function is $f(x)=\frac{1}{2} ax^2 + bx$ with $a=2$ and $b=-1/2$.
For this task, we apply the randomized first-order Trotter-Suzuki product formula.

For the standard binary encoding, we start with the desired quantum operator with $2^{\lceil \lg N \rceil}$ levels, in order to preserve the structure of the problem.
To simulate the operator with $N$ levels, we only use the first $N$ basis states as the encoding subspace.
To this end, we remove all matrix entries that contribute to interactions between the encoding subspace and the orthogonal complement.
We use Qiskit to compute the Pauli decomposition of the resulting matrix and convert the circuit to single-qubit Pauli rotations and $XX$-rotation gates.

In \fig{fig_4}B, we plot estimates of the total gate count required for the standard binary code and the penalty-free one-hot embedding.
For standard binary, we observe an asymptotic gate complexity of $O(n^{3.16})$.
For the one-hot embedding, the complexity is estimated as $O(n^{2.67})$, showing a modest advantage in the asymptotic scaling.
In addition, we observe an order of magnitude advantage in the constant factor for the one-hot embedding.

\subsection{Experiments on QuEra}\label{append:quera_experiments}
\subsubsection{Finite difference method}\label{append:finite_difference}
We now discuss how to simulate two-dimensional Schr\"odinger equations on the QuEra machine. Consider the following partial differential equation defined on the unit square $\Omega = [0,1]^2$,
\begin{align}\label{eqn:2d_sch}
        i \frac{\partial}{\partial t}\Psi(t,\vect{x}) = \left(-\frac{1}{2}\nabla^2 + V(\vect{x})\right) \Psi(t,\vect{x}),
\end{align}
where $\Psi(t,\vect{x})\colon [0,t]\times \Omega \to \C$ is the quantum wave function with Dirichlet boundary conditions (i.e., $\Psi(t,\vect{x}) = 0$ for $\vect{x} \in \partial \Omega$). $\nabla^2 = \frac{\partial^2}{\partial x^2} + \frac{\partial^2}{\partial y^2}$ is the Laplacian operator, and $V(\vect{x})\colon \Omega \to \R$ is the potential field.

In our QuEra experiment, we simulate the Schr\"odinger equation \eqn{2d_sch} using the finite difference method. Suppose that we discretize the unit interval $[0,1]$ into $N$ pieces with grid points $\{z_k = k/(N-1) \colon k = 0,\dots, N-1\}$. For any function $u(x)$ defined on $[0,1]$ with vanishing boundary condition $u(0) = u(1) = 0$, we can discretize $u$ on the mesh $\{z_k\}$ and represent it as an $N$-dimensional vector:
$$\vec{u} = [u(z_0),\dots,u(z_{N-1})]\trans.$$
Using the central finite difference approximation, i.e., $u''(x) \approx \frac{u(x+h) - 2u(x)+u(x-h)}{h^2}$, we can approximate the function $u''$ by a new $N$-dimensional vector $D \vec{u}$ (with $h = 1/(N-1)$), where the matrix $D$ is given by
\begin{align}\label{eqn:discrete_D}
    D = \frac{1}{h^2}\begin{bmatrix} -2 & 1 & & \\
    1 & -2 & 1 & \\
    ...& ... & ... &...\\
    & 1 & -2 & 1\\
    & & 1 & -2\\
\end{bmatrix}.
\end{align}

The matrix $D$ is often called the finite difference discretization of the differential operator $\frac{\partial^2}{\partial x^2}$. Note that the matrix $D$ is tri-diagonal. We can use either unary embedding or antiferromagnetic embedding for the matrix $D$ (see \append{band}).

Similarly, the finite difference method can be applied to the 2D Schr\"odinger equation \eqn{2d_sch}. The discretized quantum Hamiltonian reads,
\begin{align}\label{eqn:fdm_operator}
    \hat{H} = -\frac{1}{2} (D \otimes I) + I \otimes D) + U,
\end{align}
where $I$ is the $N$-by-$N$ identity matrix and $U$ is a $N^2$-by-$N^2$ diagonal matrix such that 
$$U\ket{z_j}\ket{z_k} = V(z_j,z_k)\ket{z_j}\ket{z_k}.$$

\subsubsection{Neutral-atom simulator and antiferromagnetic embedding}\label{append:quera-antiferro}
The QuEra quantum simulator is based on neutral atoms. The atoms are placed individually and deterministically on a two-dimensional plane by optical tweezers \cite{labuhn2016tunable,lienhard2018observing}. On the QuEra quantum simulator, a qubit is realized by an atom with an internal ground state $\ket{0}$ and an excited Rydberg state $\ket{1}$. The Rydberg atoms are long-lived and can be coherently controlled by external laser pulses. The evolution of the neutral atoms is described by the Schr\"odinger equation $i \hbar \frac{\d}{\d t} \ket{\psi(t)} = \hat{H}(t)\ket{\psi(t)}$, where the system Hamiltonian is given by
\begin{align}\label{eqn:rydberg-ham}
    \frac{\hat{H}(t)}{\hbar} = \sum_j\left( \frac{\Omega_j(t)}{2} \hat{X}_j - \Delta_j(t) \hat{n}_j \right) + \sum_{j < k} \frac{C_6}{|\mathbf{r}_j - \mathbf{r}_k|^6}\hat{n}_j \hat{n}_k.
\end{align}
In the Hamiltonian, $\hat{X}_j = \begin{bmatrix}0&1\\1&0\end{bmatrix}$ and $\hat{n}_j=\begin{bmatrix}0&0\\0&1\end{bmatrix}$ are the Pauli-$X$ operator and the number operator acting on the $j$-th qubit, respectively. $\Omega_j(t)$, $\Delta_j(t)$ denotes the Rabi frequency and local detuning of the driving laser field on qubit $j$. $C_6$ is the Rydberg interaction constant that depends on the particular Rydberg atom used. $\mathbf{r}_j$ denotes the position vector of the $j$-th qubit. In our experiment, we use the neutral-atom analog quantum computer fabricated by QuEra Computing Inc. In each simulation, we can initiate no more than 256 atoms, and the Rydberg interaction constant is $C_6 = 862690\times 2\pi~\SI{}{\mega\hertz}\cdot \SI{}{\micro\meter}^6$.

In the system Hamiltonian \eqn{rydberg-ham}, we have the atom-atom interactions $\hat{n}_j\hat{n}_k$ with positive coefficients. This structure is very suitable for antiferromagnetic embedding. In~\lem{penalty-chain}, we show how to engineer a penalty Hamiltonian for antiferromagnetic embedding by tuning the local detuning of neutral atoms.

\begin{lemma}\label{lem:penalty-chain}
    Given an integer $N > 1$, let $n = N-1$.
    Define the penalty Hamiltonian:
    \begin{align}\label{eqn:chain-penalty}
        \hat{H}_{\mathrm{pen}} = \sum_{1 \le j< k\le n} \frac{C_6}{|\mathbf{r}_j - \mathbf{r}_k|^6} \hat{n}_j \hat{n}_k - \sum^n_{j=1} \Delta_j \hat{n}_j,
    \end{align}
    where the atoms are placed on a straight line with equal distance $r>0$, i.e., $\mathbf{r}_j = (0, (j-1)r)$, and the local detuning values are:
    \begin{align}\label{eqn:detuning-chain}
        \Delta_j = \begin{cases}
        \frac{C_6}{r^6} \left(\sum^{\lfloor (n+1-j)/2\rfloor}_{m=1}\frac{1}{(2m-1)^6} +\sum^{\lfloor j/2\rfloor}\frac{1}{(2m)^6} \right), & (\text{for odd~}j)\\
        \frac{C_6}{r^6}\left(\sum^{\lfloor (n-j)/2\rfloor}_{m=1}\frac{1}{(2m-1)^6} +\sum^{\lfloor (j-1)/2\rfloor}\frac{1}{(2m)^6} \right). & (\text{for even}~j)
        \end{cases}
    \end{align}
    Then, the penalty Hamiltonian \eqn{chain-penalty} has an $N$-fold degenerate ground-energy subspace $\S$ which is spanned by all the $n$-bit antiferromagnetic code (see \tab{antiferro_code}).
\end{lemma}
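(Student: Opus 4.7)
The plan is to exploit the diagonal structure of $\hat{H}_{\mathrm{pen}}$ in the computational basis. Every term $\hat{n}_j$ or $\hat{n}_j\hat{n}_k$ commutes with every other, so the $2^n$ computational basis states $\ket{b}$ (with $b\in\{0,1\}^n$) are simultaneous eigenstates, and the spectrum is determined by the function
$$E(b) = \sum_{1\le j<k\le n}\frac{C_6}{(|j-k|\,r)^6}\,b_j b_k \;-\; \sum_{j=1}^{n}\Delta_j\,b_j.$$
The lemma then reduces to two claims: (i) $E(a_k)$ takes a common value $E_0$ for every $k=1,\ldots,N$, and (ii) $E(b)>E_0$ for every $b\notin\{a_1,\ldots,a_N\}$.

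For (i) my plan is to induct on $k$ using the telescoping construction of the antiferromagnetic code. Because $a_{k+1}$ differs from $a_k$ by flipping bit $k$, the difference $E(a_{k+1})-E(a_k)$ equals $(-1)^{?}\,\Delta_k$ plus or minus the sum of pairwise Rydberg interactions between atom $k$ and all other atoms occupied in whichever of $a_k,a_{k+1}$ has bit $k$ set. I would first work out explicitly the excitation support of $a_k$ as a function of the parity of $k$: for odd $k$, the excited sites are $\{1,3,\ldots,k-2\}\cup\{k+1,k+3,\ldots\}$, and for even $k$ they are $\{1,3,\ldots,k-1\}\cup\{k,k+2,\ldots\}$. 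Setting $E(a_{k+1})-E(a_k)=0$ then gives a prescription for $\Delta_k$ as a partial sum of $1/d^{6}$ values, with $d$ ranging over the distances from site $k$ to the other occupied sites in the one of $\{a_k,a_{k+1}\}$ containing $k$. A case analysis on parity (together with splitting distances into odd and even) reproduces the two sums in~\eqn{detuning-chain}, closing the induction.

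Step (ii) is the main obstacle. Although the $\Delta_j$'s have been fine-tuned to make the codewords degenerate, it is not automatic that codewords are \emph{strict} global minimizers, since the long-range $1/d^6$ tail could in principle produce accidental degeneracies or even lower energies for states far from the antiferromagnetic pattern. My plan is to split $\hat{H}_{\mathrm{pen}}-E_0 I$ into a dominant nearest-neighbor part plus a bounded long-range correction. The nearest-neighbor part reproduces (up to an affine shift) the antiferromagnetic penalty $\Hpen_{\mathrm{antiferro}}$ of \eqn{pen_antiferro}, whose exact ground subspace was already shown to be spanned by $\{\ket{a_k}\}_{k=1}^N$ with an $\Omega(C_6/r^6)$ gap to the first excited level. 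The long-range corrections are uniformly bounded by $C_6/r^6\sum_{d\ge 2}1/d^{6}=O(C_6/(64r^6))$, which is strictly smaller than this gap; combined with the telescoping identity from (i) — which forces the net long-range contribution to cancel exactly on codewords — this implies that non-codewords cannot dip below $E_0$. I expect the delicate part to be bookkeeping the cancellations cleanly enough to get a strict inequality, rather than just non-strict, for every non-codeword $b$; a back-up plan is a direct combinatorial domain-wall argument that each $b\notin\{a_k\}$ either contains an uncompensated nearest-neighbor excited pair or omits one of the excitations whose detuning contributes to $E_0$, yielding a strictly positive excess energy in either case.
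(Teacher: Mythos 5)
Your step (i) is exactly the paper's own argument: the paper also diagonalizes $\hat{H}_{\mathrm{pen}}$ in the computational basis and imposes $E(a_k)=E(a_{k+1})$ for consecutive codewords, computing $\Delta_1$ and $\Delta_2$ explicitly and asserting that the remaining $\Delta_j$ follow "by the same procedure"; your parity-based induction is that same computation carried out in general, and your description of the excitation support of $a_k$ is correct. One caveat worth checking when you actually do the bookkeeping: for even $j$ the flip argument yields $\Delta_j=\frac{C_6}{r^6}\bigl(\sum_{m=1}^{j/2}(2m-1)^{-6}+\sum_{m=1}^{\lfloor (n-j)/2\rfloor}(2m)^{-6}\bigr)$ (this is what the paper's own $\Delta_2$ computation gives), which does not coincide with the even-$j$ branch of \eqn{detuning-chain} as printed; so your claim that the case analysis "reproduces the two sums" of the stated formula should be treated with care — the printed even-$j$ branch appears to have its two upper limits misassigned.

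On step (ii) you are more conscientious than the paper, which simply asserts that the antiferromagnetic codewords span the ground space ("its low-energy subspace is spanned by antiferromagnetic code \dots It turns out that \dots") without proof. However, your primary argument for (ii) has a genuine gap as stated: the long-range remainder (the $d\ge 2$ interactions together with the long-range parts of the $\Delta_j$) is \emph{not} uniformly bounded by $C_6 r^{-6}\sum_{d\ge2}d^{-6}$ in operator norm — both pieces scale extensively with $n$, so for long chains you cannot conclude by comparing a global norm bound against the $O(C_6/r^6)$ gap of the nearest-neighbor part. Also, the telescoping identity only forces the \emph{total} energies of the codewords to be equal; it does not make the long-range contribution cancel separately on codewords. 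What actually closes the argument is your back-up plan: charge the change in long-range energy, relative to a suitable codeword, to the flipped sites (each flipped site shifts the long-range energy by at most about $2(\zeta(6)-1)\,C_6/r^6\approx 0.035\,C_6/r^6$), and compare this against the nearest-neighbor defect cost of order $C_6/r^6$ per domain-wall violation, which grows with the number of defects. With that per-defect accounting the strict inequality for every non-codeword follows; without it, the "norm of the tail is smaller than the gap" comparison does not survive large $n$.
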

\begin{proof}
    We use $\ket{w_k}$ to denote the antiferromagnetic code for $k = 1,\dots,N$. Define the following two operators
    $$H'_Z = \sum_{1\le j< k\le n} \frac{C_6}{|\mathbf{r}_j - \mathbf{r}_k|^6} \hat{n}_j \hat{n}_k,\quad H''_Z = -\sum^n_{j=1} \Delta_j \hat{n}_j,$$
    where $\mathbf{r}_j = \left(0, (j-1)r\right)$, and we consider an ansatz
    $$\hat{H}_{\mathrm{pen}} = H'_Z + H''_Z.$$
    Since the Hamiltonian $H'_Z$ is in antiferromagnetic ordering, its low-energy subspace is spanned by antiferromagnetic code. However, these low-energy states are not degenerate. To get the $N$-degenerate ground-energy subspace, we solve a linear system to compute $\Delta_j$ in $\hat{H}_{\mathrm{pen}}$.

    First, we try to match the energy of $\ket{w_1}$ and $\ket{w_2}$. This requires the following identity holds:
    \begin{align}\label{eqn:w1-w2-equal-E}
        \bra{w_1}\hat{H}_{\mathrm{pen}}\ket{w_1} = \bra{w_2}\hat{H}_{\mathrm{pen}}\ket{w_2}.
    \end{align}
    Note that $w_1 = 0101\dots$ and $w_2 = 1101\dots$, and the 2-tensor $\hat{n}_j\hat{n}_k = \ket{11}_{jk}\bra{11}_{jk}$ is visible only when the atoms $j$ and $k$ are simultaneously excited, so we have that,
    $$\bra{w_2}H'_Z\ket{w_2} - \bra{w_1}H'_Z\ket{w_1} = \sum^{\lfloor n/2 \rfloor}_{m=1} \frac{C_6}{r^6 (2m-1)^6}.$$
    
    Meanwhile, we have $\bra{w_2}H''_Z\ket{w_2} - \bra{w_1}H''_Z\ket{w_1} = -\Delta_1$. To have the identity \eqn{w1-w2-equal-E}, we must have
    $$\Delta_1 = \frac{C_6}{r^6} \sum^{\lfloor n/2 \rfloor}_{m=1} \frac{1}{(2m-1)^6}.$$

    Similarly, we can compute $\Delta_2$. Comparing $w_2 = 1101\dots$ and $w_3 = 1001\dots$, we have
    $$\bra{w_2}H^{(1)}_Z\ket{w_2} - \bra{w_3}H^{(1)}_Z\ket{w_3} = \frac{C_6}{r^6} + \sum^{\lfloor n/2 \rfloor-1}_{m=1} \frac{C_6}{r^6 (2m)^6}.$$
    Also, we have $\bra{w_2}H''_Z\ket{w_2} - \bra{w_3}H''_Z\ket{w_3} = -\Delta_2$, it follows that
    $$\Delta_2 = \frac{C_6}{r^6}\left(\frac{1}{1^6} + \sum^{\lfloor n/2 \rfloor-1}_{m=1} \frac{1}{ (2m)^6}\right).$$

    We can compute all local detuning $\Delta_j$ following the same procedure. It turns out that if we choose $\Delta_j$ as in \eqn{detuning-chain}, the ground-energy subspace of $\hat{H}_{\mathrm{pen}}$ is indeed spanned by the codeword states $\ket{w_j}$ for $j = 1,\dots, N$.
\end{proof}

\subsubsection{Experiment setup and result}
We perform experiments of simulating the 2D Schr\"odinger equation using the QuEra Aquila processor. The programmable quantum Hamiltonian on QuEra Aquila reads~\cite{quera_doc},
\begin{align}\label{eqn:quera_limited}
    \frac{\hat{H}(t)}{\hbar} = \frac{\Omega(t)}{2} \left(\sum_j e^{i\phi(t)}\ket{0}\bra{1}_j + e^{-i\phi(t)}\ket{1}\bra{0}_j\right) - \Delta(t) \left(\sum_j \hat{n}_j\right) + \sum_{j < k} \frac{C_6}{|\mathbf{r}_j - \mathbf{r}_k|^6}\hat{n}_j \hat{n}_k,
\end{align}
where $\Omega(t)$, $\phi(t)$, and $\Delta(t)$ denote the Rabi frequency, laser phase, and the detuning of the driving laser field on atom (qubit) $j$ coupling the two states $\ket{0}$ (ground state) and $\ket{1}$ (Rydberg state). It is worth noting that all programmable control parameters are \emph{global}.

\paragraph{Hamiltonian embedding and the engineering of the potential field.}
We use the antiferromagnetic embedding the simulate the quantum dynamics. Ideally, we want to use \lem{penalty-chain} to engineer a perfect antiferromagnetic embedding of the discretized Laplacian operator $-\frac{1}{2}\left(D\otimes I + I\otimes D\right)$ (see \eqn{fdm_operator}). However, currently (as of the fall of 2023), the QuEra Aquila processor does not support user-specified local detuning (see \eqn{quera_limited}). The lack of local detuning poses a significant restriction on the Hamiltonian we can engineer on the QuEra Aquila processor. 

In our QuEra experiment, we follow the intuition of \lem{penalty-chain} and arrange $12$ neutral atoms as in \fig{fig_4}D. The $12$ neutral atoms are grouped into 2 vertical chains. Each chain has 6 atoms. The (sub-)system Hamiltonian for each chain reads
\begin{align}\label{eqn:quera_real}
    H_0 = \sum_{1\le j < k \le 6} \frac{C_6}{(k-j)^6r^6}\hat{n}_j\hat{n}_k - \Delta \sum^6_{j=1}\hat{n}_j,
\end{align}
where $r = \SI{5.9}{\micro\meter}$ is the distance between two adjacent atoms, and $\Delta$ is a global detuning.

It is clear that the actual Hamiltonian \eqn{quera_real} is similar to the penalty Hamiltonian specified in \eqn{chain-penalty}, except that we can not specify individual detuning $\Delta_j$ for each atom. In this case, the codeword subspace of antiferromagnetic embedding
$$\S = \{\ket{010101},\ket{010100},\ket{010110},\ket{010010},\ket{011010},\ket{001010},\ket{101010}\}$$
coincide with the low-energy subspace of $H_0$ but not exactly the ground-energy subspace. This non-degeneracy can be realized as an \emph{effective} potential field. Namely, we may regard the operator \eqn{chain-penalty} as the \emph{desired} penalty Hamiltonian (as in \eqn{chain-penalty}) plus an operator that embeds \emph{a} potential field (although this potential field is not very smooth and it does not correspond to a familiar closed-form function). Similarly, the system Hamiltonian of the two atom chains (\fig{fig_4}D) can be regarded as the penalty Hamiltonian for 2D antiferromagnetic embedding plus an effective potential operator. In \fig{fig_10}A, we plot the effective 2D potential function at $\Delta = 5\times 10^7$ rad/s. The potential is shown as a discretized function on the 2D mesh $\{(z_j,z_k):j,k = 1,\dots, N\}$.

\begin{figure}[ht!]
    \centering
    \includegraphics{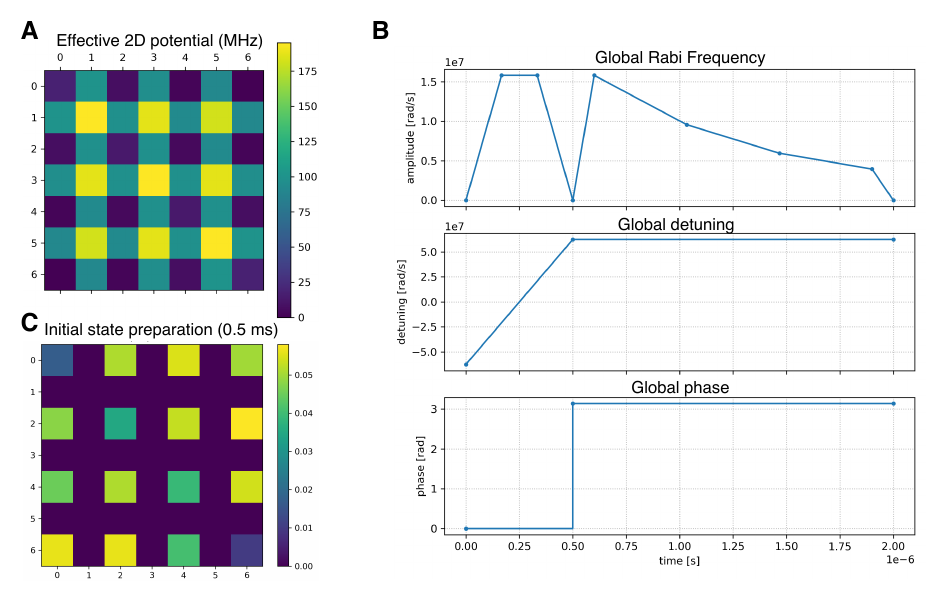}
    \caption{\small Simulating real-space dynamics on QuEra Aquila.
    \textbf{A.} The effective potential field engineered by Rydberg interactions. The unit of the potential is MHz. The potential field is normalized such that the minimal potential is 0.}
    \label{fig:fig_10}
\end{figure}

\paragraph{Initial state preparation.}
The QuEra Aquila machine is initialized to the all-zeros state. This default initial state is not in the embedding subspace of the antiferromagnetic embedding. In our experiment, we use a \emph{3-stage} state preparation subroutine~\cite{ebadi2022quantum} to prepare an initial state that has a significant overlap (around $73\%$) with the embedding subspace.
Since the initial all-zeros state is the ground state when $\Omega=\SI{0}{\mega\hertz}$ and $\Delta<\SI{0}{\mega\hertz}$, this state preparation pulse is quasi-adiabatic and prepares a state with high overlap with the antiferromagnetic encoding subspace.
In our pulse, we use a trapezoidal pulse for the Rabi frequency with maximum frequency $\Omega_{\max}=\SI{15.8}{\mega\hertz}$ and a linear ramp for the detuning from $\SI{-50}{\mega\hertz}$ to $\SI{50}{\mega\hertz}$.
The duration of initial state preparation is $0.5$ $\mu$s. The initial state preparation pulses are shown in \fig{fig_10}C (see the first $0.5$ $\mu$s). The engineered initial state (e.g., the quantum state at $t = 0.5$ $\mu$s) is shown in \fig{fig_10}B.

\paragraph{Time-rescaling and parameters for analog quantum simulation.}
On the QuEra Aquila device, all parameters are specified with a physical unit. We conducted our experiments through Amazon Braket, where the unit of the Rabi frequency $\Omega$ is radians per second (rad/s) and the evolution is in the unit of second (s)~\cite[Section 1.3]{wurtz2023aquila}. Note that $1\ \mathrm{rad/s} = \frac{1}{2\pi}\ \mathrm{Hz}$. 

We can establish a relation between these QuEra physical parameters with the abstract unitless model Hamiltonian \eqn{fdm_operator} via time-rescaling. We assume the quantum simulation is restricted to the unit square $[0,1]^2$ (with Dirichlet boundary conditions), and each dimension is discretized into $8$ pieces. Therefore, the discretized differential operator is $D$ in \eqn{discrete_D} with $h = 1/8$. Therefore, simulating a physical Hamiltonian (on QuEra Aquila) $Q = - \frac{\Omega(t)}{2}\sum_j X_j$ for time $T$ (seconds) is equivalent to simulating the Hamiltonian $Q = -\varphi(t) \frac{1}{2h^2}\sum_j X_j$ (i.e., the antiferromagnetic embedding of the discretized Laplacian $\varphi(t) D/2h^2$) for (unitless) time 
$$T_{\mathrm{eff}} = 10^6 \cdot T,$$
where $\varphi(t) = \frac{h^2\Omega(t)}{2\pi \cdot 10^6}$. 
In our experiments, we run the simulation on QuEra Aquila for $T = 0.5, 1, 1.5$ $\mu$s (plus an initial state preparation duration of $0.5$ $\mu$s). Effectively, we are simulating the real-space quantum dynamics for $T_{\mathrm{eff}} = 0.5, 1, 1.5$.

\paragraph{Post-selection.}
The global Rabi frequency corresponds to the time-dependent function (i.e., $\varphi(t)$) in the kinetic energy. We gradually decrease the global Rabi frequency (see \fig{fig_10}B for the control pulses with total evolution time $T = 2.0$ $\mu$s) so the quantum state will eventually find the low-energy configuration regarding the effective potential field. We consider a measured sample \emph{legit} if it is in the embedding subspace. In our experiment on QuEra Aquila, approximately half of the samples in the measurement are legit. 
In \fig{fig_4}E, we plot the distribution of the legit samples for evolution time $T = 0.5, 1, 1.5$ (corresponding to total physical evolution time $T = 1, 1.5, 2$ $\mu$s).

\end{document}